\documentclass{mystyle}
\title[Maximum overlap of polyhedron and polygon]{Maximum overlap area of a convex polyhedron and a convex polygon under translation}
\author[H.\,J. Kweon]{Hyuk Jun Kweon}
\author[H. Zhu]{Honglin Zhu}

\begin{document}
\begin{abstract}
    Let $P$ be a convex polyhedron and $Q$ be a convex polygon with $n$ vertices in total in three-dimensional space. We present a deterministic algorithm that finds a translation vector $v \in \mathbb{R}^3$ maximizing the overlap area $|P \cap (Q + v)|$ in $O(n \log^2 n)$ time. We then apply our algorithm to solve two related problems. We give an $O(n \log^3 n)$ time algorithm that finds the maximum overlap area of three convex polygons with $n$ vertices in total. We also give an $O(n \log^2 n)$ time algorithm that minimizes the symmetric difference of two convex polygons under scaling and translation.
\end{abstract}

\maketitle

\maketitle

\setcounter{page}{1}

\setcounter{tocdepth}{1}
\tableofcontents

\section{Introduction}\label{sec_intro}
Shape matching is an important topic in computational geometry, with useful applications in areas such as computer graphics. In a typical problem of shape matching, we are supplied two or more shapes, and we want to determine how much the shapes resemble each other. More precisely, given a similarity measure and a set of allowed transformations, we want to transform the shapes to maximize their similarity measure. 

There are many candidates for the similarity measure, such as the Hausdorff distance and the Fr\'echet distance between the boundaries of the shapes. We can also consider the area/volume of overlap or of symmetric difference. The advantage to these is that they are more robust against noise on the boundary of the shapes \cite{deberg1996}.

The maximum overlap problem of convex polytopes has been studied by many. In dimension $2$, de Berg et al. \cite{deberg1996} give an $O(n \log n)$ time algorithm for finding a translation maximizing the area of intersection of two convex polygons (where $n$ denotes the total number of vertices of the polygons). In dimension $3$, Ahn et al. \cite{ahn2008} give an $O(n^3 \log^4 n)$ expected time algorithm finding the maximum overlap of two convex polyhedra under translation. For the same problem, Ahn et al. \cite{ahn2013} present an algorithm that runs in $O(n \log^{3.5} n)$ time with probability $1 - n^{-O(1)}$ and an additive error. For $d > 3$, given two convex polytopes of dimension $d$ with $n$ facets in total, Ahn et al. \cite{ahn2013} give an algorithm that finds the maximum overlap under translation in $O(n^{\lfloor d/2 \rfloor + 1} \log^{d} n)$ time with probability $1 - n^{O(1)}$ and an additive error.

In the plane, when all rigid motions are allowed, Ahn et al. \cite{ahn2007} give an approximate algorithm that finds a rigid motion realizing at least $1-\epsilon$ times the maximal overlap in $O((1/\epsilon)\log n + (1/\epsilon^2) \log (1/\epsilon))$ time. In dimension $3$, Ahn et al. \cite{ahn2014} present an approximate algorithm that finds a rigid motion realizing at least $1-\epsilon$ times the maximal overlap in $O(\epsilon^{-3} n \log^{3.5} n)$ with probability $1 - n^{-O(1)}$.

When considering the maximum overlap as a similarity measure, we obviously can only allow area/volume-preserving transformations. However, we may want to allow scaling as a transformation---two similar triangles are supposed to be very ``similar,'' though they may have different sizes. In this case, the area of symmetric difference is a better measure of similarity. Yon et al. \cite{yon2016} give an algorithm minimizing the symmetric difference of two convex polygons under translation and scaling in $O(n \log^3 n)$ expected time.

\subsection*{Our results}
While many have studied the matching problem for two convex polytopes of the same dimension, to our knowledge no one has examined the problem for polytopes of different dimensions or matching more than two polytopes. 

The main result in this paper is a deterministic algorithm for the problem of matching a convex polyhedron and a convex polygon under translation in three-dimensional space. 

\begin{restatable}{theorem}{algo}\label{thm_algo}
    Let $P$ be a convex polyhedron and $Q$ a convex polygon with $n$ vertices in total. We can find a vector $v \in \mathbb{R}^3$ that maximizes the overlap area $|P \cap (Q+v)|$ in $O(n \log^2 n)$ time.
\end{restatable}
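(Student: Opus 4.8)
The plan is to reduce the three–dimensional matching problem to a one–parameter family of planar maximum–overlap instances, each handled by the algorithm of de Berg et al., and then to search efficiently over the parameter. After an orthogonal change of coordinates I may assume that $Q$ lies in the plane $\{z=0\}$, and I write a translation vector as $v=(u,h)$ with $u\in\mathbb{R}^2$ the horizontal part and $h\in\mathbb{R}$ the height. Then $Q+v$ lies in the plane $\{z=h\}$, so the overlap is the planar intersection of $Q+u$ with the cross–section $P_h:=P\cap\{z=h\}$; that is,
\[
 G(v)\;=\;\bigl|P\cap(Q+v)\bigr|\;=\;\bigl|P_h\cap(Q+u)\bigr|,
\]
which for each fixed $h$ is exactly a planar maximum–overlap problem between the convex polygons $P_h$ and $Q$.

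The structural fact I would establish first is that $\sqrt{G}$ is concave on $\mathbb{R}^3$. Consider the set $C=\{(v,y)\in\mathbb{R}^3\times\mathbb{R}^2:(y,h)\in P,\ y-u\in Q\}$; both defining conditions are preimages of convex sets under linear maps, so $C$ is convex in $\mathbb{R}^5$, and the fibre of $C$ over a point $v$ is precisely the overlap region, of area $G(v)$. By the Brunn–Minkowski inequality applied to fibres of a convex body (the $1/k$–th power of the $k$–dimensional fibre volume is concave on the projection), with $k=2$, the function $G^{1/2}$ is concave. In particular $f(h):=\max_u G(u,h)$ satisfies $\sqrt{f}=\max_u\sqrt{G(\cdot,h)}$, a partial maximum of a concave function, so $\sqrt{f}$ is concave and $f$ is unimodal in $h$.

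Given this, the algorithm has two levels. The inner level, for a fixed $h$, builds the cross–section $P_h$ and runs the planar algorithm of de Berg et al. on $P_h$ and $Q$ in $O(n\log n)$ time, returning both the optimal horizontal translation $u^*(h)$ and the value $f(h)$; using the envelope identity $f'(h)=\partial_h G(u^*(h),h)$, together with the fact that each edge of $P_h$ lies on a fixed face of $P$ and hence moves at a known in–plane speed as $h$ varies, I can read off the sign of $f'(h)$ from the same computation. The outer level exploits unimodality: the combinatorial type of $P_h$ changes only at the $O(n)$ heights of vertices of $P$, which partition the $z$–axis into $O(n)$ slabs; sorting them and binary searching over the slab boundaries, guided by the sign of $f'$, localizes the maximizer $h^*$ to a single slab after $O(\log n)$ inner solves, i.e.\ in $O(n\log^2 n)$ time.

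The main obstacle is the exact optimization inside the chosen slab, where the budget of $O(n\log^2 n)$ must still be respected. Within a slab the faces bounding $P_h$ are fixed and its vertices are linear in $h$, so $G(u,h)$ is piecewise quadratic; however, the combinatorial type of the optimal overlap $P_{h}\cap(Q+u^*(h))$ can still change at $O(n)$ values of $h$, between which $f$ agrees with an explicit low–degree function whose critical point solves $f'(h)=0$ in closed form. Pinning down the correct subinterval without paying an extra logarithmic factor is the delicate step; I expect to resolve it by a parametric search in the parameter $h$ (Megiddo's technique), simulating a low–depth version of the planar subroutine with the vertices of $P_h$ treated as linear functions of the unknown $h^*$ and resolving each generated comparison by one sign–of–$f'$ evaluation. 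Verifying that the planar subroutine can be made sufficiently parallel, and that the comparisons it generates have bounded degree in $h$, is precisely what I expect to make the final bound come out to $O(n\log^2 n)$.
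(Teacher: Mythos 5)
Your reduction and the first half of your argument coincide with the paper's: the concavity of $\sqrt{G}$ via Brunn--Minkowski is the paper's \Cref{prop_concavity}, and your outer binary search over the $O(n)$ vertex heights of $P$, using de Berg et al.\ on cross-sections as the inner solver, is exactly the paper's Stage 1 (\Cref{lemma_stage_1}). The gap is in the step you yourself flag as delicate, and it is genuine rather than cosmetic. First, your claim that inside the chosen slab the combinatorial type of the optimal overlap changes at only $O(n)$ values of $h$ is unsupported: the optimal placement $h \mapsto (u^*(h),h)$ traces a curve, not a line, in the configuration space, so neither the unimodality-along-lines statement (\Cref{cor_unimodality}) nor the ``each vertex crosses a convex boundary at most twice'' argument applies along it; a priori this curve can cross $\Theta(n^2)$ of the event polygons, and nothing you say rules that out. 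Second, the parametric search you invoke presupposes a polylogarithmic-depth parallel version of the planar maximum-overlap algorithm; de Berg et al.'s algorithm is built from nested prune-and-search and binary searches and is inherently sequential, so this is a substantial unproven assumption, not a verification to be deferred. Even granting it, the standard accounting misses your budget: with decision cost $T_s = O(n\log n)$ (one inner solve plus a sign-of-$f'$ evaluation), simulation depth $D = O(\log n)$, and $O(\log n)$ decision calls per parallel round, parametric search costs $\Theta(n\log^3 n)$, so you would need Cole's refinement or special structure in the comparisons to recover $O(n\log^2 n)$ --- again something to prove, not assume.

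It is worth seeing how the paper sidesteps exactly these obstacles: it never collapses the problem to a one-parameter search. Instead it stays in the three-dimensional configuration space, builds an $O(n\log n)$ \textbf{PlaneDecision} oracle (\Cref{prop_max_on_plane}, assembled from an $O(n)$ \textbf{LineDecision} via the maximum-cross-section algorithm of Avis et al.\ together with sorted-matrix selection), then applies a new generalized prune-and-search (\Cref{thm_prune_and_search}) to the $n$ groups of $O(n)$ parallel segments obtained by projecting the event-polygon edges to the $xz$-plane, and finishes with recursive $(1/2)$-cuttings (\Cref{lemma_stage_3}). That machinery is precisely what replaces your parametric search and what makes the $O(n\log^2 n)$ bound provable; as written, your proposal does not establish the theorem.
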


We also present two applications of our algorithm to other problems in computational geometry. First, we give a deterministic algorithm for maximizing the overlap of three convex polygons under translations.

\begin{restatable}{theorem}{threepolys}\label{thm_three_polygons}
    Let $P$, $Q$, $R$ be three convex polygons with $n$ vertices in total in the plane. We can find a pair of translations $(v_Q, v_R) \in \mathbb{R}^4$ that maximizes the overlap area $|P \cap (Q + v_Q) \cap (R + v_R)|$ in $O(n\log^3 n)$ time. 
\end{restatable}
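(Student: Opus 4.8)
The plan is to reduce the four-parameter problem to a one-parameter family of instances of Theorem~\ref{thm_algo}, each living in $\mathbb{R}^3$. Write $v_R = (s,t)$ and $v_Q = (v_1,v_2)$, and single out the scalar $t$ as an outer parameter to be optimized last. For a fixed value of $t$, I would form the three-dimensional set
\[
\tilde P_t = \{\, (x,y,z) \in \mathbb{R}^3 : (x,y) \in P \text{ and } (x-z,\, y-t) \in R \,\}.
\]
This is the intersection of the convex cylinder $P \times \mathbb{R}$ with the preimage of the convex polygon $R$ under the affine map $(x,y,z) \mapsto (x-z,\,y-t)$, so $\tilde P_t$ is a convex polyhedron with $O(n)$ facets; it is bounded because $(x,y)\in P$ keeps $x$ bounded while $(x-z,y-t)\in R$ keeps $x-z$ bounded, forcing $z$ to stay bounded. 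Its horizontal slice at height $z=s$ is exactly $\{(x,y) : (x,y)\in P,\ (x-s,\,y-t)\in R\} = P \cap (R + (s,t))$.

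Next I would feed $(\tilde P_t, Q)$ into the algorithm of Theorem~\ref{thm_algo}. A translate $Q+v$ with $v=(v_1,v_2,v_3)$ lies entirely in the horizontal plane $z=v_3$, so its overlap with $\tilde P_t$ equals the planar overlap of the slice $(\tilde P_t)_{z=v_3} = P \cap (R+(v_3,t))$ with the translate $Q+(v_1,v_2)$. Hence Theorem~\ref{thm_algo} applied to $(\tilde P_t,Q)$ computes, in $O(n\log^2 n)$ time, the value
\[
h(t) := \max_{v_1,v_2,v_3} \bigl| P \cap (R+(v_3,t)) \cap (Q+(v_1,v_2)) \bigr|,
\]
which is precisely the maximum of $|P \cap (Q+v_Q) \cap (R+v_R)|$ over all $v_Q$ and all $v_R$ whose second coordinate equals $t$, together with a configuration attaining it. The global optimum we seek is $\max_t h(t)$.

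It remains to optimize the single variable $t$. Here I would use log-concavity: writing the overlap as the integral $\int \mathbf{1}_P(x)\,\mathbf{1}_Q(x-v_Q)\,\mathbf{1}_R(x-v_R)\,dx$ of a function that is jointly log-concave in $(x,v_Q,v_R)$, Pr\'ekopa's theorem shows that the overlap $F(v_Q,v_R)$ is log-concave on $\mathbb{R}^4$. Partial maximization preserves log-concavity, since the superlevel sets of $h$ are projections of the convex superlevel sets of $F$; thus $h$ is a log-concave, hence unimodal, function of $t$. I would therefore locate $\max_t h(t)$ by a binary search on $t$ in which each probe runs Theorem~\ref{thm_algo} on $\tilde P_t$, recovers the optimal configuration, and reads off the sign of $h'(t)$ (the rate at which the optimal overlap changes as the slanted prism over $R$ is shifted) to decide which side of the probe to keep. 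With $O(\log n)$ probes, each costing $O(n\log^2 n)$, the total running time is $O(n\log^3 n)$.

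The main obstacle I anticipate is the \emph{exactness} of this outer search: a search on a merely unimodal function converges only approximately, so I must extract directional-derivative information from the output of Theorem~\ref{thm_algo} and argue that the breakpoints of $h$ — the values of $t$ at which the combinatorial type of the optimal overlap polygon changes — are algebraic of bounded degree, so that the derivative-guided search isolates the optimal $t^\ast$ exactly within a single cell, on which $h$ has a closed form that can be maximized directly. Alternatively, one can replace the binary search by a Megiddo-style parametric search driven by a parallel version of the algorithm of Theorem~\ref{thm_algo}, which yields the same logarithmic overhead while guaranteeing exactness. Some additional care is needed to construct each $\tilde P_t$ in $O(n\log n)$ time and to handle degenerate slices where $P \cap (R+v_R)$ is empty or lower-dimensional.
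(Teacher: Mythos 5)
Your inner reduction is sound: $\tilde P_t$ is a bounded convex polyhedron with $O(n)$ facets, its slice at height $z=s$ is exactly $P \cap (R+(s,t))$, and feeding $(\tilde P_t, Q)$ to \Cref{thm_algo} computes $h(t)$ in $O(n\log^2 n)$ time. In fact this is essentially the paper's own subroutine: \Cref{prop_hyperplane_decision} builds the same kind of slanted prism ($I_{PQ}$) to run \textbf{HyperplaneDecision} on hyperplanes of the three special orientations, and the symbolic-perturbation trick of \Cref{lemma_side_decision} is what lets one read off which side of the probe the optimum lies on. The log-concavity/unimodality of $h$ is also fine.

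The genuine gap is the outer search over $t$. Bisection on a continuous parameter against a merely unimodal function never terminates with the \emph{exact} $t^\ast$: after finitely many probes you hold an interval, and to finish by calculus you must certify that $h$ is a single algebraic piece on that interval, i.e.\ that it contains no breakpoint. The breakpoints of $h$ are the $t$-values where the (unknown) maximizing configuration crosses an event hyperplane; there can be polynomially many of them (the type~II events alone contribute $O(n^3)$ hyperplanes), you cannot enumerate or select among them, and knowing they are algebraic of bounded degree gives no search procedure -- so the claimed $O(\log n)$ probe bound has nothing to bisect against. Your fallback, parametric search, would require a polylog-depth parallel version of the \Cref{thm_algo} algorithm; simulating the sequential $O(n\log^2 n)$ algorithm as its own decision oracle costs roughly $O(n^2\log^4 n)$, and no parallel version exists in the paper or in your proposal (the algorithm is an inherently sequential multi-stage prune-and-search, so this is a substantial project, not a remark). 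The paper avoids a one-dimensional search altogether: it exploits the fact that the type~I hyperplanes form $O(n)$ families of $O(n)$ parallel hyperplanes in each of the three pairwise translation $2$-flats $S_{PQ}$, $S_{PR}$, $S_{QR}$, applies the generalized prune-and-search of \Cref{thm_prune_and_search} in each flat with \textbf{HyperplaneDecision} as the oracle to isolate a region $T_{PQR}$ meeting no type~I hyperplane (\Cref{prop_Tpqr}), shows that only $O(n)$ type~II hyperplanes cross $T_{PQR}$, and finishes with recursive $(1/2)$-cuttings. That pruning is precisely the combinatorial certificate -- a single cell on which the objective has a closed form -- that your bisection cannot produce.
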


We also give a deterministic $O(n \log^2 n)$ time algorithm for minimizing the symmetric difference of two convex polygons under a homothety (a translation and a scaling), which is an improvement to Yon et al.'s randomized algorithm \cite{yon2016}.

\begin{restatable}{theorem}{symmdiff}\label{thm_symmetric_difference}
    Let $P$ and $Q$ be convex polygons with $n$ vertices in total. Then we can find a homothety $\varphi$ that minimizes the area of symmetric difference $|P \setminus \varphi(Q)| + |\varphi(Q) \setminus P|$ in $O(n \log^2 n)$ time. 
\end{restatable}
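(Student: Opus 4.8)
The plan is to reduce the problem to the polyhedron--polygon overlap of Theorem~\ref{thm_algo}. Writing a homothety as $\varphi(x) = sx + t$ with scale $s>0$ and translation $t\in\mathbb{R}^2$, so that $\varphi(Q) = sQ + t$, inclusion--exclusion gives
\[
  |P\setminus\varphi(Q)| + |\varphi(Q)\setminus P| = |P| + s^2|Q| - 2\,|P\cap(sQ+t)| .
\]
Thus minimizing the symmetric difference is the same as maximizing the \emph{penalized overlap} $2\,\mathrm{ov}(s,t) - s^2|Q|$, where $\mathrm{ov}(s,t) := |P\cap(sQ+t)|$ and the additive constant $|P|$ is dropped.

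To turn the whole family of scaled translates $sQ+t$ into a single rigid object, I would lift to $\mathbb{R}^3$ and form the cone $\hat Q := \{(x,y,z): z\ge 0,\ (x,y)\in z\,Q\}$, whose horizontal slice at height $z=s$ is exactly $sQ$. Placing $P$ in a horizontal plane and translating it by $w = (t,s)\in\mathbb{R}^3$, the planar region $\hat Q\cap(P+w)$ has area precisely $\mathrm{ov}(s,t)$, so that
\[
  \max_{s,t}\mathrm{ov}(s,t) = \max_{w\in\mathbb{R}^3}\bigl|\hat Q\cap(P+w)\bigr|,
\]
which is exactly the quantity Theorem~\ref{thm_algo} computes. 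Since $\hat Q$ is unbounded, I would first confine the scale to an interval $[s_{\min},s_{\max}]$ of constant multiplicative width around the area-matching scale $(|P|/|Q|)^{1/2}$---justified because the symmetric difference is always at least $\bigl|\,s^2|Q|-|P|\,\bigr|$, which already rules out extreme scales---and replace $\hat Q$ by the frustum between these two heights, a convex polyhedron with $O(n)$ facets.

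The structural fact that makes the optimization tractable is a Brunn--Minkowski estimate. For midpoints $\bar s = \frac12(s_1+s_2)$, $\bar t = \frac12(t_1+t_2)$ one has
\[
  \tfrac12\bigl((P+t_1)\cap s_1Q\bigr) + \tfrac12\bigl((P+t_2)\cap s_2Q\bigr)\ \subseteq\ (P+\bar t)\cap \bar s\,Q ,
\]
using $\frac12 P+\frac12 P = P$ and $\frac12 s_1Q+\frac12 s_2Q=\bar s\,Q$ for convex $P,Q$. Hence $(s,t)\mapsto \mathrm{ov}(s,t)^{1/2}$ is concave on its support; in particular $\mathrm{ov}$ is quasiconcave, $M(s):=\max_t\mathrm{ov}(s,t)$ has $M^{1/2}$ concave, and the reduced objective $F(s) := |P| + s^2|Q| - 2M(s)$ is unimodal in $s$ (decreasing while the $s$-slice of $\hat Q$ still fits inside $P$, increasing once $P$ fits inside the slice).

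The main obstacle is that the penalty $s^2|Q|$ depends on the height coordinate of $w$, so Theorem~\ref{thm_algo} cannot be invoked as a black box: it returns the maximizer of the bare overlap, not of $2\,\mathrm{ov}(s,t) - s^2|Q|$. I would resolve this by opening up the algorithm of Theorem~\ref{thm_algo}, which represents the overlap as an explicit piecewise low-degree polynomial over a near-linear-size subdivision of translation space; on each cell the penalized objective is again a low-degree polynomial, maximized in $O(1)$, so the global maximum is found within the same $O(n\log^2 n)$ budget. Alternatively, one can keep Theorem~\ref{thm_algo} only as a subroutine for the fixed-scale subproblems and drive an exact search over $s$ using the unimodality of $F$ together with its piecewise-rational breakpoints. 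Verifying that the penalized objective inherits enough of this cellwise polynomial structure---and that the scale search stays within the stated time bound---is the crux of the argument.
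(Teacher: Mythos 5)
Your reduction coincides with the paper's: your cone $\hat Q$ is exactly the paper's cone $C$, and maximizing the penalized overlap $\mathrm{ov}(s,t)-c\,s^2$ (with $c=|Q|/2$ after dropping constants) is exactly the paper's formulation. You also correctly identify the crux: the height-dependent penalty prevents invoking \Cref{thm_algo} as a black box. But neither of your proposed resolutions closes this gap. Option (a) rests on a mischaracterization of the algorithm behind \Cref{thm_algo}: that algorithm never produces ``an explicit piecewise low-degree polynomial over a near-linear-size subdivision of translation space.'' The arrangement of event polygons has $\Theta(n^2)$ cells, and the whole point of the algorithm is to avoid constructing it; instead it is a prune-and-search that, at every step, discards a half-space based on \textbf{PlaneDecision} comparisons of the \emph{bare} overlap. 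Those decisions are objective-dependent: the maximizer of the penalized objective can lie in the half-space discarded at the very first cut, so evaluating the penalty on whatever cells the unmodified algorithm visits does not find the optimum. What is actually needed---and what the paper does---is to re-run the entire prune-and-search with decision subroutines driven by the penalized objective itself, and the validity of such decisions requires the penalized objective to be unimodal.

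That unimodality is the missing structural lemma, and your Brunn--Minkowski argument does not supply it: concavity of $\mathrm{ov}^{1/2}$ makes $\mathrm{ov}$ quasiconcave, but a quasiconcave function minus a convex function ($c\,s^2$) need not be unimodal, and your parenthetical justification for unimodality of $F(s)$ is not a proof. The paper invokes Yon et al.~\cite{yon2016} (\Cref{lemma_region_D}): $f=\mathrm{ov}-c\lambda^2$ is strictly unimodal on $\mathcal{D}=\overline{\{f>0\}}$, which one can see by writing $f=\bigl(\sqrt{\mathrm{ov}}-\sqrt{c}\,\lambda\bigr)\bigl(\sqrt{\mathrm{ov}}+\sqrt{c}\,\lambda\bigr)$, a product of two nonnegative concave factors on $\mathcal{D}$, hence log-concave there. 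Even granting this, unimodality holds \emph{only} on $\mathcal{D}$, so decision queries on lines or planes that miss $\mathcal{D}$ are meaningless; the paper handles this with the anchor $-P\subset\mathcal{D}$ (the modified \textbf{LineDecision} and \textbf{PlaneDecision} report the side containing $-P$ whenever the computed maximum is negative, see \Cref{lemma_modified_line_max} and \Cref{prop_modified_plane}) and then restructures all three stages of the algorithm---trapezoidal regions, extra horizontal cuts, cutting planes through edges---precisely to guarantee that every queried plane is horizontal or meets $-P$ (\Cref{thm_general_symm_diff}). Your alternative (b) is likewise not an algorithm as stated: an exact search over a unimodal $F(s)$ requires identified breakpoints, and you neither bound their number nor explain how to enumerate or search them within the $O(n\log^2 n)$ budget.
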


The main ingredient in the proof of \Cref{thm_algo} is a new technique we introduce which generalizes Megiddo's prune-and-search \cite{megiddo1984}. This  allows us to efficiently prune among $n$ groups of $m$ parallel lines. 

\begin{restatable}{theorem}{pruneandsearch}\label{thm_prune_and_search}
    Let $S = \bigcup_{i=1}^{n} S_i$ be a union of $n$ sets of $O(m)$ parallel lines in the plane, none of which are parallel to the $x$-axis, and suppose the lines in each $S_i$ are indexed from left to right.
    
    Suppose there is an unknown point $p^{*} \in \mathbb{R}^2$ and we are given an oracle that decides in time $T$ the relative position of $p^{*}$ to any line in the plane. Then we can find the relative position of $p^{*}$ to every line in $S$ in $O(n \log^2 m + (T + n) \log(mn))$ time. 
\end{restatable}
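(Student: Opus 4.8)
The plan is to reduce the whole problem to a collection of $n$ one-dimensional binary searches, one per group, and then to run them simultaneously while sharing oracle calls. First I would observe that because the lines of $S_i$ are parallel and indexed from left to right, the position of $p^{*}$ with respect to all of $S_i$ is encoded by a single threshold $k_i$: there is an index $k_i$ with $p^{*}$ lying to the right of $\ell_{i,1},\dots,\ell_{i,k_i}$ and to the left of the rest. Thus it suffices to output the $n$ thresholds $k_1,\dots,k_n$, which already determine the side of all $O(mn)$ lines; this is what lets us beat the trivial $\Omega(mn)$ cost. Each $k_i$ can be found by a binary search of depth $O(\log m)$, so the entire task amounts to resolving, for each group, the $O(\log m)$ probe lines (interval medians) visited along its search, i.e.\ $O(n\log m)$ probe resolutions in total.

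Next I would batch these probe resolutions across groups in the style of Megiddo's prune-and-search \cite{megiddo1984}. At any moment each still-active group contributes one probe line (the median of its current candidate interval $I_i$), giving at most $n$ lines of arbitrary slopes. I would pair them up, compute the $\Theta(n)$ intersection points, and use a single oracle call on a vertical line through a (weighted) median of these intersections to learn, for a constant fraction of the pairs, the comparison of $x^{*}$ against the intersection's abscissa; a second, slope-directed median query then pins down the side of $p^{*}$ for the associated probe lines, as in the planar case. Groups consisting of mutually parallel probe lines (equal direction $d_i$) cannot be paired this way, so I would first collapse each maximal family of equal-direction groups by a one-dimensional weighted-median step before pairing across directions. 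The non-horizontality hypothesis guarantees that the vertical cuts and the left/right relation are everywhere well defined.

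The efficiency heart, and the reason for the single logarithmic factor $\log(mn)$ rather than the naive $\log m\cdot\log n$, is to make each oracle call prune a constant fraction of the total surviving work rather than merely a constant fraction of the active groups. To this end I would weight each group by $|I_i|-1$ and track the potential $\Phi=\sum_i(|I_i|-1)$, which starts at $O(mn)$ and must reach $0$. By choosing the median cut weighted by the $|I_i|$, I would argue that $O(1)$ oracle calls shrink $\Phi$ by a constant factor, giving $O(\log(mn))$ oracle calls in all; maintaining a persistent confinement region $W\ni p^{*}$ (the intersection of the half-planes returned so far) lets probe lines that miss $\mathrm{int}(W)$ be resolved for free, and so lets the oracle work be shared across the $O(\log m)$ rounds of the binary searches. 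Charging $O(n)$ to each oracle call for the (weighted) median computation gives the $(T+n)\log(mn)$ term, while maintaining the $n$ intervals, medians, and $W$ over the $O(\log m)$ rounds accounts for the $O(n\log^2 m)$ term.

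I expect the main obstacle to be the pruning primitive itself together with its amortized analysis. Unlike Megiddo's linear-programming setting, here there is no objective gradient to tell us which of two paired lines becomes irrelevant, so the two-median (abscissa-then-slope) resolve for pure point location must be set up carefully and shown to prune a constant fraction of the weight, not just of the count; the equal-direction degeneracies and on-the-line answers must be folded in without losing this fraction. The delicate part is the global accounting: proving simultaneously that the central cuts make binary-search progress in enough groups to terminate within $O(\log m)$ rounds and that the confinement region $W$ absorbs enough comparisons to hold the total oracle count to $O(\log(mn))$, so that the two logarithmic factors never multiply.
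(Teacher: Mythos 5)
Your framework is the one the paper actually uses: the same potential (your $\Phi=\sum_i(|I_i|-1)$ is exactly the paper's count $|S^R|$ of lines still crossing the feasible region), the same two axis-parallel weighted-median oracle calls per round, the same ``free'' resolution of lines that miss the confinement region via binary search, and the same cost split into $O((T+n)\log(mn))$ oracle work plus $O(n\log^2 m)$ bookkeeping. The problem is that the heart of the theorem --- a lemma guaranteeing that $O(1)$ oracle calls always kill a constant fraction of $\Phi$ --- is precisely what your proposal leaves open, and the primitive you sketch fails as stated. With one probe line per group, paired arbitrarily across groups, knowing the quadrant of $p^{*}$ relative to a pair's intersection point determines the side of only \emph{one} of the two probe lines (which one depends on the quadrant), and nothing prevents the resolved one from always being the lighter one: if a group of weight $\Theta(mn)$ is paired with a group of weight $1$, a round may retire only $O(n)$ weight, so ``a constant fraction of the pairs'' does not yield ``a constant fraction of $\Phi$.'' You flag this yourself as the main obstacle, but flagging it is not closing it; in addition, your fallback for equal-direction families (``collapse each maximal family by a one-dimensional weighted-median step'') cannot be afforded literally, since there may be $n$ distinct directions and you cannot spend an oracle call per direction class.

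The paper closes exactly this hole with two devices, and this is where its proof genuinely differs from your sketch. First, it takes a weighted median of the \emph{slopes}: if a single parallel family carries at least a $1/9$ fraction of the surviving lines, then two oracle calls (that family's median line and a horizontal line) already eliminate a $1/18$ fraction; otherwise both the steeper-than-median and shallower-than-median populations contain a constant fraction of all surviving lines. Second, it pairs at the granularity of \emph{lines}, not groups: it partitions the two populations into blocks $P_i$ containing equally many lines of each sign of slope, and intersects the two block medians, weighting the intersection by $|P_i|$. Because every block is balanced by construction, knowing the quadrant of $p^{*}$ at that intersection resolves at least a quarter of the block's lines \emph{no matter which} of the two medians gets resolved, so the $x$-median and $y$-median oracle calls provably retire a $1/18$ fraction of all surviving lines per round (\Cref{lemma_balanced_quadrants}). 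With that lemma, your worry about the two logarithmic factors multiplying disappears: the potential argument alone gives $O(\log(mn))$ rounds, and the $O(n\log^2 m)$ term comes from the telescoping sum of the per-round binary-search updates as the surviving set shrinks geometrically, not from running $O(\log m)$ rounds. If you want to keep your group-level formulation, you would need an analogous balancing step --- e.g., pair groups of comparable weight, handle a dominant group separately, and keep the parallel-family fallback --- and without some such device the constant-factor decay of $\Phi$, hence the whole time bound, is unjustified.
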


\subsection*{Organization of the Paper}
In \Cref{sec_prelim}, we introduce the problem of matching a convex polyhedron and a convex polygon under translation in three-dimensional space. In \Cref{sec_technique}, we present a core technique we use in our algorithm, which is a generalization of Megiddo's prune-and-search technique \cite{megiddo1984}. In \Cref{sec_three_and_two}, we present the algorithm for \Cref{thm_algo}. In \Cref{sec_three_polygons}, we apply our algorithm to solve the problem of maximizing the intersection of three polygons under translation. In \Cref{sec_symmetric_difference}, we give the algorithm for minimizing the symmetric difference of two convex polygons under homothety. 

\subsection*{Acknowledgements}
This paper is the result of the MIT SPUR 2022, a summer undergraduate research program organized by the MIT math department. The authors would like to thank the faculty advisors David Jerison and Ankur Moitra for their support and the math department for providing this research opportunity. We thank the anonymous referees of SoCG 2023 for providing helpful comments that increased the quality of this paper.

\section{Preliminaries}\label{sec_prelim}
Let $P \subset \mathbb {R}^3$ be a convex polyhedron and $Q \subset \mathbb{R}^2$ be a convex polygon with $n$ vertices in total. Throughout the paper, we assume that $Q$ is in the $xy$-plane, and that the point in $P$ with minimal $z$ coordinate is on the $xy$-plane. We want to find a translation vector $v = (x, y, z) \in \mathbb{R}^3$ that maximizes the overlap area $f(v) = |P \cap (Q + v)|$. 

It is easy to observe that $f(v)$ is continuous and piecewise quadratic on the interior of its support. As noted in \cite{deberg1996, ahn2008, ahn2013}, $f$ is smooth on a region $R$ if $P \cap (Q+v)$ is combinatorially equivalent for all $v \in R$, that is, if we have the same set of face-edge incidences between $P$ and $Q$. Following the convention of \cite{ahn2008}, we call the polygons that form the boundaries of these regions the \textit{event polygons}, and as in \cite{deberg1996}, we call the space of translations of $Q$ the \textit{configuration space}. The arrangement of the event polygons partition the configuration space into cells with disjoint interiors. The overlap function $f(v)$ is quadratic on each cell. Thus, to locate a translation maximizing $f$, we need to characterize the event polygons. 

For two sets $A, B \subset \mathbb{R}^d$, we write the \textit{Minkowski sum} of $A$ and $B$ as
\[
    A + B := \{a+b|a \in A, b\in B\}.
\]
We will make no distinction between the translation $A + v$ and the Minkowski sum $A + \{v\}$ for a vector $v$. We also write $A-B$ for the Minkowski sum of $A$ with $-B = \{-b | b \in B\}$. We categorize the event polygons into three types and describe them in terms of Minkowski sums:
\begin{enumerate}[label = (\Roman*)]
    \item \label{type1} When $Q + v$ contains a vertex of $P$. For each vertex $u$ of $P$, we have an event polygon $u - Q$. There are $O(n)$ event polygons of this type.
    \item \label{type2} When a vertex of $Q + v$ is contained in a face of $P$. For each face $F$ of $P$ and each vertex $v$ of $Q$, we have an event polygon $F - v$. There are $O(n^2)$ event polygons of this type.
    \item \label{type3} When an edge of $Q + v$ intersects an edge of $P$. For each edge $e$ of $P$ and each edge $e'$ of $Q$, we have an event polygon $e - e'$. There are $O(n^2)$ event polygons of this type.
\end{enumerate}

The reason that convexity is fundamental is due to the following standard fact, as noted and proved in \cite{deberg1996, yon2016}.
\begin{proposition}\label{prop_concavity}
    Let $P$ be a $d'$-dimensional convex polytope and let $Q$ be a $d$-dimensional convex polytope. Suppose $d' \geq d$. Let $f(v) = \operatorname{Vol}(P \cap (Q + v))$ be the volume of the overlap function. Then, $f(v)^{1/d}$ is concave on its support $\operatorname{supp}(f) = \{v|f(v) > 0\}$. 
\end{proposition}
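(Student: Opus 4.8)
The plan is to reduce the statement to the Brunn--Minkowski inequality. Recall that Brunn--Minkowski asserts that for convex bodies $B_0, B_1 \subseteq \mathbb{R}^d$ and $\lambda \in [0,1]$,
\[
    \operatorname{Vol}\bigl((1-\lambda)B_0 + \lambda B_1\bigr)^{1/d} \ge (1-\lambda)\operatorname{Vol}(B_0)^{1/d} + \lambda \operatorname{Vol}(B_1)^{1/d}.
\]
So it suffices to exhibit, for any two translations $v_0, v_1 \in \operatorname{supp}(f)$ and $v_\lambda := (1-\lambda)v_0 + \lambda v_1$, an inclusion that lets me compare the overlap at $v_\lambda$ with a Minkowski combination of the overlaps at $v_0$ and $v_1$.

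First I would set $A_i := P \cap (Q + v_i)$ for $i = 0, 1$, each a convex (hence measurable) set, and prove the key inclusion
\[
    (1-\lambda) A_0 + \lambda A_1 \subseteq P \cap (Q + v_\lambda).
\]
This follows directly from convexity: given $a_0 \in A_0$ and $a_1 \in A_1$, the point $a_\lambda := (1-\lambda)a_0 + \lambda a_1$ lies in $P$ because $a_0, a_1 \in P$ and $P$ is convex; and $a_\lambda - v_\lambda = (1-\lambda)(a_0 - v_0) + \lambda(a_1 - v_1)$ lies in $Q$ because each $a_i - v_i \in Q$ and $Q$ is convex, so $a_\lambda \in Q + v_\lambda$. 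Hence $\operatorname{Vol}\bigl(P \cap (Q+v_\lambda)\bigr) \ge \operatorname{Vol}\bigl((1-\lambda)A_0 + \lambda A_1\bigr)$.

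The one subtlety to handle with care is dimensional: since $Q$ lives in a fixed $d$-dimensional affine subspace, $Q + v_i$---and hence $A_i$---lies in the parallel subspace $W + v_i$, where $W$ is the linear subspace parallel to $Q$. The sets $A_0$ and $A_1$ therefore sit in \emph{distinct} parallel copies of $\mathbb{R}^d$, so I cannot feed them to Brunn--Minkowski as they stand. I would remedy this by translating each into $W$: put $B_i := A_i - v_i \subseteq W \cong \mathbb{R}^d$, so that $(1-\lambda)A_0 + \lambda A_1 = (1-\lambda)B_0 + \lambda B_1 + v_\lambda$ is a translate of a Minkowski combination living in $W$, with $\operatorname{Vol}(B_i) = \operatorname{Vol}(A_i) = f(v_i)$. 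Applying Brunn--Minkowski inside $W$ and combining with the inclusion above then yields
\[
    f(v_\lambda)^{1/d} \ge (1-\lambda) f(v_0)^{1/d} + \lambda f(v_1)^{1/d},
\]
which is the desired concavity; as a byproduct it also shows $\operatorname{supp}(f)$ is convex, since the right-hand side is positive whenever $v_0, v_1 \in \operatorname{supp}(f)$.

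The main ingredient---and the only genuinely nontrivial input---is the Brunn--Minkowski inequality itself; everything else is a matter of checking convexity of the intersections and keeping track of the affine subspaces in which the various overlaps live. I do not expect any real obstacle beyond invoking Brunn--Minkowski in the correct (dimension-normalized) form and being careful with the parallel-subspace bookkeeping when identifying $A_0$ and $A_1$ with subsets of a common $\mathbb{R}^d$.
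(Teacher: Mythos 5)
Your proof is correct and is essentially the argument behind the paper's version of this fact: the paper itself gives no proof but defers to the cited references (de Berg et al.\ and Yon et al.), where the same reduction is used --- the inclusion $(1-\lambda)\bigl(P \cap (Q+v_0)\bigr) + \lambda\bigl(P \cap (Q+v_1)\bigr) \subseteq P \cap (Q + v_\lambda)$ followed by the Brunn--Minkowski inequality in the $d$-dimensional subspace parallel to $Q$. Your handling of the parallel-subspace bookkeeping and the byproduct convexity of $\operatorname{supp}(f)$ is accurate, so there is nothing to fix.
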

As in \cite{avis1996}, we say a function $f: \mathbb{R} \to \mathbb{R}$ is \textit{unimodal} if it increases to a maximum value, possibly stays there for some interval, and then decreases. It is \textit{strictly unimodal} if it strictly increases to the maximum and then strictly decreases. Furthermore, we say a function $f: \mathbb{R}^d \to \mathbb{R}$ is (strictly) unimodal if its restriction to any line is (strictly) unimodal.

The following corollary of \Cref{prop_concavity} allows us to employ a divide-and-conquer strategy in our algorithm. 
\begin{corollary}[\cite{avis1996}]\label{cor_unimodality}
For any line $l$ parameterized by $l = p + vt$ in $\mathbb{R}^{d'}$ for $v \neq 0$, the function $f_l(t) = f(p + vt)$ is strictly unimodal.
\end{corollary}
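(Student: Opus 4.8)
The plan is to read off unimodality directly from the concavity statement in \Cref{prop_concavity} and then treat strictness as a separate, more delicate matter. First I would observe that the support $\operatorname{supp}(f) = \{v : f(v) > 0\}$ is convex, since it is precisely the superlevel set $\{v : f(v)^{1/d} > 0\}$ of the concave function $f^{1/d}$, and superlevel sets of concave functions are convex. Intersecting with the line $t \mapsto p + vt$, the set $I = \{t : f_l(t) > 0\}$ is therefore an interval (possibly empty, a single point, or unbounded), and $f_l$ vanishes identically on $\mathbb{R} \setminus I$.

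Next I would restrict to $I$. Writing $g(t) = f_l(t)^{1/d} = f(p + vt)^{1/d}$, the function $g$ is the restriction of the concave function $f^{1/d}$ to an affine line, hence $g$ is concave on the interval $I$. A concave function on an interval is non-decreasing up to the argument (or subinterval) at which it attains its maximum and non-increasing afterward. Since $x \mapsto x^d$ is strictly increasing on $[0,\infty)$, the function $f_l = g^d$ inherits exactly this shape on $I$, and because the values of $f_l$ off $I$ are $0$ and hence no larger than any value on $I$, this already shows that $f_l$ is unimodal on all of $\mathbb{R}$.

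The crux, and the step I expect to be the main obstacle, is upgrading weak unimodality to strict unimodality, i.e.\ excluding any subinterval of $I$ on which $g$ is affine (equivalently, on which $f_l$ is locally constant) away from the maximizing set. Concavity alone does not rule this out, so here I would invoke the equality characterization of the Brunn--Minkowski inequality that underlies \Cref{prop_concavity}: if $g$ is affine on some $[t_1, t_2] \subset I$, then the slices $P \cap (Q + p + vt)$ for $t \in [t_1, t_2]$ must be homothetic to one another. I would then argue that, for the overlap of the convex polytopes $P$ and $Q + v$, such a one-parameter family of homothetic slices forces a degenerate pattern of face--edge incidences, which one excludes under a general-position assumption on $P$ and $Q$. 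This polytopal, non-degeneracy argument is genuinely where the work lies and is the reason the strict form is attributed to \cite{avis1996}: the concavity of the first two paragraphs delivers unimodality formally, whereas strictness uses the combinatorial structure of the intersection.
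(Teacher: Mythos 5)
Your first two paragraphs are fine and match what the paper actually does: \Cref{cor_unimodality} is stated with no proof beyond the citation, precisely because it is meant to follow by restricting the concave function $f^{1/d}$ of \Cref{prop_concavity} to the line, composing with the strictly increasing map $x \mapsto x^d$, and treating the complement of the support separately. The problem is your third paragraph, whose premise is false: concavity alone \emph{does} rule out flat stretches away from the maximum. If $g$ is concave on an interval, $t_1 < t_2$, $g(t_1) = g(t_2)$, and $g(t_3) > g(t_2)$ for some $t_3 > t_2$, write $t_2 = \lambda t_1 + (1-\lambda) t_3$ with $\lambda \in (0,1)$; concavity gives $g(t_2) \geq \lambda g(t_1) + (1-\lambda) g(t_3) > g(t_2)$, a contradiction (and symmetrically if $t_3 < t_1$). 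Hence any concave function strictly increases up to its set of maximizers, is constant on that set, and strictly decreases afterwards. Applied to $g = f_l^{1/d}$ on the interval $I$, this already gives strict unimodality in the sense the paper defines; no equality-case analysis of Brunn--Minkowski and no combinatorial argument about face--edge incidences is needed, and the corollary really is immediate from \Cref{prop_concavity}.

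Moreover, the repair you propose cannot work, because it aims at a statement that is false in general and is not what is being claimed. The corollary carries no general-position hypothesis (and cannot: the algorithm must run on arbitrary inputs), and a plateau \emph{at the maximum} genuinely occurs: take $P = [0,2] \times [0,1]$, $Q = [0,1]^2$, and translate along the $x$-axis; then $f_l \equiv 1$ on $[0,1]$, and the overlaps $P \cap (Q+v)$ there are all congruent unit squares --- a one-parameter family of homothetic slices with nothing degenerate to exclude. The resolution is in the definition: ``strictly unimodal'' in this paper (following \cite{avis1996}) still permits the function to stay at its maximum on an interval; strictness refers only to the increase before and the decrease after, which is exactly what concavity provides for free. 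A smaller quibble: deducing convexity of $\operatorname{supp}(f)$ from ``superlevel sets of concave functions are convex'' is circular, since \Cref{prop_concavity} asserts concavity only on the support itself; convexity of the support follows instead from convexity of $P$ and $Q$ directly --- if $A_i = P \cap (Q + v_i)$ has positive volume for $i = 1,2$, then $\lambda A_1 + (1-\lambda) A_2 \subseteq P \cap (Q + \lambda v_1 + (1-\lambda) v_2)$, and this set contains a translate of $\lambda A_1$, hence has positive volume.
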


We also use the following two techniques in our algorithm. 
\begin{lemma}[\cite{frederickson1984}]\label{lemma_sorted_matrix}
    Let $M$ be an $m \times n$ matrix of real numbers, where $m \leq n$. If every row and every column of $M$ is in increasing order, then we say $M$ is a \textit{sorted matrix}. For any positive integer $k$ smaller or equal to $mn$, the $k$-th smallest entry of $M$ can be found in $O(m \log(2n/m))$ time, assuming an entry of $M$ can be accessed in $O(1)$ time. 
\end{lemma}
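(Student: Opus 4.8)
The plan is to reduce $k$-th-smallest selection to a sequence of \emph{rank queries} together with a prune-and-search that discards a constant fraction of the surviving entries at each step. For a threshold $\lambda$, let $\mathrm{rank}(\lambda)$ denote the number of entries of $M$ that are at most $\lambda$; the $k$-th smallest entry is then the least value attained by $M$ for which $\mathrm{rank}(\lambda) \ge k$. Because rows and columns are sorted, $\mathrm{rank}(\lambda)$ can be evaluated by a single monotone staircase walk: start at the top-right corner $(1,n)$, move down whenever the current entry is at most $\lambda$ (recording that its entire row-prefix is counted) and move left otherwise. Each step either advances the row pointer or retreats the column pointer, so one rank query costs $O(m+n)$ time.

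First I would settle the square case $m=n$, which is the technical heart, aiming for $O(m)$ time. I maintain a collection $\mathcal{S}$ of pairwise-disjoint contiguous submatrices whose entries contain every remaining candidate for the answer, and I track how many discarded entries are certified to lie below the answer, so that I know when the $k$-th smallest has been localized. Initially $\mathcal{S}=\{M\}$. In each round I split every submatrix into four equal quadrants, halving the side length. Since $M_{ij}$ increases in both indices, the minimum and maximum of a contiguous submatrix are its top-left and bottom-right corners, and a submatrix lying entirely up-and-to-the-left of another is dominated by it; these corner values therefore induce a useful partial order on the quadrants. I then select a weighted median $\lambda$ of the corner values, weighting each corner by the size of its submatrix, evaluate $\mathrm{rank}(\lambda)$, and prune: submatrices whose maximum falls below the answer are removed and absorbed into the certified-below count, while those whose minimum exceeds the answer are removed outright. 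Choosing a weighted median guarantees that a constant fraction of the still-active entries is eliminated each round.

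To keep the bookkeeping cheap I would argue that the active submatrices always lie along a monotone staircase frontier of the current grid, so $|\mathcal{S}| = O(m)$ throughout; then each round costs $O(|\mathcal{S}|)=O(m)$ for the weighted-median selection plus $O(m)$ for the rank query. Since the side length halves each round, after $O(\log m)$ rounds every submatrix is a single cell and the answer is read off directly, and a geometric-series argument over the shrinking active area gives $O(m)$ total for the square case. To handle a general $m \times n$ matrix with $m \le n$, I would partition $M$ into the $\lceil n/m \rceil$ square blocks of side $m$ and run a selection across these blocks, charging $O(m)$ per block and $O(\log(2n/m))$ levels of merging, for the claimed bound $O\!\left(m \log(2n/m)\right)$.

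The step I expect to be the main obstacle is the amortized accounting that keeps the number of active submatrices, and hence the total pruning work, under control: a careless split-and-prune lets the submatrix count balloon and yields only $O(m\log m)$ in the square case. Establishing that the survivors form an $O(m)$-size monotone frontier and that the discarded area shrinks geometrically is exactly the delicate part, and it is the content of the cited Frederickson--Johnson analysis \cite{frederickson1984}.
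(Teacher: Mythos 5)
The paper never proves this lemma: it is imported verbatim from Frederickson and Johnson \cite{frederickson1984} and used as a black box (indeed only in the weakened form $O(m+n)$), so your argument has to stand entirely on its own. It does not, for two concrete reasons. First, in the square case the steps you describe do not give $O(m)$. Every round performs a staircase rank query, which as you define it walks the full matrix and costs $\Theta(m)$; since the side length halves each round there are $\Theta(\log m)$ rounds, so the rank queries alone cost $\Theta(m \log m)$. The ``geometric-series argument over the shrinking active area'' can only apply to the weighted-median work over the cells, not to these walks, and even confining each walk to the active cells does not help: $O(2^i)$ active cells of side $m/2^i$ still force $\Theta(m)$ of walking per round. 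The actual Frederickson--Johnson mechanism avoids exact rank computation altogether and bounds ranks purely by counting the areas of pruned cells against corner-value comparisons; that is the missing idea, and deferring ``the delicate part'' to the citation is circular when the citation is precisely the statement to be proved. (Your claim that the survivors form an $O(m)$ staircase frontier has the same problem: weighted-median pruning only removes a constant fraction of the surviving weight per round, so nothing forces the active cells to hug the rank-$k$ contour.)

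Second, and more decisively, your reduction for the rectangular case cannot achieve the stated bound. Partitioning $M$ into $\lceil n/m \rceil$ square blocks of side $m$ and charging $O(m)$ per block costs $\Omega(n)$ just to touch the blocks, whereas the lemma claims $O\left(m \log(2n/m)\right)$, which is $o(n)$ whenever $m = o(n)$. The case $m = 1$ makes this stark: the matrix is a single sorted row, the $k$-th smallest is $M_{1,k}$ and the promised bound is $O(\log 2n)$, yet your scheme enumerates $n$ blocks. A correct algorithm for $m \ll n$ must use the row-sortedness to discard all but $O(m)$ columns without ever visiting the rest (for example by restricting attention to regularly spaced columns and recursing), rather than processing every block. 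As written, the proposal establishes at best an $O(n \log n)$-type bound, which is weaker than the lemma and would also degrade the running time claimed in \Cref{lemma_plane_step_2.1} where the paper invokes the $O(m+n)$ form.
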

For our purposes, we will use this result in the weaker form of $O(m+n)$.

\begin{lemma}[\cite{chazelle1993}]\label{lemma_cutting}
    Given $n$ hyperplanes in $\mathbb{R}^d$ and a region $R \subset \mathbb{R}^d$, a \textit{$(1/r)$-cutting} is a collection of simplices with disjoint interiors, which together cover $R$ and such that the interior of each simplex intersects at most $n/r$ hyperplanes. A $(1/r)$-cutting of size $O(r^d)$ can be computed deterministically in $O(nr^{d-1})$ time. In addition, the set of hyperplanes intersecting each simplex of the cutting is reported in the same time. 
\end{lemma}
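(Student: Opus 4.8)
The plan is to reconstruct Chazelle's construction \cite{chazelle1993}, which combines a sampling-based existence argument with an iterative refinement that strips away spurious logarithmic factors, and then to derandomize the sampling. Throughout we fix the dimension $d$ and treat it as a constant, and we triangulate arrangements using the canonical \emph{bottom-vertex triangulation}, which decomposes the arrangement $\mathcal{A}(H)$ of any set $H$ of hyperplanes into $O(|H|^{d})$ simplices in a way that depends only on $H$ and restricts cleanly to subregions. The whole argument reduces the construction of an optimal cutting to repeatedly constructing cuttings at a \emph{constant} scale, where the extra logarithmic factor costs only constants.

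First I would establish existence of a cutting with an extra logarithmic factor via random sampling. View the input as a range space whose ground set is the $n$ hyperplanes and whose ranges are, for each simplex $\sigma$, the set of hyperplanes crossing the interior of $\sigma$; a standard argument bounds its VC dimension by a function of $d$ alone. Draw a random subset $A$ of size $s = O(r \log r)$ and form the bottom-vertex triangulation of $\mathcal{A}(A)$. By the $\epsilon$-net theorem (Haussler--Welzl) with $\epsilon = 1/r$, the set $A$ is a $(1/r)$-net with positive probability; since no hyperplane of $A$ crosses the interior of a simplex of $\mathcal{A}(A)$, any simplex crossed by more than $n/r$ input hyperplanes would be a range avoided by the net, a contradiction. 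Hence every simplex is crossed by at most $n/r$ hyperplanes, giving a $(1/r)$-cutting of size $O(s^{d}) = O((r \log r)^{d})$.

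To remove the logarithmic factor and reach the optimal size $O(r^{d})$, I would build cuttings in stages. Fix a large constant $\rho$ and set $r_i = \rho^{\,i}$; I construct a $(1/r_i)$-cutting $C_i$ of size $O(r_i^{\,d})$ for $i = 0, 1, \dots, \lceil \log_\rho r \rceil$. The trivial cutting $C_0$ is a single simplex. Given $C_i$, each simplex $\Delta$ is crossed by $n_\Delta \le n/r_i$ hyperplanes; I apply the constant-scale existence result to these $n_\Delta$ hyperplanes clipped to $\Delta$ to refine $\Delta$ into $O(\rho^{\,d})$ sub-simplices, each crossed by at most $n_\Delta/\rho \le n/r_{i+1}$ hyperplanes. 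Because $\rho$ is a constant, the log factor in the existence result is absorbed into $O(\rho^{\,d})$, so $|C_{i+1}| = O(\rho^{\,d}) \cdot |C_i| = O(r_{i+1}^{\,d})$ and the final cutting has size $O(r^{d})$. For the running time, the work at stage $i$ is dominated by determining, for each simplex of $C_i$, which hyperplanes cross it and its children; this totals $\sum_\Delta n_\Delta = |C_i| \cdot (n/r_i) = O(n\, r_i^{\,d-1})$ incidences, and the same information yields the crossing lists for free. The geometric series $\sum_i O(n\, r_i^{\,d-1})$ is dominated by its last term, giving the claimed $O(n\, r^{\,d-1})$ bound.

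The main obstacle is derandomization: the sampling above is randomized, whereas the lemma demands a deterministic construction within the same time bound. Here I would replace each constant-scale random sample by a deterministically chosen one using the method of conditional probabilities with an efficiently computable pessimistic estimator, or equivalently by deterministic $\epsilon$-approximations of the restricted range space (Matou\v{s}ek, Chazelle--Friedman). The delicacy is entirely in ensuring that this derandomization, performed separately inside each simplex $\Delta$ of $C_i$ on its $n_\Delta$ crossing hyperplanes, costs only $O(n_\Delta)$ time per simplex up to constants, so that the per-stage overhead stays proportional to the incidence count $O(n\, r_i^{\,d-1})$ and the total remains $O(n\, r^{\,d-1})$. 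Verifying that the estimator can be maintained and queried this cheaply at constant scale is the technically demanding step, and it is precisely what the staged refinement is engineered to make possible.
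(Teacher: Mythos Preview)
The paper does not prove this lemma at all: it is stated with the citation \cite{chazelle1993} and used as a black box, so there is no ``paper's own proof'' to compare against. Your proposal is a faithful high-level reconstruction of Chazelle's original argument---random sampling via $\epsilon$-nets for a suboptimal cutting, geometric refinement at constant scale to kill the $\log$ factor, and derandomization by conditional probabilities---and as such it goes far beyond what the present paper requires or supplies. For the purposes of this paper a citation suffices; if you want to include a proof, your outline is correct in structure, though as you yourself note the derandomization step at linear cost per simplex is where the real work lies and would need to be fleshed out to count as a complete proof.
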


\section{Generalized two-dimensional prune-and-search}\label{sec_technique}
In this section, we prove \Cref{thm_prune_and_search}, our generalization of Megiddo's prune-and-search technique \cite{megiddo1984}. This technique is of independent interest and can likely be applied to other problems.

In \cite{megiddo1984}, Megiddo proves the following:
\begin{theorem}[\cite{megiddo1984}]\label{thm_megiddo}
    Suppose there exists a point $p^{*} \in \mathbb{R}^2$ not known to us. Suppose further that we have an oracle that can tell us for any line $l \subset \mathbb{R}^2$ whether $p^{*} \in l$, and if $p^{*} \notin l$, the side of $l$ that $p^{*}$ belongs to. Let $T$ be the running time of the oracle. Then given $n$ lines in the plane, we can find the position of $p^{*}$ relative to each of the $n$ lines in $O(n + T \log n)$ time.
\end{theorem}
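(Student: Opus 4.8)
The plan is to prove the theorem by a \emph{prune-and-search} scheme that calls the oracle only $O(\log n)$ times in total while spending $O(n)$ additional time, so that the cost is $O(n) + O(\log n)\cdot T = O(n + T\log n)$. I process the lines in rounds, and in each round I aim to determine the side of $p^{*}$ for a constant fraction of the still-unresolved lines using only $O(1)$ oracle calls and time linear in the number of surviving lines. Since the number of unresolved lines then decreases geometrically, the total non-oracle work telescopes to $O(n)$, and the number of rounds---hence the number of oracle calls---is $O(\log n)$. Median-finding is done by classical linear-time selection, so the per-round overhead is indeed linear in the number of surviving lines.

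Before the rounds begin I normalize the instance. A shear $(x,y)\mapsto(x,\,y-Kx)$ with $K$ larger than every slope is a sidedness-preserving linear map that keeps vertical lines vertical, so after applying it (and translating each query line back to the original coordinates when I consult the oracle) I may assume no line is vertical and, if convenient, that all slopes are negative; ``side'' then means above or below. I also maintain throughout a vertical slab $x\in[x_L,x_R]$ known to contain $p^{*}_x$, initialized to $(-\infty,\infty)$ and refined as I query. Degenerate answers are immediately useful: if the oracle ever reports $p^{*}\in l$ for an input line $l$, that line is resolved on the spot, and parallel pairs (with no proper intersection) are set aside, since resolving one line of such a pair constrains the other.

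Each round then runs as follows. I pair the surviving lines and compute the abscissa of each pairwise intersection. Taking the median $x_0$ of these abscissae and querying the vertical line $x=x_0$ costs one oracle call and returns $\operatorname{sign}(p^{*}_x - x_0)$; this both shrinks $[x_L,x_R]$ and, for at least a quarter of the pairs---those whose intersection lies on the far side of $x_0$ from $p^{*}$---fixes the vertical order of the two lines at $x=p^{*}_x$, identifying an ``upper'' and a ``lower'' line in each such pair. For the second oracle call I exploit that, over the slab, a line of known sign of slope is monotone, so its value at the unknown $p^{*}_x$ is pinched between its values at $x_L$ and $x_R$. Comparing $p^{*}_y$ against a median of these endpoint heights, taken over the order-fixed pairs, is meant to certify for a constant fraction of them that $p^{*}$ lies strictly above (or strictly below) a specific line of the pair, which resolves it. I then delete all resolved lines and recurse on the remainder; once only $O(1)$ lines survive I resolve them by $O(1)$ direct oracle queries.

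The main obstacle is the pruning guarantee in this second step, which is exactly Megiddo's key insight. I must choose the test value (and argue it \emph{symmetrically}, so that whichever side the oracle reports, a genuinely constant fraction of lines is resolved rather than merely having their relative order or the slab refined); as the naive one-directional comparison shows, a single endpoint-median can fail when $p^{*}_y$ falls strictly between the upper- and lower-line medians, so the argument must couple the two median cuts with the monotonicity afforded by the maintained slab to trap $p^{*}$ tightly enough against a constant fraction of the order-fixed pairs. Making this airtight, together with clean bookkeeping for the degenerate cases (oracle equalities, parallel pairs, and the $O(1)$-line base case), is the technical heart of the proof; granting it, the geometric decrease and the linear per-round cost yield the claimed $O(n + T\log n)$ bound.
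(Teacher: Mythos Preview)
The paper does not actually prove this theorem; it is quoted as Megiddo's result and only cited. The closest thing to a proof in the paper is the argument for Lemma~\ref{lemma_balanced_quadrants}, which is precisely Megiddo's scheme carried out in the more general setting, so that is the natural point of comparison.

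Your write-up has a genuine gap, and you flag it yourself: the second oracle call in each round is not established. You explicitly say ``Making this airtight\ldots is the technical heart of the proof; granting it, the geometric decrease\ldots yield the claimed bound.'' That is a sketch, not a proof. And the difficulty is real: with \emph{arbitrary} pairing of the surviving lines, knowing which of the two lines in a pair is higher at $p^{*}_x$ plus a single height comparison against some median does not force either line to be resolved; your slab-endpoint idea does not close the case where $p^{*}_y$ falls between the two medians, as you acknowledge.

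The missing idea---present in Megiddo's original argument and in the paper's Lemma~\ref{lemma_balanced_quadrants}---is to first split the lines by the \emph{weighted median slope} into a ``positive-slope'' group and a ``negative-slope'' group (after a rotation), and then pair \emph{across} the two groups. For such a pair, the intersection point $p_i=(x_i,y_i)$ has the property that knowing the signs of $p^{*}_x-x_i$ and $p^{*}_y-y_i$ determines the side of $p^{*}$ relative to \emph{one} of the two lines (e.g.\ if $p^{*}$ is above and to the right of $p_i$, it is certainly above the negatively sloped line). One then queries the vertical line through the median $x_i$ and the horizontal line through the median $y_i$ among the half of the $p_i$'s already on the known side; for at least a quarter of the pairs both signs are known, so at least $1/8$ of the lines are resolved with two oracle calls and $O(n)$ work. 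Your arbitrary pairing discards exactly this slope-sign structure, which is why your second step cannot be completed as written. Replacing your pairing step by the median-slope split and the two-axis median queries fixes the argument and gives the stated bound.
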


We are interested in a generalized version of Megiddo's problem. Suppose, instead of $n$ lines, we are given $n$ sets of parallel lines $S_1, S_2, \ldots, S_n$, each of size $O(m)$. In addition, suppose the lines in each $S_i$ are indexed from left to right (assuming none of the lines are parallel to the $x$-axis). Again, we want to know the position of $p^{*}$ relative to every line in $S = \bigcup_{i=1}^{n} S_i$. Megiddo's algorithm solves this problem in $O(mn + T \log(mn))$ time, but we want a faster algorithm for large $m$ by exploiting the structure of $S$.

Without loss of generality, suppose that there are no lines parallel to the $y$-axis. For each $i$ between $1$ and $n$, suppose $S_i = \{l_{i}^{j} | l_{i}^{a} \text{ lies strictly to the left of } l_{i}^{b} \text { iff } a<b\}$. Suppose that $p^{*} = (x^{*}, y^{*}) \in \mathbb{R}^2$. To report our final answer, we simply need to provide, for each $S_i$, the two consecutive indices $a$ and $a+1$ such that $p^{*}$ lies strictly between $l_{i}^{a}$ and $l_{i}^{a+1}$ or the single index $a$ such that $p^{*} \in l_{i}^{a}$. 
    
In our algorithm, we keep track of a feasible region $R$ containing $P^*$, which is either the interior of a (possibly unbounded) triangle or an open line segment if we find a line $l$ that $p^{*}$ lies on. Together with $R$, we keep track of the $2n$ indices $\operatorname{lower}(i)$ and $\operatorname{upper}(i)$ such that $S^R = \bigcup_{i=1}^{n} S_{i}^R = \{l_i^{j} | j \in (\operatorname{lower}(i), \operatorname{upper}(i)]\}$ is the set of lines intersecting $R$, which is also the set of lines we do not yet know the relative position to $p^{*}$. In the beginning, $R = \mathbb{R}^2$. Each step, we find $O(1)$ lines to run the oracle on to find a new feasible region $R' \subset R$ such that $|S^R| \leq 17/18 |S^{R'}|$ and recurse on $R'$. An outline is given in \Cref{pseudo_prune_and_search}.
    
\begin{algorithm}[ht]
    \DontPrintSemicolon
    \SetKwInOut{Input}{input}\SetKwInOut{Output}{output}
    \Input{A set $S = \bigcup_{i=1}^{n} S_{i} = \{l_{i}^{j}\}$ of $O(mn)$ lines}
    \Output{A list of indices that indicate the position of $p^{*}$ to each $S_i$}
    $R \longleftarrow \mathbb{R}^2$\;
    $S^R \longleftarrow S$\;
    \While{$|S^R| \geq 18$}{
        Find $O(1)$ lines to run the oracle on\;
        Compute the piece $R' \subset R$ containing $p^{*}$\; \tcc{We guarantee that $R'$ intersects at most $17/18$ of the lines that intersect $R$}
        Triangulate $R'$ with $O(1)$ lines to run the oracle on\;
        Update $S^R \longleftarrow S^{R'}$
        }
    Compute relative position of $p^{*}$ to the remaining lines by brute force\;
    \caption{Pseudocode for \Cref{thm_prune_and_search}}\label{pseudo_prune_and_search}
\end{algorithm}
    
One extra computational effort is updating $S^{R'}$ by computing $\operatorname{lower}(i)$ and $\operatorname{upper}(i)$. Since the feasible region is always a convex set of constant complexity, we can use binary search on $S_i^{R}$ to find the new bounds for $S_i^{R'}$ in $O(\log |S_{i}^{R}|)$ time. Thus, the total time involved in this process, assuming $|S^R|$ decreases by at least $\epsilon = 1/18$ each iteration, is
\begin{align*}
    & \sum_{i=1}^n O(\log |S_{i}|) + \sum_{i=1}^n O(\log |S_{i}^{R_1}|) + \sum_{i=1}^n O(\log |S_{i}^{R_2}|) + \cdots \\
    = & O(n \log (\frac{1}{n}|S|)) + O(n \log (\frac{1}{n}|S^{R_1}|)) + O(n \log (\frac{1}{n}|S^{R_2}|)) + \cdots \\
    = & O(n \log (m)) + O(n \log (m(1-\epsilon))) + O(n \log (m(1-\epsilon)^2)) + \cdots \\
    = & O(n \log^2 m).
\end{align*}

    
We will use the following well-known result:
\begin{lemma}[\cite{cormen2009}]\label{lemma_weighted_median}
    Suppose we are given $n$ distinct real numbers with positive weights that sum to $1$. Then we can find the weighted median of these numbers in $O(n)$ time. 
\end{lemma}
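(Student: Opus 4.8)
The plan is to reduce the weighted median to repeated calls of the classical deterministic linear-time (unweighted) selection routine, folding the weight of the discarded elements into a running threshold so that each recursive call shrinks the input by a constant factor. Since the $x_i$ are distinct, the weighted median can be characterized unambiguously as the unique element $x_k$ for which $\sum_{x_i < x_k} w_i < \frac{1}{2}$ and $\sum_{x_i \le x_k} w_i \ge \frac{1}{2}$; equivalently, it is the element at which the prefix sum of weights, taken in increasing order of value, first reaches $\frac{1}{2}$. I would solve the slightly more general weighted-selection problem in which the target threshold $\tau$ is allowed to be any value in $(0,1]$, and instantiate it with $\tau = \frac{1}{2}$.

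First, in $O(n)$ time I would use the median-of-medians selection algorithm (as in \cite{cormen2009}) to find the unweighted median $x_m$ of the $n$ values, and in the same pass partition the others into $L = \{x_i : x_i < x_m\}$ and $G = \{x_i : x_i > x_m\}$; summing weights over $L$ then yields $W_{<} = \sum_{x_i \in L} w_i$ in $O(n)$ time, and I set $W_{\le} = W_{<} + w_m$. Now I branch on where the threshold $\tau$ falls. If $W_{<} < \tau \le W_{\le}$, then $x_m$ itself meets both defining inequalities and I return it. If $W_{\le} < \tau$, the element sought lies strictly above $x_m$, so I recurse on $G$ with the reduced threshold $\tau - W_{\le}$, which exactly accounts for the weight already accumulated at or below $x_m$. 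If instead $W_{<} \ge \tau$, the element lies strictly below $x_m$, so I recurse on $L$ keeping the threshold $\tau$ unchanged, since every discarded element is larger than all of $L$ and therefore contributes nothing to prefix sums inside $L$.

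Because $x_m$ is the exact median, each of $L$ and $G$ has size at most roughly $n/2$, so the running time satisfies $T(n) \le T(n/2) + O(n)$, which solves to $O(n)$; the geometric decay of the subproblem sizes is what prevents a spurious logarithmic factor. The conceptual content here is light, so I expect the only real care to be in the weight bookkeeping: I would verify by induction that the element returned by the recursion on the smaller instance, under the adjusted threshold, is exactly the weighted selection element of the original instance. Concretely, the induction hypothesis must confirm that for any $x_k \in G$ the original prefix weight equals $W_{\le}$ plus the prefix weight of $x_k$ within $G$, so that reaching $\tau$ in the original problem corresponds precisely to reaching $\tau - W_{\le}$ in the recursion; the distinctness assumption keeps this invariant clean by eliminating ties. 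The one subtlety to watch is that the recursion must genuinely discard a constant fraction of the elements (including $x_m$ itself, which is resolved and never reappears), so that the element count strictly decreases and the recursion terminates.
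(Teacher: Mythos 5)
Your proposal is correct. The paper offers no proof of this lemma at all---it is quoted directly from the cited reference (Cormen et al.)---and your argument (find the unweighted median by linear-time selection, partition, fold the discarded weight into an adjusted threshold, and recurse on one half, giving $T(n) \le T(n/2) + O(n) = O(n)$) is exactly the standard proof underlying that citation, so the two approaches coincide.
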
 

Given $S^R$ and $R$, we want to find $R' \subset R$ to recurse on.
\begin{lemma}\label{lemma_balanced_quadrants}
    If $|S^R| \geq 18$, then in $O(T + n)$ time, we can find a region $R' \subset R$ of constant complexity containing $p^*$ so that its interior intersects no more than $17/18$ of all the lines in $S^R$.
\end{lemma}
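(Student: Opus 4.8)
The plan is to carry out a single pruning round of Megiddo's pairing (the engine behind \Cref{thm_megiddo}), but applied at the granularity of the \emph{groups} $S_i$ rather than the individual lines, so that only $O(1)$ oracle calls and $O(n)$ bookkeeping are needed. The key observation is that since $R$ is convex and the lines of each $S_i$ are parallel, the lines of $S_i^R$ form a contiguous block $l_i^{\operatorname{lower}(i)+1}, \ldots, l_i^{\operatorname{upper}(i)}$ in the left-to-right order. Let $\bar l_i$ be the median line of this block and set $w_i = |S_i^R|$, so that $\sum_i w_i = |S^R| =: W$. If we learn the side of $\bar l_i$ on which $p^{*}$ lies---equivalently, if the refined region lies entirely on one side of $\bar l_i$---then all of the at least $\lfloor w_i/2\rfloor$ lines of $S_i^R$ on the far side are separated from $p^{*}$ and hence resolved. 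Thus it suffices to produce, with $O(1)$ oracle calls and $O(n)$ work, a subregion $R' \subseteq R$ containing $p^{*}$ such that the medians $\bar l_i$ not crossing $R'$ have total weight at least $\frac{1}{9}W$; this prunes at least $\frac12\cdot\frac19 W = \frac{1}{18}W$ lines, leaving at most $\frac{17}{18}W$.

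To resolve a constant weight-fraction of the $n$ medians with only $O(1)$ queries, I would run one pruning round of Megiddo's pairing on the weighted set $\{(\bar l_i, w_i)\}$. First pair the medians, compute the $\sim n/2$ pairwise intersection points, and use \Cref{lemma_weighted_median} to find the weighted median $x_m$ of their $x$-coordinates, where each crossing carries the combined weight of its two lines; one oracle query on the vertical line $x = x_m$ then reveals the side of $p^{*}$, and a constant weight-fraction (about half) of the crossings lie on the far side. A second weighted-median test on the $y$-coordinates of those far crossings, realized by one more oracle query on a horizontal line, localizes $p^{*}$ to a fixed open quadrant relative to a constant weight-fraction of the crossings. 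For each such pair the crossing lies in the quadrant diagonally opposite to the one containing $p^{*}$, and a short case analysis on the two slope signs shows that at least one line of the pair keeps $p^{*}$ strictly on one side throughout that quadrant, so that line is resolved. Taking $R'$ to be $R$ intersected with the two closed query half-planes gives a convex region of at most five edges; triangulating it into $O(1)$ triangles and locating $p^{*}$ among them with $O(1)$ further oracle queries produces a feasible region of constant complexity for the next iteration. Every step costs $O(n)$ apart from the $O(1)$ oracle calls, for a total of $O(T+n)$.

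The main obstacle is the \emph{weighted} accounting, and it is exactly here that the argument departs from \Cref{thm_megiddo}. In the classical setting every line counts equally, so resolving a constant fraction of the paired lines suffices; in our setting resolving $\bar l_i$ is worth only $w_i$ lines, and a naive pairing could resolve many light medians while leaving the heavy ones untouched. This is what forces weighted medians in both tests together with a slope-aware pairing: to guarantee that each ``good'' pair actually contributes its weight, I would split the medians into those of positive and of negative slope and pair across the two classes (padding the smaller class arbitrarily), so that in whichever quadrant $p^{*}$ is pinned, each good pair contains a line of the slope sign that the quadrant resolves. Parallel pairs (two medians of equal slope, with no finite crossing) must be handled separately---for instance by an auxiliary weighted-median test on their offsets, or by pairing each only with a non-parallel partner---and one must check that degeneracies (a query line through $p^{*}$ itself, or a median tangent to $R$) do not spoil the count. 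Verifying that the surviving resolved weight is at least $\frac19 W$, rather than merely $\Theta(W)$, is the delicate part, and is where the explicit constant $\frac{17}{18}$ is fixed.
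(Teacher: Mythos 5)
Your overall strategy (a Megiddo-style pairing of positive- against negative-slope lines, two weighted-median oracle calls via \Cref{lemma_weighted_median}, and quadrant-based resolution) is the same as the paper's, but the specific pairing you chose breaks the weighted accounting, and this is a genuine gap rather than a ``delicate constant to verify.'' You pair one median line $\bar l_i$ per group, carrying weight $w_i = |S_i^R|$, and a good pair resolves only \emph{one} of its two lines. After your two oracle calls, every good crossing has $p^*$ in the \emph{same} relative quadrant, and that quadrant resolves the member of one fixed slope sign in every pair: upper-left and lower-right quadrants resolve the positive-slope member, upper-right and lower-left the negative-slope member. Hence the guaranteed gain from a good pair is only $\lfloor \min(w_i,w_j)/2 \rfloor$, and weighting the crossings by the combined weight $w_i + w_j$ does not repair this, since $\sum_{\text{good}} (w_i+w_j) \geq W/4$ does not imply $\sum_{\text{good}} \min(w_i,w_j) \geq cW$ for any constant $c$. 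Concretely, let $S_1$ consist of $m$ parallel lines of negative slope and let $S_2,\ldots,S_n$ be singleton groups of positive slope, with $m \gg n$. However you pair or pad, each pair contains at most one singleton positive line; the algorithm is deterministic, so one can choose $p^*$ lying to the left of your vertical query line and above your horizontal query line, whereupon every good pair resolves exactly its positive-slope member. The total gain is at most $n-1$ lines out of $|S^R| = m+n-1$, which is not a $1/18$ fraction (nor any constant fraction). So the statement you defer---that the resolved weight is at least $\frac19 W$---is in fact false for your pairing.

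The paper's fix changes the pairing structure, not the accounting. After splitting at the \emph{weighted median slope} (with a separate easy case when a single group of weight at least $|S|/9$ sits at the median slope, in which case one queries that group's median line directly), it truncates $S_+$ and $S_-$ to equal size and partitions $S_+ \cup S_-$ into blocks $P_i$, each containing the \emph{same number of individual lines} from one positive group and one negative group (contiguous runs of indices, so the partition costs $O(n)$ and each block is stored as $O(1)$ indices). The crossing for block $P_i$ is the intersection of the median positive line of the block with its median negative line. Now, whichever of the two medians the quadrant resolves, at least $|P_i|/4$ lines \emph{of that block} are resolved, so each good block contributes proportionally to its own size; the two weighted-median queries make the good blocks carry at least $1/4$ of the weight, giving at least $\frac29 \cdot \frac14 = \frac1{18}$ of all lines resolved. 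If you replace ``one median per group with weight $w_i$'' by this balanced block pairing, the rest of your round---the two oracle calls, the $O(1)$ additional calls to triangulate the new region, and the $O(T+n)$ total cost---goes through essentially as you wrote it. Note also that your split ``by sign of slope'' only works after the median-slope normalization: without it, both of your classes can be empty or nearly empty, and padding cannot create the opposite-slope partners the quadrant argument needs.
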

\begin{proof}
    For convenience, we write $S^R = S = \bigcup_{i=1}^{n} S_{i} = \{l_{i}^{j}\}$. We first find the weighted median of the slopes of the lines in $S$, where the slope of the lines of $S_i$ is weighted by $|S_i|/|S|$. This can be done in $O(n)$ time by \Cref{lemma_weighted_median}. 

    If this slope is equal to the slope of some line in $S_i$ and $|S_i| \geq \frac{1}{9} |S|$, then we can simply divide the plane using the median line of $S_i$ and the $x$-axis and the interior of each quadrant will avoid at least $1/18$ of the lines of $S$. 

    Otherwise, at least $4/9$ of the lines have slopes strictly greater than/less than the median slope. Without loss of generality, we assume at least $4/9$ of the lines have positive slope and at least $4/9$ of the lines have negative slope. Now let $S_{+} = \bigcup_{i=1}^{k} S_{i}$ and $S_{-} = \bigcup_{i=k+1}^{n} S_{i}$ denote the set of lines with postive/negative slope, respectively. We remove lines from the larger of the two sets until they have the same size. 

    \begin{figure}[ht]
      \centering 
      \begin{tikzpicture}[thick,scale=1.4,line join=round]
        \draw[color=topColor] (0  ,0) -- (-1.5,2);
        \draw[color=topColor] (0.5,0) -- (-1  ,2);
        \draw[color=midColor] (1  ,0) -- (-0.5,2);
        \draw[color=botColor] (1.5,0) -- ( 0.5,2);
        \draw[color=botColor] (2  ,0) -- ( 1  ,2);
        
        \draw[color=topColor] (2.5,0) -- ( 3  ,2);
        \draw[color=topColor] (3  ,0) -- ( 3.5,2);
        \draw[color=midColor] (3.5,0) -- ( 4.5,2);
        \draw[color=botColor] (4  ,0) -- ( 5  ,2);
        \draw[color=botColor] (4.5,0) -- ( 5.5,2);
        \node at (0.5 ,-0.25) {$S_1$};
        \node at (1.75,-0.25) {$S_2$};
        \node at (2.75,-0.25) {$S_3$};
        \node at (4   ,-0.25) {$S_4$};
      \end{tikzpicture}
      \caption{$P_1$, $P_2$ are $P_3$ are represented by colors.}\label{fig_line_partition}
    \end{figure}
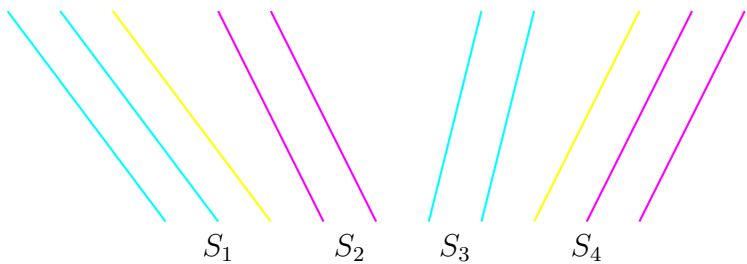

    We partition $S_{+} \cup S_{-}$ into $O(n)$ subsets $P_i$ each containing the same number of lines from $S_{+}$ and $S_{-}$ in the following way: going in lexicographical order by the indices of the lines, we put a line from $S_{1}$ and a line from $S_{k+1}$ into $P_1$ until we exhaust one of the sets (say it is $S_{k+1}$). Then, we move on to put a line from the remaining $S_{1}$ and a line from $S_{k+2}$ into $P_2$ until we exhaust one of them, and so on. Each $P_i$ is then of the form $\{l_{a(i)}^{b(i)},\ldots, l_{a(i)}^{b(i) + |P_i|/2 - 1}, l_{c(i)}^{d(i)},\ldots, l_{c(i)}^{d(i) + |P_i|/2 - 1} \}$, and can be represented by the indices $(a(i), b(i))$ and $(c(i), d(i))$ (see \Cref{fig_line_partition}). We can compute this partition in $O(n)$ time. For each $P_i$, we compute the intersection $p_i = (x_i, y_i)$ of the median line in $P_i$ with positive slope and the median line with negative slope, and assign $p_i$ a weight $w_i = |P_i| / (2|S_{+}|)$. Then, the weights of the $p_i$ sum to $1$. The significance of this is that if we know the relative position of $p^{*}$ to the lines $x = x_i$ and $y = y_i$, then we know the relative position of $p^{*}$ to at least $1/4$ of the lines in $P_i$, which is at least $\frac{2}{9} w_i$ of all the lines in $|S|$. 

    We find the median point $q = (x_{q}, y_{q})$ of the $p_i$'s by weight in $x$-coordinate in $O(n)$ time by \Cref{lemma_weighted_median}. We run the oracle on the line $x = x_{q}$. Let $p_{k_1}, p_{k_2}, \ldots, p_{k_l}$ be the points such that we now know the relative position of $p^{*}$ to $x_{k_i}$. Then the weights of these points sum to at least $1/2$. We find the median point $q' = (x_{q'}, y_{q'})$ of these by weight in $y$-coordinate in $O(n)$ time. We run the oracle on the line $y = y_{q'}$. Then, for points with weights that sum to at least $1/4$, we now know the relative position of $p^{*}$ to the vertical line and the horizontal line through those points. This means that we know the relative position of $p^{*}$ to at least $\frac{2}{9} \cdot \frac{1}{4} = \frac{1}{18}$ of all the lines in $|S|$. We get a new feasible region according to the two oracle calls whose interior avoids at least $1/18$ of the lines in $S$, and we triangulate it with $O(1)$ more oracle calls to get our desired region, in $O(T + n)$ time total. 
\end{proof}

\begin{proof}[Proof of \Cref{thm_prune_and_search}]
    After $O(\log mn)$ recursive iterations of \Cref{lemma_balanced_quadrants}, we arrive at a feasible region intersecting at most $17$ lines in $S$, and we can finish by brute force. Therefore, our algorithm runs in $O(n \log^2 m + (T + n) \log(mn))$ time. 
\end{proof}

\begin{remark}
    A simpler and probably more practical algorithm for \Cref{lemma_balanced_quadrants} is simply choosing a random line from $S_{+}$ and $S_{-}$ to intersect and run the oracle on the horizontal and vertical line through the intersection. This method gives the same run time in expectation. 
\end{remark}

\section{Maximum overlap of convex polyhedron and convex polygon}\label{sec_three_and_two}

In section, we give the algorithm that finds a translation $v \in \mathbb{R}^3$ maximizing the area of overlap function $f$. Following the convention in \cite{deberg1996}, we call such a translation a \textit{goal placement}. In the algorithm, we keep track of a closed \textit{target region} $R$ which we know contains a goal placement and decrease its size until for each event polygon $F$, either $F \cap \operatorname{interior}(R) = \varnothing$ or $F \supset R$. Then, $f$ is quadratic on $R$ and we can find the maximum of $f$ on $R$ using standard calculus. Thus, the goal of our algorithm is to efficiently trim $R$ to eliminate event polygons that intersect it. 

In the beginning of the algorithm, the target region is the interior of the Minkowski sum $P - Q$, where the overlap function is positive. By the unimodality of the overlap function, the set of goal placements is convex. Thus, for a plane in the configuration space, either it contains a goal placement, or all goal placements lie on one of the two open half spaces separated by the plane. If we have a way of knowing which case it is for any plane, we can decrease the size of our target region by cutting it with planes and finding the piece to recurse. More precisely, we need a subroutine \textbf{PlaneDecision} that decides the relative position of the set of goal placements to a plane $S$. 

Whenever \textbf{PlaneDecision} reports that a goal placement is found on a plane, we can let the algorithm terminate. Thus, we can assume it always reports a half-space containing a goal placement. 

As in \Cref{pseudo_algo}, we break down our algorithm into three stages. 

\begin{algorithm}[ht]
    \DontPrintSemicolon
    \SetKwInOut{Input}{input}\SetKwInOut{Output}{output}
    \Input{A convex polyhedron $P \in \mathbb{R}^3$ and a convex polygon $Q \in \mathbb{R}^3$ with $n$ vertices in total}
    \Output{A translation $v \in \mathbb{R}^3$ maximizing the area $|P \cap (Q + v)|$}
    Locate a horizontal slice containing a goal placement that does not contain any vertices of $P$ and replace $P$ by this slice of $P$\;
    Find a ``tube'' $D + l_y$ whose interior contains a goal placement and intersects $O(n)$ event polygons, where $D$ is a triangle in the $xz$-plane and $l_y$ is the $y$-axis\;
    Recursively construct a $(1/2)$-cutting of the target region $D + l_y$ to find a simplex containing a goal placement that does not intersect any event polygon\;
    \caption{Pseudocode for \Cref{thm_algo}}\label{pseudo_algo}
\end{algorithm}

\subsection{Stage 1}
In the first stage of our algorithm, we make use of \cite{deberg1996} to simplify our problem so that $P$ can be taken as a convex polyhedron with all of its vertices on two horizontal planes. 

We sort the vertices of $P$ by $z$-coordinate in increasing order and sort the vertices of $Q$ in counterclockwise order. Next, we trim the target region with horizontal planes (planes parallel to the $xy$-plane) to get to a slice that does not contain any vertices of $P$.
\begin{lemma}\label{lemma_stage_1}
    In $O(n \log^2 n)$ time, we can locate a strip $R = \{(x, y, z) | z \in [z_0, z_1]\}$ whose interior contains a goal placement and $P$ has no vertices with $z \in [z_0, z_1]$.
\end{lemma}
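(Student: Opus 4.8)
The plan is to reduce the decision at a \emph{horizontal} plane to a planar maximum-overlap problem, and then binary search over the $z$-coordinates of the vertices of $P$. I would begin by sorting the vertices of $P$ by $z$-coordinate, obtaining breakpoints $z^{(1)} < z^{(2)} < \cdots < z^{(k)}$ with $k = O(n)$, and sorting the vertices of $Q$ cyclically; this costs $O(n \log n)$. These breakpoints partition the $z$-axis into open slabs inside each of which $P$ has no vertex, so it suffices to identify one such slab (equivalently, a consecutive pair $z^{(i)}, z^{(i+1)}$) whose interior contains a goal placement.

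The key observation is that for a fixed height $c$, translating $Q$ to $(x,y,c)$ gives $P \cap (Q + (x,y,c)) = P_c \cap (Q + (x,y))$, where $P_c := P \cap \{z = c\}$ is the horizontal cross-section of $P$, a convex polygon with $O(n)$ vertices that can be extracted from the edge structure of $P$ in $O(n)$ time. Hence $f(x,y,c)$ equals the planar overlap area of $P_c$ and $Q + (x,y)$, and $g(c) := \max_{x,y} f(x,y,c)$ is exactly the maximum overlap of the two convex polygons $P_c$ and $Q$ under translation, which de Berg et al. \cite{deberg1996} compute together with a maximizer $(x^{*}, y^{*})$ in $O(n \log n)$ time. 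This yields a cheap \textbf{PlaneDecision} for horizontal planes: by \Cref{prop_concavity} the function $f^{1/3}$ is concave on its convex support, so the marginal $g(c)^{1/3} = \max_{x,y}\bigl(f(x,y,c)\bigr)^{1/3}$ is concave, and thus $g$ is unimodal in $c$, its set of maximizing heights being the projection of the goal placements onto the $z$-axis. To decide which side of $z = c$ contains a goal placement it then suffices to compute the one-sided normal derivatives $\partial_z f$ at the plane-maximizer $(x^{*}, y^{*}, c)$ (by the envelope theorem these are the one-sided derivatives of $g$ at $c$): a positive derivative pushes the optimum up, a negative one down, and a sign change certifies that $(x^{*}, y^{*}, c)$ is itself a goal placement on the plane, in which case we output it and terminate.

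Running this horizontal \textbf{PlaneDecision} inside a binary search over the sorted breakpoints then locates the desired slab. At each of the $O(\log n)$ steps I would probe a middle breakpoint $c = z^{(i)}$, compute the cross-section, run de Berg's algorithm in $O(n \log n)$ time, and recurse into the half indicated by the sign of $\partial_z f$; if instead the decision reports a goal placement on the plane we are already done. After $O(\log n)$ steps we obtain consecutive breakpoints $z^{(i)} < z^{(i+1)}$ such that a goal placement has $z$-coordinate in the open interval $(z^{(i)}, z^{(i+1)})$, and setting $z_0 = z^{(i)}$, $z_1 = z^{(i+1)}$ gives a strip whose interior contains a goal placement and which contains no vertex of $P$ strictly between its bounding planes. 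The total cost is $O(\log n)$ calls of cost $O(n \log n)$ plus the initial sort, i.e.\ $O(n \log^2 n)$.

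I expect the main obstacle to be making the horizontal \textbf{PlaneDecision} rigorous rather than the binary search itself. Two points need care: first, that $f$ need not be differentiable at the plane-maximizer, so the side test must be phrased through one-sided directional derivatives of the piecewise-quadratic function $f$ (or, alternatively, by comparing $g$ at two adjacent breakpoints via plain unimodal search), and one must check these are computable within the $O(n)$ budget once the cross-section and maximizer are known; and second, that $g$ can be flat on a plateau, since the goal placements form a convex set of positive measure, so the search must correctly bracket this plateau, the degenerate case in which the goal placements collapse to a single vertex height being disposed of by the on-plane report that terminates the overall algorithm. Confirming that extracting $P_c$ and evaluating the normal derivative stay within the per-step bound is routine but necessary to meet the stated running time.
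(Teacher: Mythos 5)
Your overall skeleton --- sort the vertex levels of $P$, binary search over them, and use de~Berg et al.'s $O(n\log n)$ planar algorithm \cite{deberg1996} as the oracle at each horizontal plane, with unimodality supplied by \Cref{prop_concavity} --- is exactly the paper's proof, and your running-time accounting matches. The gap is in your decision primitive. You propose to decide the side of $z=c$ by evaluating the one-sided derivatives $\partial_z^{\pm} f$ at the \emph{single} maximizer $(x^{*},y^{*})$ returned by the planar algorithm, invoking the envelope theorem. But the envelope (Danskin) theorem for $g(c)=\max_{x,y}f(x,y,c)$ gives $g'_{+}(c)=\max\,\partial_z^{+}f(x,y,c)$ taken over the \emph{entire} argmax set, which here is a convex set that may be one- or two-dimensional; at a particular maximizer the one-sided derivative can be strictly smaller and have the wrong sign. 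Concretely, let $Q=[0,1]^2$ and let $P$ be the wedge with cross-sections $P_z=[0,\tfrac12+\tfrac{z}{4}]\times[0,\tfrac12]$ for $z\in[0,1]$. Every translate of $Q$ containing $P_c$ is a maximizer at level $c$, and $g(z)=\tfrac12\left(\tfrac12+\tfrac{z}{4}\right)$ is strictly increasing, so the correct report at $c=\tfrac12$ is ``above.'' Yet if the planar algorithm happens to return the maximizer $(x^{*},y^{*})=(-\tfrac38,0)$, whose right edge is flush with the right edge of $P_c$, then $\partial_z^{+}f=0$ and $\partial_z^{-}f=\tfrac18>0$ at that point, so your test certifies $(x^{*},y^{*},c)$ as a goal placement and terminates with value $\tfrac{5}{16}$, while the true maximum is $\tfrac38$. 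Since you have no control over which maximizer \cite{deberg1996} returns, the test is not sound.

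The repair is exactly the alternative you mention only in passing, and it is what the paper does: compare the \emph{values} of the plane maxima at the probed vertex levels and let unimodality of $g$ choose the half in which to recurse (the paper phrases this as the two planes with the largest maxima bounding the desired slice). Alternatively, you can keep the derivative viewpoint but make it sound in the way \Cref{lemma_side_decision} does for general planes: compute the maximum of $f$ over the perturbed planes $z=c\pm\omega$ symbolically over $\mathbb{R}[\omega]/(\omega^3)$. That computes $g(c\pm\omega)$ itself, and therefore automatically realizes the Danskin maximum over all maximizers, instead of differentiating $f$ at one of them. With either repair, the rest of your argument --- including terminating whenever a probed plane is found to contain a goal placement, which disposes of the plateau and vertex-level degeneracies you correctly flag --- goes through and gives the stated $O(n\log^2 n)$ bound.
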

\begin{figure}[ht]
  \centering 
\tdplotsetmaincoords{100}{0}
\begin{tikzpicture}[tdplot_main_coords,thick,scale=1.4,line join=round]
  \begin{scope}
    \coordinate (A) at (-0.5,1,2);
    \coordinate (B) at (1,-1,1);
    \coordinate (C) at (-1.5,-1,0);
    \coordinate (D) at (0,2.5,1);
    \coordinate (E) at (0,-1,-1);
    
    \draw (A) -- (B) -- (E) -- (C) -- cycle;
    \draw (A) -- (D) -- (E);
    \draw (B) -- (D) -- (C);
    \draw[dotted] (B) -- (C);
  \end{scope}
  \begin{scope}[shift={(3.5,0)}]
    \coordinate (A) at (-0.5,1,2);
    \coordinate (B) at (1,-1,1);
    \coordinate (C) at (-1.5,-1,0);
    \coordinate (D) at (0,2.5,1);
    \coordinate (E) at (0,-1,-1);
    \coordinate (DE) at (0,0.75,0);
    \coordinate (BE) at (0.5,-1,0);
    \coordinate (AC) at (-1,0,1);
    \draw [draw=none, fill=topColor, fill opacity=0.5]
    (D) -- (AC) -- (B) -- cycle;
    \draw [draw=none, fill=botColor, fill opacity=0.5]
    (C) -- (DE) -- (BE) -- cycle;
    
    \draw (A) -- (B) -- (E) -- (C) -- cycle;
    \draw (A) -- (D) -- (E);
    \draw (B) -- (D) -- (C);
    \draw[dotted] (B) -- (C);
    
    \draw[dotted] (D) -- (AC) -- (B);
    \draw         (C) -- (DE) -- (BE);
    \draw[dotted] (BE) -- (C);
    
    \draw [draw=none, fill=topColor, fill opacity=0.2]
    (-2,-1.5,1) -- (1.25,-1.5,1) -- (1.75,3,1) -- (-1.5,3,1) -- cycle;
    \draw [draw=none, fill=botColor, fill opacity=0.2]
    (-2,-2,0) -- (1.25,-2,0) -- (1.75,2.5,0) -- (-1.5,2.5,0) -- cycle;
  \end{scope}
  \begin{scope}[shift={(7,0)}]
    \coordinate (A) at (-0.5,1,2);
    \coordinate (B) at (1,-1,1);
    \coordinate (C) at (-1.5,-1,0);
    \coordinate (D) at (0,2.5,1);
    \coordinate (E) at (0,-1,-1);
    
    \coordinate (AC) at (-1,0,1);
    \coordinate (DE) at (0,0.75,0);
    \coordinate (BE) at (0.5,-1,0);
    
    \draw[draw=none] (A) -- (E);
    \draw [draw=none, fill=topColor, fill opacity=0.5]
    (D) -- (AC) -- (B) -- cycle;
    \draw [draw=none, fill=botColor, fill opacity=0.5]
    (C) -- (DE) -- (BE) -- cycle;
    
    \draw (AC) -- (C) -- (DE) -- (BE) -- (B) -- cycle;
    \draw (B) -- (D) -- (C);
    \draw (AC) -- (D) -- (DE);
    \draw[dotted] (B) -- (C) -- (BE);
  \end{scope}
\end{tikzpicture}
\caption{The slice of $P$ with $z \in [z_0, z_1]$.}\label{fig_stage_1}
\end{figure}
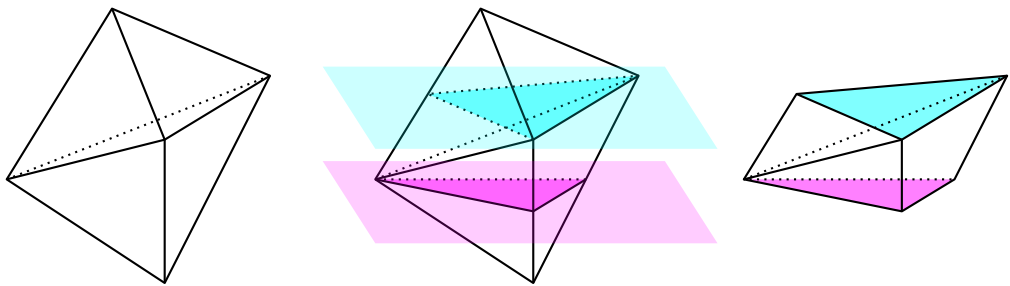
\begin{proof}
    Starting with the median $z$-coordinate of the vertices of $P$, we perform a binary search on the levels containing a vertex of $P$. For a horizontal plane $S$, \cite[Theorem 3.8]{deberg1996} allows us to compute the maximum overlap of $P\cap S$ and $Q$ under translation in $O(n\log n)$-time. The two planes $S_1$ and $S_2$ with the largest maximum values will be the bounding planes for the slice containing a goal placement by the unimodality of $f$. Thus, by a binary search, we can locate this slice in $O(n \log^2 n)$ time.
\end{proof}
By Chazelle's algorithm \cite{chazelle1992}, the convex polyhedron $P' = \{(x, y, z)\in P |z \in [z_0, z_1]\}$ can be computed in $O(n)$ time. From now on, we replace $P$ with $P'$ (see \Cref{fig_stage_1}). Without loss of generality, assume $z_0 = 0$ and $z_1 = 1$.

The region in the configuration space where $|P \cap (Q+v)| > 0$ is the Minkowski sum $P - Q$. Since $P$ only has two levels $P_0 = \{(x, y, z) \in P | z = 0\}$ and $P_1 = \{(x, y, z) \in P | z = 1\}$ that contain vertices, the Minkowski sum $P - Q$ is simply the convex hull of $(P_0 - Q) \cup (P_1 - Q)$, which has $O(n)$ vertices. We can compute $P_0 - Q$ and $P_1 - Q$ in $O(n)$ time and compute their convex hull in $O(n \log n)$ time by Chazelle's algorithm \cite{chazelle1993b}.

\subsection{PlaneDecision}
With the simplification of the problem in Stage $1$, we now show that the subroutine \textbf{PlaneDecision} can be performed in $O(n \log n)$ time. Let $S$ be a fixed plane in the configuration space. We call a translation $v$ that achieves $\operatorname{max}_{v \in S} f(v)$ a \textit{good placement}. First, we can compute the intersection of $S$ with $P - Q$ in $O(n)$ time by Chazelle's algorithm \cite{chazelle1992}. If the intersection is empty, we just report the side of $S$ containing $P - Q$. From now on assume this is not the case. 

The following lemma shows that \textbf{PlaneDecision} runs in the same time bound as the algorithm that just finds the maximum of $f$ on a plane. 
\begin{lemma}\label{lemma_side_decision}
    Suppose we can compute $\operatorname{max}_{v \in S} f(v)$ for any plane $S \subset \mathbb{R}^3$ in time $T$, then we can perform \textbf{PlaneDecision} for any plane in time $O(T)$. 
\end{lemma}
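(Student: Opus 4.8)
The statement (Lemma \ref{lemma_side_decision}) says: if we can compute $\max_{v\in S} f(v)$ on any plane $S$ in time $T$, then we can perform \textbf{PlaneDecision}—deciding the relative position of the set of goal placements to $S$—in time $O(T)$. The plan is to reduce the three-dimensional side-decision problem to a local analysis of the overlap function near a \emph{good placement} (a maximizer of $f$ restricted to $S$).

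\begin{proof}[Proof proposal]
The plan is to first compute a good placement $v_0 \in S$ achieving $\max_{v \in S} f(v)$, which by hypothesis takes time $T$. Once I have $v_0$, the key observation is that deciding which side of $S$ contains a goal placement amounts to asking in which direction $f$ increases as we leave $S$ through $v_0$. By \Cref{cor_unimodality}, $f$ restricted to any line is strictly unimodal, and by \Cref{prop_concavity} the function $f^{1/2}$ is concave on its support; in particular $f$ is concave-like and its global maximizers form a convex set. So I would examine the behavior of $f$ along a line through $v_0$ transverse to $S$—say $v_0 + t\,n$ where $n$ is a normal vector to $S$.

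The core idea is a first-derivative test. If $v_0$ lies in the interior of the support of $f$ and $f$ is differentiable there, then the sign of the directional derivative $\partial f / \partial n$ at $v_0$ tells me which open half-space contains the goal placement: if the derivative is positive, goal placements lie on the $+n$ side; if negative, on the $-n$ side; and if it is zero, then $v_0$ is itself a global maximizer (because $v_0$ already maximizes $f$ within $S$, and concavity along the transverse line forces it to be the global max), so we can terminate reporting that $S$ contains a goal placement. The point is that this directional derivative can be evaluated from purely local data at $v_0$, so once $v_0$ is known, the decision costs only $O(1)$ beyond the $O(T)$ already spent, giving the claimed $O(T)$ bound.

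The main obstacle I anticipate is the non-smoothness of $f$: the overlap function is only piecewise quadratic, and $v_0$ may land on an event polygon where $f$ is not differentiable, or on the boundary of $\operatorname{supp}(f)$ where the one-sided derivatives need not agree. I would handle this by working with one-sided directional derivatives $D^+ = \lim_{t \to 0^+} \tfrac{f(v_0 + tn) - f(v_0)}{t}$ and $D^- = \lim_{t \to 0^+} \tfrac{f(v_0 - tn) - f(v_0)}{t}$. By strict unimodality along the transverse line, the global maximum over this line occurs on the $+n$ side iff $D^+ > 0$, on the $-n$ side iff $D^- > 0$, and at $v_0$ itself iff both $D^+ \le 0$ and $D^- \le 0$ (in which case $v_0$ is a goal placement since it also maximizes over $S$). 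Thus I only need to compute these two one-sided slopes, each determined by the local combinatorial structure of $P \cap (Q + v_0)$, i.e.\ by the set of active face–edge incidences, which can be read off when computing $f(v_0)$.

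The remaining subtlety is establishing that this transverse-line test faithfully reports the side containing a genuine global goal placement in $\mathbb{R}^3$, not merely the maximizer along a single line. This is where concavity of $f^{1/2}$ does the essential work: since $v_0$ maximizes $f$ on the whole plane $S$, and the set of global maximizers of $f$ is convex, a standard separating-argument shows that if $D^+ > 0$ then every global goal placement lies in the open half-space on the $+n$ side (and symmetrically for $D^-$). Concretely, if some goal placement lay on the $-n$ side while $f$ increases in the $+n$ direction at $v_0$, convexity of the maximizer set together with the fact that $v_0$ is optimal within $S$ would yield a contradiction with unimodality along an appropriate line. I expect verifying this separation rigorously—tracking the interaction between optimality within $S$ and the transverse derivative—to be the technically delicate part, while the algorithmic cost analysis is immediate.
\end{proof}
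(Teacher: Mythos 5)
Your proposal takes a genuinely different route from the paper: the paper never inspects derivatives at the good placement $v_0$, but instead re-runs the time-$T$ maximization on the two planes $S_\pm$ parallel to $S$ at infinitesimal distance $\omega$, computing symbolically over $\mathbb{R}[\omega]/(\omega^3)$, and compares $A_\pm=\max_{v\in S_\pm}f(v)$ against $A=\max_{v\in S}f(v)$. This difference is not cosmetic: your replacement of the re-maximization by a local first-derivative test along the fixed normal line through $v_0$ has a genuine gap, and it sits exactly at the step you flagged as ``technically delicate.'' The decision rule ``$D^+\le 0$ and $D^-\le 0$ implies $v_0$ is a global maximizer'' is valid when $f$ is differentiable at $v_0$, but it fails when $v_0$ lies on an event polygon transverse to $S$ --- which is the typical case, since maximizers of piecewise-quadratic functions tend to lie on cell boundaries. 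None of the properties you invoke (concavity of $f^{1/2}$ on its support via \Cref{prop_concavity}, unimodality via \Cref{cor_unimodality}, optimality of $v_0$ within $S$) excludes the following configuration. Take $S=\{z=0\}$, $n=e_3$, and
\[
h(x,y,z)=\min\bigl(1-x+3z,\;1+x-\tfrac{z}{2},\;2-z\bigr),\qquad f=\max(h,0)^2,
\]
so that $f^{1/2}=h$ is concave where positive. On $S$ we have $h=\min(1-x,1+x,2)=1-|x|$, so $v_0=(0,0,0)$ maximizes $f$ on $S$ with $f(v_0)=1$. Along the normal line, $h(0,0,t)=1-t/2$ and $h(0,0,-t)=1-3t$ for small $t>0$, so both one-sided derivatives are strictly negative and your test reports that $v_0$ is a goal placement. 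But along the ridge $x=7z/4$, $y=0$, where the first two pieces agree, $h=\min(1+5z/4,\,2-z)$ attains the value $14/9>1$ at $z=4/9$; indeed $(7/9,0,4/9)$ is the global maximum. So every goal placement lies strictly on the $+n$ side, and \textbf{PlaneDecision} must report that side: your test returns a wrong answer.

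The mechanism is that the maximizer of $f$ on the parallel plane $S+tn$ drifts tangentially (here to $x=7t/4$) as $t$ grows; a probe confined to the fixed line $v_0+tn$ cannot detect this, whereas re-maximizing over the perturbed plane $S_+$ does --- which is precisely why the paper's proof re-invokes the full time-$T$ oracle on $S_\pm$ rather than doing an $O(1)$ local test. Your appeal to convexity of the set of maximizers and to unimodality cannot close this hole: superadditivity of directional derivatives of concave functions runs in the wrong direction, so optimality of $v_0$ in $S$ together with a negative normal derivative does not separate the goal placements from the $+n$ side. (Your argument for the case $D^+>0$ is sound, and the whole scheme would be correct if $v_0$ were guaranteed to lie in the interior of a single quadratic cell, but that cannot be assumed.) A secondary, more minor issue: the lemma's hypothesis is only an oracle returning $\max_{v\in S}f(v)$, so ``reading off the active face--edge incidences'' at $v_0$ is extra machinery your reduction would need to justify; the paper's formulation avoids this by treating the maximization algorithm as a black box run on perturbed inputs.
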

\begin{proof}
    The idea is to compute $\operatorname{max}_{v \in S'} f(v)$ for certain $S'$ that are perturbed slightly from $S$ to see in which direction relative to $S$ does $f$ increase.
    
    We compute over an extension of the reals $\mathbb{R}[\omega]/(\omega^3)$, where $\omega > 0$ is smaller than any real number. Let $A > 0$ be the maximum of $f$ over a plane $S$. Let $S_+$ and $S_-$ be the two planes parallel to $S$ that have distance $\omega$ from $S$. We compute $A_+ = \operatorname{max}_{v \in S_+} f(v)$ and $A_- = \operatorname{max}_{v \in S_-} f(v)$ in $O(T)$ time. Since $f$ is piecewise quadratic, $A_+$ and $A_-$ as symbolic expression will only involve quadratic terms in $\omega$. Since $f$ is strictly unimodal on $P - Q$, there are three possibilities:
    \begin{enumerate}
        \item If $A_+ > A$, then halfspace on the side of $S_+$ contains the set of goal placements.
        \item If $A_- > A$, then halfspace on the side of $S_-$ contains the set of goal placements.
        \item If $A \geq A_+$ and $A \geq A_-$, then $A$ is the global maximum of $f$.
    \end{enumerate}
    Thus, in $O(T)$ time, we can finish \textbf{PlaneDecision}.
\end{proof}

Finding a good placement on $S$ is similar to finding a goal placement on the whole configuration space. $S$ is partitioned into cells by the intersections of event polygons with $S$. We need to find a region on $S$ containing a good placement that does not intersect any event polygons. 

We present a subroutine \textbf{LineDecision} that finds, for a line $l \subset S$, the relative position of the set of good placements on $S$ to $l$.
\begin{proposition}\label{prop_line_decision}
    For a line $l \subset S$, we can perform \textbf{LineDecision} in $O(n)$ time.
\end{proposition}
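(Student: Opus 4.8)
The strategy mirrors the proof of \Cref{lemma_side_decision} one dimension lower. Work inside the plane $S$ with affine coordinates $(t,s)$ chosen so that $l = \{s = 0\}$, and let $l_{+}$ and $l_{-}$ be the two lines parallel to $l$ in $S$ at symbolic distance $\omega$, where $\omega > 0$ is infinitesimal (computing in $\mathbb{R}[\omega]/(\omega^3)$ as before). The set of good placements on $S$ is convex, being the set of maximizers of the concave function $f^{1/2}$ (\Cref{prop_concavity} with $d = 2$, as $Q+v$ is two-dimensional). Hence, granting an $O(n)$-time computation of $\max_{v\in l'} f$ for any line $l' \subset S$, I would compute $A = \max_{v\in l} f$, $A_{+} = \max_{v\in l_{+}} f$, and $A_{-} = \max_{v\in l_{-}} f$, and compare: if $A_{+} > A$ the good placements lie strictly on the $l_{+}$ side, if $A_{-} > A$ they lie on the $l_{-}$ side, and otherwise $l$ itself contains a good placement. (If $l$ misses $S \cap (P-Q)$ entirely, which is detectable in $O(n)$ time, then $f \equiv 0$ on $l$ and I report the side containing $S \cap (P-Q)$.) So it remains to maximize $f$ along a single line in $O(n)$ time.

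Along $l' = \{v_0 + tw\}$, the overlap is the intersection of the translate $Q + (x(t),y(t))$ with the horizontal cross-section $P_{z(t)}$ of $P$; because $P$ has vertices only on the planes $z = 0$ and $z = 1$, the vertices of $P_{z(t)}$ move linearly in $t$, so $f|_{l'}$ is continuous, piecewise quadratic, and strictly unimodal by \Cref{cor_unimodality}. Its breakpoints are the parameters at which the overlap changes combinatorially, i.e.\ where a vertex of one polygon crosses an edge of the other; by convexity and the monotonicity of the motion there are only $O(n)$ of them, and they can be enumerated in $O(n)$ time. I would then locate the maximizer by a Megiddo-style prune-and-search directly on these breakpoints: repeatedly take the median surviving breakpoint, read off the sign of the one-sided derivatives of $f|_{l'}$ there, and discard the half not containing the maximum by strict unimodality; once the maximizer is pinned to a single quadratic piece, it is found by elementary calculus.

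The main obstacle is keeping the total cost at $O(n)$ rather than $O(n\log n)$, since a single from-scratch evaluation of the overlap, or of its derivative, at a query parameter already costs $\Theta(n)$, and there are $\Theta(\log n)$ pruning rounds. The resolution, in the spirit of Megiddo, is never to recompute the overlap globally: the derivative of $f|_{l'}$ at a parameter is a local sum of contributions from the edges of $Q$ and of $P_{z(t)}$ that currently bound the overlap region, so I would maintain only the bounding chains relevant to the surviving search interval and update them incrementally as the interval shrinks, so that the work telescopes into a geometric series summing to $O(n)$. The remaining care is bookkeeping: handling breakpoints at which $f|_{l'}$ fails to be twice differentiable via one-sided derivatives, and carrying the infinitesimal $\omega$ through these computations so that the three comparisons above are decided correctly.
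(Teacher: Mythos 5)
Your outer framework is exactly the paper's: perturb $l$ to two parallel lines $l_{\pm}$ at infinitesimal distance $\omega$, compute the maximum of $f$ on each over $\mathbb{R}[\omega]/(\omega^3)$, compare the three values to decide the side (and report the side of $S \cap (P-Q)$ when the maximum is zero). The divergence is in the one step that carries all the difficulty: maximizing $f$ along a single line in $O(n)$ time. The paper does this by a reduction, not by a from-scratch search: parameterizing $l = p + vt$, it forms the convex polyhedron $I = P \cap (Q + l)$, whose horizontal cross-section at height $t$ has area exactly $f(p+vt)$; since $I$ is the intersection of two convex polyhedra with $O(n)$ vertices, it is computable in $O(n)$ time by Chazelle's algorithm \cite{chazelle1992}, and the maximum cross-section is then found in $O(n)$ time by \cite[Theorem 3.2]{avis1996}. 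Your proposal instead tries to re-derive this linear-time maximization by prune-and-search on the $O(n)$ breakpoints, and that is where the proof has a hole.

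The gap is the telescoping claim. In your scheme the quantity that halves each round is the number of surviving \emph{breakpoints}, but the per-round work is governed by the size of the \emph{bounding chains}, and these two are unrelated: the overlap polygon can have $\Theta(n)$ edges, all of which persist across the entire surviving interval (so all of them contribute to $f'$ at every query point) while generating no events inside it. Hence ``maintaining only the chains relevant to the surviving interval'' still keeps $\Theta(n)$ data in every round, each derivative evaluation costs $\Theta(n)$, and with $\Theta(\log n)$ rounds you get $O(n \log n)$, not $O(n)$ --- which would propagate to $O(n \log^2 n)$ for \textbf{PlaneDecision} and $O(n \log^3 n)$ for \Cref{thm_algo}. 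To rescue your route you need the idea your sketch omits: a feature that generates no event inside the surviving interval contributes a \emph{fixed} quadratic (resp.\ linear) polynomial to $f$ (resp.\ $f'$) on that whole interval, so all such features can be summed once into a constant-size symbolic summary and discarded, leaving per-round work proportional to the number of features that still have events --- which does halve. (Relatedly, your assertion that the $O(n)$ breakpoints ``can be enumerated in $O(n)$ time'' is unsupported in your setup, where no intersection polytope is ever built; the clean way to obtain them is precisely to construct $I$.) Either supply that aggregation argument in full, or do what the paper does and hand the entire step to \cite{chazelle1992} together with \cite[Theorem 3.2]{avis1996}.
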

\begin{figure}[ht]
  \centering 
  \tdplotsetmaincoords{110}{0}
  \begin{tikzpicture}[tdplot_main_coords,thick,scale=1.4,line join=round]
    \begin{scope}
      \coordinate (B) at (1,-1,2);
      \coordinate (C) at (-1.5,-1,0);
      \coordinate (D) at (0,2.5,2);
      
      \coordinate (AC) at (-1,0,2);
      \coordinate (DE) at (0,0.75,0);
      \coordinate (BE) at (0.5,-1,0);
      
      \draw (AC) -- (C) -- (DE) -- (BE) -- (B) -- cycle;
      \draw (B) -- (D) -- (C);
      \draw (AC) -- (D) -- (DE);
      \draw[dotted] (B) -- (C) -- (BE);

      \node at (-0.2,1.5) {$P$};
    \end{scope}
    \begin{scope}[shift={(2.5,-0.5)}]
      \coordinate (A0) at ( 0  ,0,2);
      \coordinate (B0) at ( 0.5,2,2);
      \coordinate (C0) at ( 2  ,0,2);
      \coordinate (A1) at (-0.5,0,0);
      \coordinate (B1) at ( 0  ,2,0);
      \coordinate (C1) at ( 1.5,0,0);
      
      \coordinate (A2) at ( 0.0625,0, 2.25);
      \coordinate (B2) at ( 0.5625,2, 2.25);
      \coordinate (C2) at ( 2.0625,0, 2.25);
      \coordinate (A3) at (-0.5625,0,-0.25);
      \coordinate (B3) at (-0.0625,2,-0.25);
      \coordinate (C3) at ( 1.4375,0,-0.25);
      
      \draw[dotted] (A0) -- (B0) -- (C0) -- cycle;
      \draw[dotted] (A1) -- (B1) -- (C1) -- cycle;
      \draw         (A2) -- (A3);
      \draw         (B2) -- (B3);
      \draw         (C2) -- (C3);
      
      \node at (0.675,2) {$Q+l$};
    \end{scope}
  \end{tikzpicture}
  \caption{The convex polyhedron $I$ is formed by interesecting $P$ and $(Q+l)$.}\label{fig_line_decision}
\end{figure}
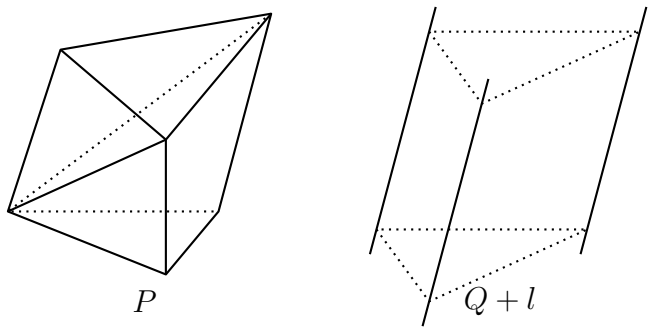
\begin{proof}
    First, we compute $\operatorname{max}_{v \in l} f(v)$ and a vector achieving the maximum. We parameterize the line $l$ by $p+vt$ where $t$ is the parameter and $p, v \in \mathbb{R}^3$. The horizontal cross-section of $I = P\cap(Q+l)$ at height $t$ has area $f(p+vt)$. Since $I$ is the intersection of two convex polytopes with $O(n)$ vertices (see \Cref{fig_line_decision}), Chazelle's algorithm \cite{chazelle1992} computes $I$ in $O(n)$ time. Then, \cite[Theorem 3.2]{avis1996} computes the maximum cross-section in $O(n)$ time.
    
    Now, by the same argument and method as in the proof of \Cref{lemma_side_decision}, we can finish \textbf{LineDecision} in $O(n)$ time. In the case where $\operatorname{max}_{v \in l} f(v) = 0$, we report the side of $l$ containing $S \cap (P - Q)$.
\end{proof}
Whenever our subroutine \textbf{LineDecision} reports a good placement is found on a line, we can let the algorithm terminate. Thus, we can assume it always reports a half-plane of $S$ containing a good placement. 

We now present \textbf{PlaneDecision}. If $S$ is horizontal, then we only need to find the maximum overlap of the convex polygons $P \cap S$ and $Q$ using De Berg et al.'s algorithm \cite{deberg1996}, which takes $O(n \log n)$ time. Thus, we assume $S$ is non-horizontal.

\begin{algorithm}[ht]
    \DontPrintSemicolon
    \SetKwInOut{Input}{input}\SetKwInOut{Output}{output}
    \Input{A plane $S \subset \mathbb{R}^3$}
    \Output{A translation $v \in S$ maximizing the area $|P \cap (Q + v)|$}
    Compute $S \cap (P - Q)$ and set it to be our initial target region.\;
    Locate a strip on $S$ containing a good placement whose interior intersects $O(n)$ event polygons.\;
    Recursively construct a $(1/2)$-cutting of the strip to find a triangle containing a good placement that does not intersect any event polygon\;
    \caption{Pseudocode for \textbf{PlaneDecision}}\label{pseudo_plane}
\end{algorithm}

As in \Cref{pseudo_plane}, we break down \textbf{PlaneDecision} into three steps. We have already explained Step $1$, where we compute $S \cap (P - Q)$, so we begin with Step $2$.

\subsubsection{PlaneDecision: Step 2}
We want to find a strip on $S$ strictly between $z = 0$ and $z = 1$ that intersects $O(n)$ event polygons. Since there are no vertices of $P$ with $z$-coordinate in the interval $(0, 1)$, there are no event polygons of type~\ref{type1} in this range, and we will only need to consider event polygons of type~\ref{type2} and type~\ref{type3}. 

We look at the intersection points of $S$ with the edges of the event polygons. These edges come from the set $\{e_i - v_j|e_i \text{ non-horizontal edge of }P, \, v_j \text{ vertex of } Q\}$. Without loss of generality, assume that $S$ is parallel to the $y$-axis. We are interested in the $z$-coordinates of the intersections, so we project everything into the $xz$-plane. Then, $S$ becomes a line, which we denote by $l_S$, and each edge $e_i - v_j$ becomes a segment whose endpoints lie on $z = 0$ and $z = 1$. Suppose each edge $e_i$ projects to a segment $s_i$, and each $v_j$ projects to a point $x_j$ on the $x$-axis. Then, we get $O(n^2)$ segments $s_i - x_j$ with endpoints on $z = 0$ and $z = 1$, and the line $l_S$ that intersect them in some places. 
\begin{lemma}\label{lemma_plane_step_2.1}
    In $O(n \log n)$ time, we can locate a strip $R = \{(x, y, z)\in S | z \in [z_0, z_1]\}$ whose interior contains a good placement and intersects none of the edges of the event polygons.
\end{lemma}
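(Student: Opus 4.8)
The plan is to reduce this to a one-dimensional search for the optimal $z$-coordinate, carried out simultaneously over $n$ presorted lists and guided by \textbf{LineDecision}. Working in the $xz$-plane as set up above, the strip $R$ is exactly a $z$-interval $[z_0,z_1]$ along $l_S$, and an edge $e_i - v_j$ of an event polygon meets $R$ precisely when the crossing $l_S \cap (s_i - x_j)$ exists and its $z$-coordinate, call it $z_{ij}$, lies in $[z_0,z_1]$. Hence the entire task is to find two consecutive values among the $z_{ij}$ that bracket the $z$-coordinate of a good placement; the open strip between them is then free of every event-polygon edge. The oracle driving the search is \textbf{LineDecision} applied to the horizontal line $\{z=c\}\cap S$: by \Cref{prop_line_decision} it runs in $O(n)$ time and reports which of the two sides of this line contains the good placements on $S$ (or that one lies on the line, in which case we terminate).

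The structural fact that makes this fast is that for a fixed non-horizontal edge $e_i$ the segments $s_i - x_j$ are horizontal translates of one another, so the intersection point $l_S \cap (s_i - x_j)$ slides monotonically along $l_S$ as $x_j$ varies. Consequently, if we sort the vertices of $Q$ by their projected coordinate $x_j$ (an $O(n\log n)$ step), then for each $i$ the values $\{z_{ij}\}_j$ already appear in sorted order, and the sub-range of indices $j$ for which the crossing genuinely lies on the segment (i.e.\ $z_{ij}\in[0,1]$) is contiguous and located in $O(\log n)$ time. We thus obtain, in $O(n\log n)$ total time, $n$ sorted arrays $A_1,\dots,A_n$ of total size $N = O(n^2)$, each entry computable in $O(1)$.

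It remains to locate the position of the optimal $z$ in $A_1\cup\cdots\cup A_n$ using as few oracle calls as possible. I would use a weighted-median-of-medians search: maintaining an active sub-range in each $A_i$, in each round I take the median $m_i$ of each active range weighted proportionally to the range's current size, compute the weighted median $c$ of the $m_i$ in $O(n)$ time by \Cref{lemma_weighted_median}, and call \textbf{LineDecision} on $\{z=c\}\cap S$. If the oracle reports that a good placement lies above $c$, then in every array with $m_i \le c$ the entire lower half of its active range can be discarded; since those arrays carry at least half the total active weight, at least a quarter of all active entries is eliminated. The active total therefore shrinks by a constant factor each round, so after $O(\log N)=O(\log n)$ rounds every array is resolved into its bounds $\operatorname{lower}(i)$ and $\operatorname{upper}(i)$; each round costs $O(n)$ for the medians plus one $O(n)$ oracle call, for $O(n\log n)$ overall. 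Setting $z_0 = \max_i z_{i,\operatorname{lower}(i)}$ and $z_1 = \min_i z_{i,\operatorname{upper}(i)}$ yields the strip. (Binary searching on rank while selecting the rank-$k$ element via \Cref{lemma_sorted_matrix} gives the same bound.)

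The main obstacle is the geometric bookkeeping rather than the search itself: I must verify the monotonicity of $z_{ij}$ in $x_j$, handle degenerate orientations (edges $e_i$ whose projection is parallel to $l_S$, and vertices $v_j$ whose crossing falls outside $[0,1]$), and confirm the equivalence ``$R$ meets no event-polygon edge $\iff$ $(z_0,z_1)$ contains no $z_{ij}$,'' so that the output strip genuinely isolates a single band of the arrangement on $S$. Once these are pinned down, correctness of the prune step is immediate, since \textbf{LineDecision} never discards the side containing a good placement.
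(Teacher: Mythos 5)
Your proof is correct, but it takes a genuinely different route from the paper's. The paper forces a two-dimensionally \emph{sorted matrix} structure: it splits the segments into four groups (front/back chains of $P$, then by slope relative to $l_S$), sorts each chain of non-crossing segments by the sum of endpoint $x$-coordinates, and clamps the intersection values to $[0,1]$ so that both rows and columns of the resulting matrices are monotone; it then binary-searches on rank, using \Cref{lemma_sorted_matrix} to extract the median remaining entry in $O(n)$ time before each call to \textbf{LineDecision}. You instead exploit only row-sortedness --- for a fixed edge $e_i$ the segments $s_i - x_j$ are horizontal translates, so $z_{ij}$ is monotone in $x_j$ --- and replace matrix selection by a weighted-median-of-row-medians prune, discarding at least a quarter of the active entries per oracle call; this is essentially the paper's own \Cref{thm_prune_and_search} philosophy specialized to one dimension, and it sidesteps the most delicate part of the paper's argument (verifying column monotonicity, which is exactly what requires the four-way split and the clamping convention). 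Both yield $O(\log n)$ oracle calls and $O(n\log n)$ total time. Two remarks. First, your parenthetical fallback ``selecting the rank-$k$ element via \Cref{lemma_sorted_matrix}'' is not available to you as stated: that lemma needs rows \emph{and} columns sorted, which your per-edge arrays do not provide without re-importing the paper's group structure, so the weighted-median argument should be regarded as the actual proof, not an optional variant. Second, a downstream cost worth noting: the paper's sorted matrices are reused in \Cref{lemma_plane_step_2.2} (binary search along columns to find, per vertex of $Q$, the event polygons crossing the strip); with only row-sorted arrays you would have to rebuild something equivalent to column-sortedness there, so the paper's heavier construction in this lemma pays for itself later, while yours is leaner but purely local.
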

\begin{proof}
    By our setup, we want to find a segment on $l_S$ whose interior does not intersect any segment of the form $s_i - x_j$. 
    
    Since $s_i$ are projections of edges of a convex polyhedron, we can separate them into two sets such that edges from the same set do not intersect (we take the segments that are projections of the edges of the ``front'' side and ``back'' side, respectively), allowing the two extremal edges to appear in both sets. We will process each set separately. This can be done by identifying the extremal points points on the top and bottom faces of $P$ in the $x$ direction, which can be done in $O(\log n)$ time. 
    
    For a set of non-intersecting segments, since they all have endpoints on the line $z = 0$ and $z = 1$, we can sort them by the sum of the $x$-coordinates of their two endpoints. This takes $O(n \log n)$ time. We further separate these segments into two sets by slope: those that make a smaller angle than $l_S$ with the positive $x$-axis, and those that make a larger angle. 
    
    Suppose we now have a set of non-intersecting segments that all make larger angles than $l_S$ with the positive $x$-axis, $s_1, s_2, \ldots, s_m$, where $m = O(n)$. We also sort the projections of the vertices of $Q$, $x_1, \ldots, x_q$, in decreasing order by $x$-coordinate. This can be done in $O(\log n)$ time by identifying the extremal vertices of $Q$ in the $x$-direction.
    
    Let $z_{ij}$ be the $z$-coordinate of the intersection of the line containing $s_i - x_j$ with $l_S$. Let $M$ be an $m\times q$ matrix with $(i,j)$-th entry given by
    \[
        M_{ij} = 
        \begin{cases}
        0 & z_{ij} \leq 0 \\
        z_{ij} & z_{ij} \in (0,1) \\
        1 & z_{ij} \geq 1
        \end{cases}.
    \]
    We claim that $M$ is a sorted matrix. To see this, consider any fixed row $r$ and indices $i < j$. Then the line containing $s_r - x_i$ lies strictly to the left of the line containing $s_r - x_j$ since $x_i > x_j$. This means that $z_{ri} < z_{rj}$. Thus, every row of $M$ is in increasing order. Similarly, for a fixed column $c$ and indices $i<j$, the segment $s_i - x_c$ lies strictly to the left of the segment $s_j - x_c$. Then, if they both intersect $l_S$, we must have $z_{ic}<z_{jc}$. If $s_i - x_c$ does not intersect $l_S$ and $s_j - x_c$ does, then $s_i - x_c$ must lie on the left of $l_S$ and thus $M_{ic} = a < z_{jc} = M_{jc}$. Similarly, if $s_i - x_c$ intersects $l_S$ and $s_j - x_c$ does not, then $s_j - x_c$ must lie on the right of $l_S$ and thus $M_{ic} = z_{ic} < b = M_{jc}$. If they both do not intersect $l_S$, then still $M_{ic} \leq M_{jc}$ since it is impossible to have $M_{ic} = b$ and $M_{jc} = a$. This proves our claim. 
    
    By \Cref{lemma_sorted_matrix}, we can find the $k$-th smallest value in $M$ in $O(m + q) = O(n)$ time. Thus, we can perform a binary search on these $z$-coordinates of the intersections of the edges $e_i - v_j$ with $S$. Each time we perform a \textbf{LineDecision} on the line with the median $z$-coordinate of the remaining entries to eliminate half of the intersections. After $O(\log n)$ iterations or $O(n \log n)$ time, we find a strip on $S$ containing a good placement that contains no intersections with this group of edges. 
    
    We repeat the same procedure for the other three groups and compute the intersection of the four strips to find a strip containing a good placement that contains no intersections with any edge of the event polygons.
\end{proof}

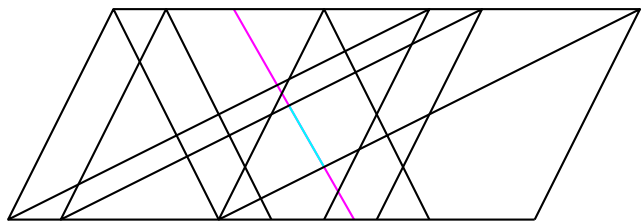
\begin{figure}[ht]
  \centering 
  \begin{tikzpicture}[thick,scale=1.4,line join=round]
    \draw [color=botColor] (23/7,0) -- (15/7,2);
    \draw [color=topColor] (3,1/2) -- (8/3,13/12);
    \foreach \vQx in {0,0.5,2}{
      \begin{scope}[shift={(\vQx,0)}]
        \coordinate (A) at (0,0);
        \coordinate (B) at (1,2);
        \coordinate (C) at (2,0);
        \coordinate (D) at (3,0);
        \coordinate (E) at (4,2);
        
        \draw (D) -- (E) -- (A) -- (B) -- (C);
      \end{scope}}
    \draw (0,0) -- (5,0);
    \draw (1,2) -- (6,2);
  \end{tikzpicture}
  \caption{Projecting the configuration space onto the $xz$-plane. The projection of $S$ is the magenta line segment, and the projection of the strip $R$ obtained form \Cref{lemma_plane_step_2.1} is the cyan line segment.}\label{fig_plane_decision}
\end{figure}

Our current target region, the strip $R$ we obtained from \Cref{lemma_plane_step_2.1} (see \Cref{fig_plane_decision}), intersects few event polygons and we can compute them efficiently.

\begin{lemma}\label{lemma_plane_step_2.2}
    The interior of the region $R$ intersects $O(n)$ event polygons, and we can compute them in $O(n \log n)$ time. 
\end{lemma}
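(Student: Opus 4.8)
The plan is to exploit the defining property of $R$ furnished by \Cref{lemma_plane_step_2.1}: its interior meets no edge of any event polygon. I claim this forces every event polygon that meets the interior of $R$ to cross the strip all the way from the line $z = z_0$ to the line $z = z_1$. Indeed, each event polygon $F$ is planar, so $F \cap S$ is a single chord, and the two endpoints of this chord lie on edges of $F$. Since no edge of an event polygon enters the interior of $R$, both endpoints have $z$-coordinate $\le z_0$ or $\ge z_1$; because $z$ varies linearly along the chord, a chord with both endpoints on the same side of the strip cannot enter the open strip. Hence if $F$ meets the interior of $R$, one endpoint has $z \le z_0$ and the other $z \ge z_1$, so $F \cap S$ crosses the horizontal line $\ell^{*} = \{(x,y,z) \in S \mid z = t^{*}\}$ at the midheight $t^{*} = (z_0 + z_1)/2 \in (0,1)$. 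It therefore suffices to bound and compute the event polygons whose trace on $S$ meets the single line $\ell^{*}$; recall also that only types~\ref{type2} and~\ref{type3} occur in this $z$-range.

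Next I would reduce the count on $\ell^{*}$ to a two-dimensional statement about translating convex polygons. Write $P_{t^{*}}$ for the horizontal cross-section $P \cap \{z = t^{*}\}$, a convex polygon with $O(n)$ vertices whose edges correspond to the side faces of $P$ and whose vertices correspond to the non-horizontal edges of $P$. The line $\ell^{*}$ parameterizes the translations $(x(t^{*}), y, t^{*})$ with $y \in \mathbb{R}$, along which the overlap equals $|P_{t^{*}} \cap (Q + (x(t^{*}), y))|$. A type~\ref{type2} polygon $F - v_j$ meets $\ell^{*}$ exactly when the vertical trajectory traced by $v_j$ as $y$ varies crosses the edge of $P_{t^{*}}$ coming from the face $F$, and a type~\ref{type3} polygon $e_i - e'_k$ meets $\ell^{*}$ exactly when the vertical trajectory of the vertex of $P_{t^{*}}$ coming from $e_i$ crosses the edge $e'_k$ of $Q$. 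The key point is that, by convexity, the vertical line traced by any fixed point meets a convex polygon in a single interval, hence crosses its boundary at most twice. Summing over the $O(n)$ vertices $v_j$ of $Q$ (for type~\ref{type2}) and the $O(n)$ vertices of $P_{t^{*}}$ (for type~\ref{type3}) yields at most $O(n)$ event polygons meeting $\ell^{*}$, and a fortiori at most $O(n)$ meeting the interior of $R$.

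Finally, I would turn this counting argument into the claimed $O(n\log n)$ computation. First compute $P_{t^{*}}$ in $O(n)$ time (for instance by Chazelle's algorithm \cite{chazelle1992}), together with the face of $P$ labelling each of its edges and the edge of $P$ labelling each of its vertices. Then, for each vertex $v_j$ of $Q$, find by binary search on the convex polygon $P_{t^{*}}$ the at most two edges pierced by the vertical line through $v_j$, which yields the corresponding type~\ref{type2} polygons $F - v_j$; symmetrically, for each vertex of $P_{t^{*}}$ find by binary search on $Q$ the at most two edges $e'_k$ pierced by its vertical line, yielding the type~\ref{type3} polygons $e_i - e'_k$. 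Each query costs $O(\log n)$, for $O(n\log n)$ overall, and a final $O(1)$ test per candidate---whether its chord on $S$ actually spans $[z_0, z_1]$---discards the polygons that do not meet the interior of $R$.

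I expect the main obstacle to be the two structural claims rather than the bookkeeping: first, the reduction to a single height $t^{*}$, which relies essentially on the edge-free property of $R$ together with the linearity of $z$ along a chord; and second, the $O(n)$ bound, which rests on the convexity fact that a vertical line crosses a convex polygon's boundary at most twice. Some care is also needed for degeneracies (vertical edges of $P_{t^{*}}$ or of $Q$, or chords tangent to the strip boundaries), which I would handle by the usual symbolic perturbation.
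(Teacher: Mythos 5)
Your proposal is correct, and its core convexity argument is the same one the paper relies on, but you package both the counting and the computation differently. The paper argues per object in 3D: for each vertex $v$ of $Q$ it looks at the convex polygon $\partial(P-v)\cap S$ and notes that, since no vertices of that polygon (i.e.\ no edges of event polygons) lie in the open strip, at most two of its edges --- one on the front side, one on the back side --- can cross $R$; symmetrically, for each edge $e$ of $P$ the prism $e-Q$ meets the strip in at most two faces. It then \emph{computes} these polygons by binary search inside the sorted matrices already built in the proof of \Cref{lemma_plane_step_2.1} (searching a column per vertex of $Q$ for type~\ref{type2}, a row per edge of $P$ for type~\ref{type3}). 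You instead make the spanning property explicit (every event polygon meeting the open strip must cross from $z\le z_0$ to $z\ge z_1$, hence meets the mid-height line $\ell^{*}$), collapse the whole problem to that single line, and count/locate crossings via the two-dimensional convex polygons $P_{t^{*}}$ and $Q$, with a plain $O(\log n)$ binary search on a convex chain per query. Both routes give $O(n)$ polygons and $O(n\log n)$ time; yours is more self-contained and geometrically transparent (no reliance on the sorted-matrix machinery, just point location on convex polygons, at the modest extra cost of computing the labelled cross-section $P_{t^{*}}$), while the paper's reuses the data structures already in hand from the previous step so that no new geometric objects need to be constructed. One small caveat in your write-up: the type~\ref{type3} test should pierce $Q$ with the vertical line at $x = p_{i,x} - x(t^{*})$ (the sign/offset coming from the Minkowski difference), but this is exactly the kind of bookkeeping your final paragraph already flags, and it does not affect correctness.
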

\begin{proof}
    For a vertex $v$ of $Q$, it contributes the $O(n)$ event polygons of type~\ref{type2} that are the faces of $P - v$. The intersection of the boundary of $P - v$ with $S$ is a convex polygon. Since there are no intersections with edges of event polygons inside the strip $R$, at most two edges of the convex polygon can lie inside $R$, one on the ``front side'' and the other on the ``back side.'' 
    
    To compute these two segments on $R$, we first consider the two sorted matrices given in the proof of \Cref{lemma_plane_step_2.1} that together describe the edges on the ``front side'' and look at the column associated to $-v$. We find, for each column, the two (or zero) adjacent entries that contain the $z$-coordinates of $R$ in between. The two of the at most four that are closest to the strip will be the endpoints of the segment that intersect the strip on the ``front side.'' Computing this segment takes $O(\log n)$ time since we can use binary search on the columns to find the desired entries. We do the same to find the segment on the ``back side.'' We do this for all vertices of $Q$ to find the $O(n)$ intersections with the event polygons of type~\ref{type2} in $O(n \log n)$ time.
    
    For an edge $e$ of $P$, it contributes $O(n)$ event polygons of type~\ref{type3} that form the surrounding sides of a ``cylinder'' with base congruent to $-Q$. Again, each of these ``cylinders'' intersect the strip $R$ in at most two faces, so there are $O(n)$ intersections of $R$ with event polygons of type~\ref{type3}. We can compute these segments by performing the binary search on the row of one of the sorted matrices associated to the edge $e$. The two entries immediately below the strip and the two immediately above the strip define the at most two segments intersecting $R$. Similar to the procedure above, this takes $O(\log n)$ time for each edge of $P$, thus $O(n \log n)$ time in total. 
\end{proof}

\subsubsection{PlaneDecision: Step 3}
Now, we have a target region $R$ and the $O(n)$ intersections it makes with the event polygons. 
\begin{lemma}\label{lemma_plane_step_3}
    In $O(n \log n)$ time, we can find a region $R' \subset R$ containing a good placement that does not intersect any of the $O(n)$ event polygons.
\end{lemma}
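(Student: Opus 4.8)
\emph{Proof proposal.} The plan is to treat the $O(n)$ event-polygon intersections inside the strip $R$ as an arrangement of lines and to locate a good placement inside a single cell of this arrangement, navigating by \textbf{LineDecision} and shrinking the region with repeated $(1/2)$-cuttings. The key structural observation, inherited from Step~2, is that each event polygon meets $R$ in a chord spanning $R$ from $z = z_0$ to $z = z_1$ with no endpoint in the interior of $R$, since $R$ avoids all of the points where edges of event polygons cross $S$. Consequently, inside the current target region each such intersection behaves like a full line, and any cell of the resulting line arrangement is exactly a subregion of $R$ meeting no event polygon. So it suffices to locate a good placement within one cell.

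I would run a recursion on the number $k$ of segments crossing the current target region, starting from $R$ with $k = O(n)$. At each round I apply \Cref{lemma_cutting} with $d = 2$ and $r = 2$ to the current (bounded convex) target region: this yields a $(1/2)$-cutting consisting of $O(1)$ triangles, each whose interior meets at most $k/2$ of the lines, and it reports the lines crossing each triangle, all in $O(k) = O(n)$ time. I then call \textbf{LineDecision} (\Cref{prop_line_decision}) on the $O(1)$ lines bounding these triangles to single out one triangle containing a good placement, and recurse into it together with its $\le k/2$ crossing segments. After $O(\log n)$ rounds the crossing count drops below $1$, so the current triangle meets no event polygon; this triangle is the desired $R'$. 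Each round costs $O(1)$ calls to \textbf{LineDecision}, each running in $O(n)$ time, plus $O(n)$ for the cutting, giving $O(n \log n)$ overall.

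The main point to justify carefully is that the $O(1)$ \textbf{LineDecision} reports really do pin down a single cutting triangle containing a good placement; this is the crux of the argument. It rests on the convexity of the set of good placements on $S$: restricting \Cref{prop_concavity} to the plane $S$ (with $d' = 3$, $d = 2$), the function $(f|_S)^{1/2}$ is concave on its support, so its maximizer set $G \subset S$ is convex. For any line $l$ not meeting $G$, \textbf{LineDecision} reports the unique open half-plane containing $G$; hence the reports for the bounding lines of the cutting are mutually consistent, and their common intersection --- one of the $O(1)$ triangles --- contains $G$ and thus a good placement. If instead some bounding line meets $G$, then \textbf{LineDecision} discovers a good placement on that line, and we may stop at once, since such a placement already realizes $\max_{v \in S} f(v)$, which is all the enclosing \textbf{PlaneDecision} needs. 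The remaining routine checks --- that the cutting triangles cover the target region so the recursion stays valid, and that working with chords rather than full lines is harmless because every crossing chord spans the region --- follow directly from the Step~2 setup.
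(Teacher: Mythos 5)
Your proposal is correct and follows essentially the same route as the paper: recursively compute a constant-size $(1/2)$-cutting of the target region via \Cref{lemma_cutting}, use $O(1)$ calls to \textbf{LineDecision} to select the triangle containing a good placement, and recurse for $O(\log n)$ rounds at $O(n)$ cost each. The extra justifications you supply (the chords spanning the strip, convexity of the good-placement set, and early termination when \textbf{LineDecision} hits a good placement) are points the paper leaves implicit or states elsewhere, so there is no substantive difference.
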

\begin{proof}
    We recursively construct a $(1/2)$-cutting of the target region. By \Cref{lemma_cutting}, a  $(1/2)$-cutting of constant size can be computed in $O(n)$ time. We perform \textbf{LineDecision} on the lines of the cutting to decide on which triangle to recurse. After $O(\log n)$ iterations, we have a target region $R'$ that intersects no event polygons. This procedure runs in $O(n \log n)$ time.
\end{proof}

Finally, since the overlap function is quadratic on our final region $R'$, we can solve for the maximum using standard calculus. After finding $\operatorname{max}_{v \in S} f(v)$ and a vector achieving it $O(n \log n)$ time, by \Cref{lemma_side_decision}, we can perform \textbf{PlaneDecision} on $S$ in the same time bound.
\begin{proposition}\label{prop_max_on_plane}
    For a plane $S$, we can perform \textbf{PlaneDecision} in $O(n \log n)$ time. 
\end{proposition}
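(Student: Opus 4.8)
The plan is to assemble the three-stage algorithm of Algorithm~\ref{pseudo_plane} into a routine that computes $\max_{v \in S} f(v)$ in $O(n \log n)$ time, and then to invoke Lemma~\ref{lemma_side_decision} to convert this into the side decision at no asymptotic cost. First I would dispose of the degenerate case. If $S$ is horizontal, then maximizing $f$ over $S$ is exactly the problem of maximizing the overlap area of the two planar convex polygons $P \cap S$ and $Q$ under translation, which de Berg et al.~\cite{deberg1996} solve in $O(n \log n)$ time. So I would henceforth assume $S$ is non-horizontal and, after a rotation, that $S$ is parallel to the $y$-axis.

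Next I would chain together the three steps. Step~1 computes the initial target region $S \cap (P - Q)$ in $O(n)$ time via Chazelle's algorithm~\cite{chazelle1992}; if this intersection is empty, I simply report the side of $S$ containing $P - Q$ and stop. For Step~2, Lemma~\ref{lemma_plane_step_2.1} produces in $O(n \log n)$ time a strip $R \subset S$ whose interior contains a good placement and crosses no edge of any event polygon, and Lemma~\ref{lemma_plane_step_2.2} then identifies in $O(n \log n)$ time the $O(n)$ event polygons whose interiors meet $R$. For Step~3, Lemma~\ref{lemma_plane_step_3} uses a recursively refined $(1/2)$-cutting together with the \textbf{LineDecision} oracle (Proposition~\ref{prop_line_decision}, each call costing $O(n)$) to shrink $R$ to a subregion $R'$ that still contains a good placement but is disjoint from every event polygon, again in $O(n \log n)$ time.

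On $R'$ the combinatorial type of $P \cap (Q + v)$ is constant, so $f$ coincides with a single quadratic there; I would read off its coefficients and maximize over $R'$ by elementary calculus, recovering both $\max_{v \in S} f(v)$ and a vector achieving it. Summing the four contributions gives $T = O(n \log n)$ for the planar maximization. Finally, applying Lemma~\ref{lemma_side_decision} with this $T$ yields \textbf{PlaneDecision} in $O(T) = O(n \log n)$ time, by running the same planar-maximization routine on the two $\omega$-perturbed planes $S_{\pm}$ over $\mathbb{R}[\omega]/(\omega^3)$ and comparing the results against the unperturbed value.

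Since every computational ingredient is already packaged in a prior lemma, the genuine work is the bookkeeping of running times, confirming that no step exceeds $O(n \log n)$ and that the $O(1)$ oracle calls hidden in the cutting do not accumulate. I expect the one real subtlety to be correctness rather than efficiency: I must make sure that the strict unimodality of $f$ on $P - Q$ (Corollary~\ref{cor_unimodality}) is what legitimizes the comparison in Lemma~\ref{lemma_side_decision}, namely that comparing the perturbed maxima $A_{\pm}$ to the value $A$ on $S$ correctly points toward the half-space containing the \emph{set of goal placements}, and not merely toward the good placements constrained to $S$.
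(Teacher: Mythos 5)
Your proposal is correct and follows essentially the same route as the paper: chaining Step~1, \Cref{lemma_plane_step_2.1}, \Cref{lemma_plane_step_2.2}, and \Cref{lemma_plane_step_3} to reach a region $R'$ where $f$ is a single quadratic, maximizing by calculus, and then invoking \Cref{lemma_side_decision} (with the horizontal case handled separately via de Berg et al.). Your closing remark is also on point---the validity of the perturbation comparison in \Cref{lemma_side_decision} rests exactly on the strict unimodality of $f$ on $P-Q$ from \Cref{cor_unimodality}, which is what the paper's argument uses.
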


\subsection{Stage 2}
With the general \textbf{PlaneDecision} at our disposal, we now move on to State $2$, the main component of our algorithm. We project the entire configuration space and the event polygons onto the $xz$-plane in order to find a target region $D$ whose preimage $D + l_y$ intersects few event polygons, where $l_y$ is the $y$-axis (see \Cref{fig_stage_2}). 

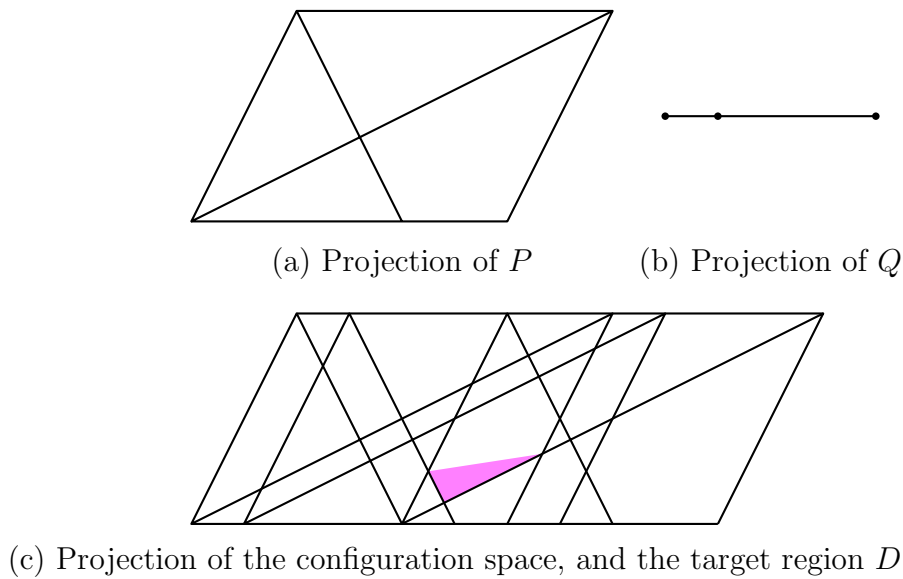
\begin{figure}[ht]
  \centering 
  \begin{tikzpicture}[thick,scale=1.4,line join=round]
    \begin{scope}
      \coordinate (A) at (0,0);
      \coordinate (B) at (1,2);
      \coordinate (C) at (2,0);
      \coordinate (D) at (3,0);
      \coordinate (E) at (4,2);

      \draw (B) -- (E);
      \draw (A) -- (D) -- (E) -- (A) -- (B) -- (C);
      
      \node at (2,-0.4) {(a) Projection of $P$};
    \end{scope}
    \begin{scope}[shift={(4.5,1)}]
      \draw[draw=none] (1,1) -- (1,-1);
      \draw (0,0) node[circle,fill,inner sep=1pt]{}
      -- (0.5,0) node[circle,fill,inner sep=1pt]{}
      -- (2,0) node[circle,fill,inner sep=1pt]{};
      
      \node at (1,-1.4) {(b) Projection of $Q$};
    \end{scope}
  \begin{scope}[shift={(0,-2.875)}]
    \draw [draw=none, fill=botColor, fill opacity=0.5]
    (2.25,0.5) -- (2.4,0.2) -- (10/3,2/3) -- cycle;
    \foreach \vQx in {0,0.5,2}{
      \begin{scope}[shift={(\vQx,0)}]
        \coordinate (A) at (0,0);
        \coordinate (B) at (1,2);
        \coordinate (C) at (2,0);
        \coordinate (D) at (3,0);
        \coordinate (E) at (4,2);
        
        \draw (D) -- (E) -- (A) -- (B) -- (C);
      \end{scope}}
    \draw (0,0) -- (5,0);
    \draw (1,2) -- (6,2);
      \node at (2.5,-0.35) {(c) Projection of the configuration space, and the target region $D$};
  \end{scope}
  \end{tikzpicture}
  \caption{Projecting onto the xz-plane.}\label{fig_stage_2}
\end{figure}

The non-horizontal edges of the event polygons project to segments on the strip $0<z<1$ on the $xz$-plane. We characterize our desired region $D$ in the following lemma.
\begin{lemma}\label{lemma_stage_2.1}
    For a region $D$ that does not intersect any of the segments that are the projections of the non-horizontal edges of the event polygons, the preimage $D + l_y$ intersects $O(n)$ event polygons.
\end{lemma}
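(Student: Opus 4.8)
The plan is to characterize exactly which event polygons meet the tube $D + l_y$, and then to collapse the count to that of a single $y$-parallel line. First I would observe that a point $(x,y,z)$ lies in $D + l_y$ iff its projection $(x,z)$ onto the $xz$-plane lies in $D$; hence an event polygon $E$ meets $D + l_y$ iff the projection of $E$ onto the $xz$-plane meets $D$. Since each event polygon is a planar convex polygon, its projection is a convex region whose boundary consists precisely of the projections of the edges of $E$. The non-horizontal such projections are exactly the segments that $D$ is assumed to avoid; the horizontal ones lie on the lines $z = 0$ and $z = 1$, because after Stage~1 every edge of an event polygon is either horizontal at height $0$ or $1$ or runs between the two levels, and $D$ also avoids these since it lies in the open strip $0 < z < 1$.

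Consequently, since $D$ is a connected region, it is disjoint from the boundary of the projection of every event polygon, so $D$ either misses the projection of $E$ entirely or lies in its interior. Fixing any point $p = (x_0, z_0) \in D$, it follows that $D + l_y$ meets $E$ iff $p$ lies in the projection of $E$, that is, iff the line $\ell_p$ through $p$ parallel to the $y$-axis meets $E$. The problem is thus reduced to bounding the number of event polygons met by a single $y$-parallel line $\ell_p$ at height $z_0 \in (0,1)$. Type~\ref{type1} polygons contribute nothing, since $u - Q$ carries the $z$-coordinate of a vertex $u$ of $P$, none of which lie strictly between $0$ and $1$.

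For Type~\ref{type2}: for a fixed vertex $v$ of $Q$ (lying in the $xy$-plane), $\ell_p$ meets $F - v$ iff the translated line $\ell_p + v$, again parallel to the $y$-axis, meets the face $F$ of $P$. A single line meets the boundary of the convex polyhedron $P$ in at most two faces, so each of the $O(n)$ vertices $v$ contributes at most two faces, giving $O(n)$ event polygons. For Type~\ref{type3}: for a fixed non-horizontal edge $e$ of $P$, the condition $a_z = z_0$ pins down the unique point $a \in e$ at height $z_0$, and then $\ell_p$ meets $e - e'$ iff $e'$ contains a point whose $x$-coordinate equals $a_x - x_0$. Since a vertical line in the $xy$-plane meets the boundary of the convex polygon $Q$ in at most two points, at most two edges $e'$ qualify for each $e$, so the $O(n)$ non-horizontal edges of $P$ contribute $O(n)$ event polygons. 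Summing the three types yields the claimed $O(n)$ bound.

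I expect the main obstacle to be the reduction in the first two paragraphs rather than the final counting: one must argue cleanly that avoiding the projected non-horizontal edges, together with lying in the open strip, is enough to keep $D$ off the \emph{entire} boundary of every projected event polygon, so that connectedness collapses the question about the region $D$ to a question about the single line $\ell_p$. Care is also needed with degenerate projections (event polygons parallel to the $y$-axis, whose projection is a segment of empty interior) and with the general-position assumptions guaranteeing that a line meets $P$ in at most two faces and a vertical line meets $Q$ in at most two edges.
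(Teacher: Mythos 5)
Your proposal is correct, but the second half takes a genuinely different route from the paper's. Both arguments begin identically: the tube $D + l_y$ meets an event polygon $E$ exactly when the projection $\pi(E)$ onto the $xz$-plane meets $D$, and since $D$ avoids every projected non-horizontal edge (and the horizontal edges project into the lines $z=0$, $z=1$), connectedness forces $D$ to lie either outside $\pi(E)$ or in its interior. At that point the paper stays with the planar cell structure: it splits the projected segments $s_i - x_j$ into four groups (front/back edges of $P$ crossed with front/back vertices of $Q$), observes that the projections of type~\ref{type2} polygons are the trapezoids between consecutive segments $s_i - x_j$, $s_{i+1}-x_j$ and those of type~\ref{type3} are the parallelograms between $s_i - x_j$, $s_i - x_{j+1}$, and concludes that $D$ meets at most one such cell per fixed vertex (resp.\ per fixed segment) per group. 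You instead collapse $D$ to a single point $p$ and count event polygons met by the $y$-parallel line $\ell_p$ directly from convexity: $\ell_p + v$ crosses the boundary of the convex polyhedron $P$ in $O(1)$ faces, and a vertical line meets the boundary of the convex polygon $Q$ in $O(1)$ edges. Your counting is more elementary and dispenses with the four-group bookkeeping; what the paper's structure buys is that it is reused immediately in \Cref{lemma_stage_2.2}, where the same sorted groups of segments are binary-searched to actually \emph{compute} the $O(n)$ intersecting polygons in $O(n\log n)$ time, so the cell description is not wasted effort. Two of your flagged caveats deserve note. First, the restriction of $D$ to the open strip $0<z<1$ is genuinely needed (otherwise a connected $D$ could snake over $z=1$ and visit many cells); the paper leaves this implicit (in the algorithm $D$ is convex and avoids the full supporting lines, which serves the same purpose), so making it explicit is a real improvement in rigor. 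Second, the degenerate case of an event polygon parallel to the $y$-axis resolves in one line: its projection is then a segment covered by the projections of its own edges, all of which $D$ avoids, so it contributes nothing to the count.
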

\begin{proof}
    For any region $D$ on the $xz$-plane, the set of event polygons that the ``tube'' $D + l_y$ intersects is precisely the set of projected event polygons that $D$ intersects. Now, let $D$ be a region that does not intersect any segment from the projections of the event polygons.
    
    Let $s_1, s_2, \ldots, s_m$ be the segments that are the projections of the non-horizontal edges of $P$, and let $x_1, \ldots, x_q$ be the projections of the vertices of $Q$ on the $x$-axis and assume that they are sorted by decreasing $x$-coordinate. Then, the projections of the non-horizontal edges of the event polygons are precisely $s_i - x_j$. 
    
    We first split the segments into four groups. Let $s_1, \ldots, s_{m_1}$ be the projections of the non-horizontal edges of $P$ on the ``front side,'' and $s_{m_1+1}, \ldots, s_{m}$ be those on the ``back side.'' The at most two edges visible on both the front and the back may be repeated. Then the segments from either group are pairwise non-intersecting. Similarly, we split the vertices of $Q$ into a front side and a backside, including the at most two vertices visible on both the front and back in both sets. We consider the segments in the configuration space made by one of the two groups of edges of $P$ and one of the two groups of vertices of $Q$. The other three sets of segments are processed similarly. 
    
    Suppose that the segments we consider are $s_1, \ldots, s_{m_1}$, and the projected vertices are $x_1, \ldots, x_{q_1}$. Suppose the segments are sorted by increasing sum of the $x$-coordinates of their endpoints, and that the vertices are sorted by decreasing $x$-coordinate. The event polygons of type~\ref{type2} are the trapezoids or triangles between segments $s_i - x_j$ and $s_{i+1} - x_j$ for each of the four groups of segments. For each fixed projected vertex $x$, the region $D$ intersects at most one event polygon of type~\ref{type2} for each group. Thus, $D$ intersects $O(n)$ event polygons of type~\ref{type2}. Similarly, the event polygons of type~\ref{type3} are the parallelograms between segments $s_i - x_j$ and $s_i - x_{j+1}$ for each of the four groups of segments. For each fixed segment $s_i$, $D$ intersects at most one event polygon of type~\ref{type3}, thus it intersects $O(n)$ event polygons of type~\ref{type3} in total. 
\end{proof}

Now it remains to efficiently find such a region $D$ with $D + l_y$ containing a goal placement and compute the $O(n)$ event polygons that intersect its interior.
\begin{lemma}\label{lemma_stage_2.2}
    In $O(n \log^2 n)$ time, we can find a triangle $D$ in the $xz$-plane such that the interior of $D + l_y$ contains a goal placement and intersects $O(n)$ event polygons. We can compute these $O(n)$ event polygons in the same time bound.
\end{lemma}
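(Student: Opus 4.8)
The plan is to reduce the search for $D$ to a single application of our generalized prune-and-search (\Cref{thm_prune_and_search}), using \textbf{PlaneDecision} as the oracle. By \Cref{cor_unimodality} the overlap function $f$ is strictly unimodal on every line, so it has a unique global maximizer $v^{*}$; let $p^{*} = (p_x, p_z)$ be its projection onto the $xz$-plane. By \Cref{lemma_stage_2.1} it suffices to produce a triangle $D$ containing $p^{*}$ that avoids every segment $s_i - x_j$, where $s_1, \dots, s_{m}$ are the projections of the non-horizontal edges of $P$ (with $m = O(n)$) and $x_1, \dots, x_q$ the projections of the vertices of $Q$ (with $q = O(n)$). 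Such a $D$ yields a tube $D + l_y$ that contains $v^{*}$ and meets only $O(n)$ event polygons, so everything reduces to locating $p^{*}$ within the arrangement of the $O(n^2)$ segments $s_i - x_j$.

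For the oracle, given a line $\ell$ in the $xz$-plane I would take the vertical plane $S = \ell + l_y$ through $\ell$: running \textbf{PlaneDecision} on $S$ reports the side of $S$ containing $v^{*}$ — equivalently the side of $\ell$ containing $p^{*}$ — or else discovers $v^{*}\in S$, in which case we terminate having found a goal placement. By \Cref{prop_max_on_plane} this oracle runs in $T = O(n\log n)$ time, and it is valid whether or not $\ell$ is horizontal (a horizontal $\ell$ gives a horizontal plane, handled by \cite{deberg1996}). I group the lines by the edge they come from, setting $S_i = \{\,\text{line through } s_i - x_j : 1 \le j \le q\,\}$. Each $S_i$ consists of $q = O(m)$ pairwise parallel lines (all translates of the line through $s_i$), none horizontal since each $s_i$ runs from $z=0$ to $z=1$; sorting the vertices of $Q$ once by $x$-coordinate indexes every $S_i$ from left to right simultaneously. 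Applying \Cref{thm_prune_and_search} with $O(n)$ groups, each of size $O(n)$, and oracle time $T = O(n\log n)$ then locates $p^{*}$ relative to every line in $O(n\log^2 n + (n\log n + n)\log(n^2)) = O(n\log^2 n)$ time.

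To extract $D$, I would take the triangular feasible region produced at the end of the prune-and-search, which meets at most a constant number of lines, and cut it to the recorded $p^{*}$-side of each of these; this gives an open convex region of constant complexity containing $p^{*}$ and avoiding every segment, which I triangulate to obtain $D$. It then remains to enumerate the $O(n)$ event polygons meeting $D + l_y$. The prune-and-search returns, for each edge $s_i$, the consecutive indices $a(i), a(i)+1$ with $p^{*}$ between $s_i - x_{a(i)}$ and $s_i - x_{a(i)+1}$, so the type-\ref{type3} parallelogram between these two is read off directly, giving $O(n)$ of them. The type-\ref{type2} polygons instead require the \emph{transposed} information: for each vertex $x_j$, the consecutive edges $s_{b(j)}, s_{b(j)+1}$ straddling $p^{*}$. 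Splitting the edges into the front and back chains (on each of which the $s_i$ are pairwise non-crossing and left-to-right ordered), the abscissae where the $s_i$ cross height $p_z$ are monotone in $i$, which forces $i \mapsto a(i)$ to be monotone; a single linear scan therefore converts $\{a(i)\}$ into $\{b(j)\}$ in $O(n)$ time, and each $b(j)$ names one type-\ref{type2} trapezoid. Summed over the four front/back combinations, this produces all $O(n)$ event polygons within the same $O(n\log^2 n)$ bound.

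I expect the main obstacle to be the transpose step: the prune-and-search naturally produces the per-edge thresholds $a(i)$, whereas listing the type-\ref{type2} polygons needs the per-vertex thresholds $b(j)$, and the crux is establishing the monotonicity (via the non-crossing, left-to-right order of the projected edges at a fixed height) that lets one be computed from the other by a linear scan rather than $O(n)$ searches per vertex. Everything else — feeding the correct parallel-line groups and the \textbf{PlaneDecision} oracle into \Cref{thm_prune_and_search}, and checking that $T = O(n\log n)$ keeps the total at $O(n\log^2 n)$ — should be routine bookkeeping.
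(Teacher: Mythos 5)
Your method for finding $D$ coincides with the paper's: both run \Cref{thm_prune_and_search} on the $O(n)$ families of parallel lines through the projected segments, with the oracle realized by running \textbf{PlaneDecision} (\Cref{prop_max_on_plane}) on the vertical plane over the query line, giving $O(n\log^2 n + (n\log n + n)\log(n^2)) = O(n\log^2 n)$; the paper additionally spends $O(1)$ \textbf{PlaneDecision} calls to triangulate the final cell, a step you should also account for when you ``triangulate to obtain $D$'' (you must learn which triangle of your constant-complexity cell contains $p^*$, which costs $O(1)$ more oracle calls). Where you genuinely differ is the enumeration step. The paper re-derives the polygons from scratch: for each of the $O(n)$ non-crossing families it binary-searches for the two segments sandwiching $D$, in $O(n\log n)$ total. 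You instead recycle the prune-and-search output $\{a(i)\}$ and transpose it into per-vertex thresholds $\{b(j)\}$ by a linear scan; the monotonicity you invoke is correct (within a front or back chain of $P$ the abscissae of the $s_i$ at height $p_z$ are in chain order, hence $a(\cdot)$ is monotone on that chain and $b(j) = \max\{i : a(i) \ge j\}$ is computed by a merge), so this works and even shaves a log factor off that sub-step. One slip needs repair: because you sorted \emph{all} vertices of $Q$ together, the consecutive vertices $x_{a(i)}$ and $x_{a(i)+1}$ are in general not the endpoints of an edge of $Q$, so ``the type III parallelogram between these two'' is not an event polygon. What $a(i)$ really gives you is the slab containing $D$; the type III polygons meeting $D$ for the edge $s_i$ are the at most \emph{two} polygons $e_i - e'$, where $e'$ ranges over the edges of $Q$ (at most one from each chain of $Q$) whose $x$-projections cover $[x_{a(i)+1}, x_{a(i)}]$. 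These are recoverable in $O(1)$ per edge after building, in $O(n)$ time, a table mapping each gap between consecutive sorted abscissae to the two edges of $Q$ spanning it --- or you can sidestep the issue entirely by running your scheme separately on the four front/back groups, as the paper does. With that correction your argument is complete within the stated bounds.
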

\begin{proof}
    The computation of $D$ is a direct application of \Cref{thm_prune_and_search}, where $m = O(n)$. Calling the oracle on a line $l$ in the $xz$-plane is running the \textbf{PlaneDecision} algorithm on the plane parallel to the $y$-axis that projects to $l$. We compute a triangle for each of the four groups of segments, take their intersection, and triangulate the intersection using $O(1)$ calls to \textbf{PlaneDecision}. Thus, we can compute the desired triangle $D$ in $O(n \log^2 n)$ time.
    
    To compute the event polygons intersecting the interior of $D + l_y$ is simple, since we have shown in the proof of \Cref{lemma_stage_2.1} that $D$ intersects at most one projection of an event polygon of each type in each of the four groups for a fixed vertex $x_j$ (for type~\ref{type2}) or segment $s_i$ (for type~\ref{type3}). Once we have $D$, we can compute these polygons by binary search on each of the $O(n)$ groups of $O(n)$ non-intersecting segments to find the two between which $R$ lies. Also, the event polygons all have constant complexity so computing all of them takes linear time. We can recover the event polygons from their projections and compute the planes that contain them in linear time. Thus, this entire process can be done in $O(n \log n)$ time.
\end{proof}

\subsection{Stage 3}
Now, we have a target region $R = D+l_y$ whose interior contains a goal placement, and we have the $O(n)$ event polygons that intersect it. 
\begin{lemma}\label{lemma_stage_3}
    In $O(n \log^2 n)$ time, we can find a region $R' \subset R$ containing a goal placement that does not intersect any of the $O(n)$ event polygons.
\end{lemma}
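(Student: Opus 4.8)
The plan is to carry out a three-dimensional analogue of the cutting-based search in \Cref{lemma_plane_step_3}, except that the role of the two-dimensional oracle \textbf{LineDecision} is now played by \textbf{PlaneDecision}. Each of the $O(n)$ surviving event polygons lies in a plane, so I treat them as $O(n)$ planes in $\mathbb{R}^3$. Because $f$ is strictly unimodal (\Cref{cor_unimodality}), the set of goal placements is convex, so for every plane either it contains a goal placement---in which case we stop---or exactly one of its two open halfspaces contains all goal placements, and \textbf{PlaneDecision} reports which (\Cref{prop_max_on_plane}). This convex decision oracle is precisely what is needed to navigate a cutting.

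Concretely, at each stage I would, given the current target region together with the $\leq n/2^k$ planes meeting it, build a $(1/2)$-cutting by invoking \Cref{lemma_cutting} with $d = 3$ and $r = 2$. This yields $O(r^3) = O(1)$ simplices with disjoint interiors that cover the region, each meeting at most half of the planes, computed in $O(n/2^k)$ time; the lemma also reports, for each simplex, the planes it meets. (Since $D + l_y$ is unbounded in the $y$-direction, I first clip it to the support $P - Q$, which is bounded and still contains the goal placement, so the region being cut is bounded.) I then run \textbf{PlaneDecision} on the $O(1)$ bounding planes of the cutting. The answers fix the side of each bounding plane on which the goal placements lie, and since the simplices tile the region and one of them must contain a goal placement, intersecting these halfspaces identifies a simplex $R'$ containing a goal placement. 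I recurse on $R'$ together with the at most half of the planes it still meets. After $O(\log n)$ iterations no planes---and hence no event polygons---meet the current region, and I output it as the desired $R'$.

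For the running time, the $O(\log n)$ cutting constructions cost $\sum_k O(n/2^k) = O(n)$ in total, so the bottleneck is the oracle: each iteration makes $O(1)$ calls to \textbf{PlaneDecision}, each costing $O(n \log n)$ by \Cref{prop_max_on_plane}, over $O(\log n)$ iterations, giving $O(n \log^2 n)$ overall. The main subtlety to verify is that the side-queries genuinely locate the correct simplex: this relies on the convexity of the goal-placement set and on the fact that each cutting simplex is an intersection of $O(1)$ halfspaces whose bounding planes we query, so that a consistent set of \textbf{PlaneDecision} answers pins down a unique cell meeting the goal set. The termination case, in which some queried plane passes through a goal placement, is handled by halting immediately, exactly as inside \textbf{PlaneDecision}.
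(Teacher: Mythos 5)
Your proposal is correct and follows essentially the same route as the paper: recursively construct $(1/2)$-cuttings of the target region via \Cref{lemma_cutting}, use \textbf{PlaneDecision} (\Cref{prop_max_on_plane}) on the $O(1)$ planes of each cutting to identify the simplex containing a goal placement, and recurse for $O(\log n)$ rounds, with the $O(n\log n)$-per-call oracle dominating the cost to give $O(n\log^2 n)$ total. The extra details you supply (clipping to $P-Q$ for boundedness, the geometric decrease of the cutting-construction cost) are harmless refinements of the paper's argument, not a different method.
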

\begin{proof}
    We recursively construct a $(1/2)$-cutting of the target region. By \Cref{lemma_cutting}, a $(1/2)$-cutting of constant size can be computed in $O(n)$ time. We perform \textbf{PlaneDecision} on the planes of the cutting to decide on which simplex to recurse. After $O(\log n)$ iterations, we have a target region $R'$ that intersects no event polygons. This procedure runs in $O(n \log^2 n)$ time.
\end{proof}

Finally, since the overlap function is quadratic on our final region $R'$, we can solve for the maximum using standard calculus. This concludes the proof of \Cref{thm_algo}.

\section{Maximum overlap of three convex polygons}\label{sec_three_polygons}
Let $P$, $Q$, $R$ be three convex polygons with $n$ vertices in total in the plane. We want to find a pair of translations $(v_{Q}, v_{R}) \in \mathbb{R}^4$ that maximizes the overlap area $g(v_{Q}, v_{R}) = |P \cap (Q + v_{Q}) \cap (R + v_{R})|$. 

In this problem, the configuration space is four-dimensional. An easy extension of \Cref{prop_concavity} and \Cref{cor_unimodality} shows that the function of overlap area is again unimodal. This time, we have four-dimensional \textit{event polyhedra} instead of event polygons that divide the configuration space into four-dimensional cells on which $g(v_{Q}, v_{R})$ is quadratic. We call a hyperplane containing an event polyhedron an \textit{event hyperplane}, and they are defined by two types of events:
\begin{enumerate}[label = (\Roman*)]
    \item \label{hypertype1} When one vertex of $P$, $Q + v_Q$ or $R + v_R$ lies on an edge of another polygon. There are $O(n)$ groups of $O(n)$ parallel event hyperplanes of this type.
    \item \label{hypertype2} When an edge from each of the three polygons intersect at one point. There are $O(n^3)$ event hyperplanes of this type. 
\end{enumerate}

To overcome the difficulty of dealing with the $O(n^3)$ event hyperplanes of type~\ref{hypertype2}, we first prune the configuration space to a region intersecting no event hyperplanes of type~\ref{hypertype1}. We then show that the resulting region only intersects $O(n)$ event hyperplanes of type~\ref{hypertype2}. 

Similar to \Cref{thm_algo}, we want an algorithm \textbf{HyperplaneDecision} that computes, for a hyperplane $H \subset \mathbb{R}^4$, the maximum $\operatorname{max}_{(v_Q,v_R) \in H} g(v_Q, v_R)$ and the relative location of the goal placement to $H$. In fact, we will only need to perform \textbf{HyperplaneDecision} on some hyperplanes.

\begin{proposition}\label{prop_hyperplane_decision}
    Suppose $H$ is a hyperplane that satisfies one of the following three conditions:
    \begin{enumerate}[label = (\arabic*)]
        \item $H$ is orthogonal to a vector $(x_1, y_1, 0, 0)$ for some $x_1, y_1 \in \mathbb{R}$.
        \item $H$ is orthogonal to a vector $(0, 0, x_2, y_2)$ for some $x_2, y_2 \in \mathbb{R}$.
        \item $H$ is orthogonal to a vector $(x_1, y_1, -x_1, -y_1)$ for some $x_1, y_1 \in \mathbb{R}$.
    \end{enumerate}
    Then, we can perform \textbf{HyperplaneDecision} on $H$ in $O(n \log^2 n)$ time. 
\end{proposition}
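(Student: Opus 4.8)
The plan is to reduce \textbf{HyperplaneDecision} on the four-dimensional configuration space to the three-dimensional polyhedron-polygon problem solved by \Cref{thm_algo}, by exploiting the special orthogonality conditions on $H$. The key observation is that each of the three conditions forces one of the three polygons to be translated in a fixed way along $H$, so that the overlap of three polygons under a two-parameter family of translations becomes the overlap of a polyhedron and a polygon. Let me take condition~(1) first. If $H$ is orthogonal to $(x_1,y_1,0,0)$, then on $H$ the translation $v_Q$ is constrained to a line while $v_R$ ranges freely over $\mathbb{R}^2$; thus $Q+v_Q$ sweeps out a two-dimensional family, and I would build the three-dimensional convex body $\widetilde{P}$ whose horizontal cross-section at height $t$ is $P \cap (Q + v_Q(t))$, where $v_Q(t)$ parameterizes the allowed line of $Q$-translations. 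Since $P$ and $Q$ are convex polygons, this $\widetilde{P}$ is a convex polyhedron with $O(n)$ vertices, and the overlap $g$ restricted to $H$ becomes $|\widetilde{P} \cap (R + v_R)|$ viewed as a polyhedron-polygon overlap. Conditions~(2) and~(3) are handled symmetrically: in case~(2) we fix $v_R$ along a line and let $v_Q$ vary, building a polyhedron from $P$ and $R$; in case~(3) the constraint $v_Q = -v_R + \text{const}$ ties $Q$ and $R$ together, and I would form the polyhedron whose cross-sections are $Q \cap (R + w)$ for $w$ along the constrained line, then overlap against $P$.

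The key steps, in order, are as follows. First, identify which of the three polygons is "frozen" along a line by the orthogonality condition, and write down the explicit linear parameterization $t \mapsto v(t)$ of that polygon's allowed translations inside $H$ together with the free two-dimensional parameter. Second, construct the auxiliary convex polyhedron $\widetilde{P} \subset \mathbb{R}^3$ whose slice at height $t$ equals the intersection of the two "coupled" polygons at parameter $t$; this polyhedron has $O(n)$ vertices and can be computed in $O(n\log n)$ time since it is essentially the convex hull of a Minkowski-type construction from two convex polygons. Third, observe that maximizing $g$ over the two free parameters on $H$ is exactly the problem of maximizing $|\widetilde{P} \cap (\text{third polygon} + v)|$ over $v \in \mathbb{R}^3$ restricted appropriately, which \Cref{thm_algo} solves in $O(n\log^2 n)$ time, yielding $\operatorname{max}_{(v_Q,v_R)\in H} g$ and a maximizer. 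Fourth, to get the relative position of the global goal placement to $H$, apply the same symbolic-perturbation trick as in \Cref{lemma_side_decision}: compute the maxima on the two parallel hyperplanes $H_+$ and $H_-$ at infinitesimal distance $\omega$ and compare, using the strict unimodality of $g$ guaranteed by the extension of \Cref{cor_unimodality}. This costs a constant factor more, preserving the $O(n\log^2 n)$ bound.

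The main obstacle I expect is verifying that the auxiliary body $\widetilde{P}$ is genuinely a convex polyhedron with linear complexity and that its slices correctly encode the overlap function along $H$. The subtlety is that when we couple two convex polygons and let one translate along a line, the resulting stack of cross-sections $\{t\} \times (A \cap (B+v(t)))$ need not obviously be convex unless we invoke the concavity of the overlap area from \Cref{prop_concavity}; I would need to check that the linear motion $v(t)$ together with the two convex polygons produces a body to which \Cref{prop_concavity} (with $d'=2$, $d=2$) applies, so that the section areas behave like the cross-sections of a convex solid, and that the combinatorial structure changes $O(n)$ times as $t$ varies. A secondary technical point is confirming that in case~(3) the simultaneous translation of $Q$ and $R$ against a fixed $P$ still fits the polyhedron-polygon template after the appropriate change of coordinates; this requires checking that the coupling $v_Q = -v_R + c$ is an affine isomorphism onto a two-parameter family realizable as translations of a single three-dimensional body. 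Once these structural facts are established, the reduction to \Cref{thm_algo} and the perturbation argument are routine.
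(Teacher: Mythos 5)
Your proposal takes essentially the same route as the paper: the paper forms the convex polyhedron $I_{PQ} = \{(x,y,t) \mid (x,y)\in P\} \cap \{(x,y,t) \mid (x,y)\in Q + p + vt\}$, whose slice at height $t$ is $P\cap(Q+p+vt)$, maximizes its overlap with the remaining polygon via \Cref{thm_algo}, and concludes with the perturbation argument of \Cref{lemma_side_decision}, exactly as you describe (your handling of cases (2) and (3) also matches the paper's ``the other types follow similarly''). The only obstacle you flag is a non-issue, and your proposed fix via \Cref{prop_concavity} is not what is needed anyway: $\widetilde{P}$ is convex simply because it is the intersection of the vertical prism over $P$ with the slanted prism over $Q$ translated along the line $v_Q(t)$, an intersection of two convex polyhedra with $O(n)$ vertices that Chazelle's algorithm computes in $O(n)$ time.
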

\begin{proof}
    We provide the algorithm for $H$ orthogonal to $(x_1, y_1, 0, 0)$ for some $x_1, y_1 \in \mathbb{R}$, and the other two types follow similarly. 
    
    We reinterpret the problem of finding $\operatorname{max}_{(v_Q,v_R) \in H} g(v_Q, v_R)$ as a polyhedron-polygon matching problem. In $H$, we allow $R$ to move freely, and $Q$ moves in a line $l$ perpendicular to $(x_1, y_1)$. We parameterize $l$ by $l = p + vt$, and form the convex polyhedron (see \Cref{fig_ipq})
    \[
    I_{PQ} = \{(x, y, t)| (x, y) \in P\} \cap \{(x, y, t)| (x, y) \in (Q + p + vt)\}.
    \] 
    By \cite{chazelle1992}, $I$ can be computed in $O(n)$ time. In addition, the cross-section of $I$ at $t = t_0$ is $P \cap (Q + p + vt)$. Then, we see that finding $\operatorname{max}_{(v_Q, v_R) \in H} g(v_Q, v_R)$ is the same as finding a translation maximizing the intersection of $I$ and $R$. By \Cref{thm_algo}, this can be done in $O(n \log^2 n)$ time. 

    Using the formal perturbation argument in \Cref{lemma_side_decision}, \textbf{HyperplaneDecision} on $H$ can be completed in the same time bound.

\begin{figure}[ht]
  \centering 
  \tdplotsetmaincoords{110}{0}
  \begin{tikzpicture}[tdplot_main_coords,thick,scale=1.4,line join=round]
    \begin{scope}[shift={(-1,0)}]
      \coordinate (A0) at (0  ,0  ,0);
      \coordinate (B0) at (1.5,0  ,0);
      \coordinate (C0) at (2  ,1.5,0);
      \coordinate (D0) at (0.5,1.5,0);
      \coordinate (A1) at (0  ,0  ,2);
      \coordinate (B1) at (1.5,0  ,2);
      \coordinate (C1) at (2  ,1.5,2);
      \coordinate (D1) at (0.5,1.5,2);
      \coordinate (A2) at (0  ,0  ,-0.5);
      \coordinate (B2) at (1.5,0  ,-0.5);
      \coordinate (C2) at (2  ,1.5,-0.5);
      \coordinate (D2) at (0.5,1.5,-0.5);
      \coordinate (A3) at (0  ,0  ,2.5);
      \coordinate (B3) at (1.5,0  ,2.5);
      \coordinate (C3) at (2  ,1.5,2.5);
      \coordinate (D3) at (0.5,1.5,2.5);
      
      \draw[dotted] (A0) -- (B0) -- (C0) -- (D0) -- cycle;
      \draw[dotted] (A1) -- (B1) -- (C1) -- (D1) -- cycle;
      \draw         (A2) -- (A3);
      \draw[dotted] (B2) -- (B3);
      \draw         (C2) -- (C3);
      \draw         (D2) -- (D3);

      \node at (1.25,2.5) {$P+\text{($z$-axis)}$};
    \end{scope}
    \begin{scope}[shift={(3.5,0)}]
      \coordinate (A0) at ( 0  ,0,2);
      \coordinate (B0) at ( 0.5,2,2);
      \coordinate (C0) at ( 2  ,0,2);
      \coordinate (A1) at (-0.5,0,0);
      \coordinate (B1) at ( 0  ,2,0);
      \coordinate (C1) at ( 1.5,0,0);
      
      \coordinate (A2) at ( 0.0625,0, 2.25);
      \coordinate (B2) at ( 0.5625,2, 2.25);
      \coordinate (C2) at ( 2.0625,0, 2.25);
      \coordinate (A3) at (-0.5625,0,-0.25);
      \coordinate (B3) at (-0.0625,2,-0.25);
      \coordinate (C3) at ( 1.4375,0,-0.25);
      
      \draw[dotted] (A0) -- (B0) -- (C0) -- cycle;
      \draw[dotted] (A1) -- (B1) -- (C1) -- cycle;
      \draw         (A2) -- (A3);
      \draw         (B2) -- (B3);
      \draw         (C2) -- (C3);
      
      \node at (0.675,2.5) {$Q + l$};
    \end{scope}
  \end{tikzpicture}
  \caption{The convex polyhedron $I_{PQ}$ is the intersection of these two objects.} \label{fig_ipq}
\end{figure}
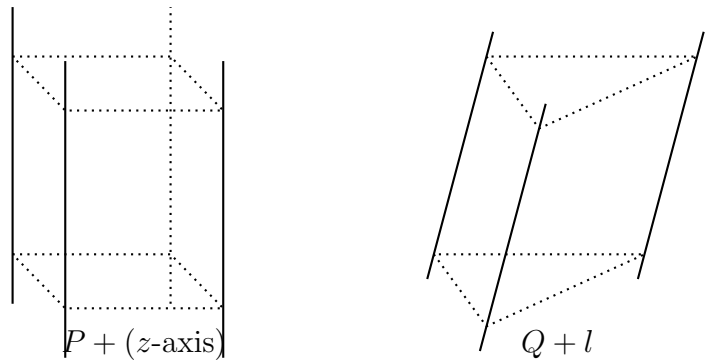
\end{proof}

Using \Cref{prop_hyperplane_decision}, we can prune the configuration space to a region that intersects no event hyperplanes of type~\ref{hypertype1} and $O(n)$ event hyperplanes of type~\ref{hypertype2}. 

\begin{proposition}\label{prop_Tpqr}
    We can compute a 4-polytope $T_{PQR}$ of complexity $O(1)$ in $O(n\log^3 n)$ time such that
    \begin{enumerate}[label = (\arabic*)]
        \item the goal placement lies on $T_{PQR}$,
        \item no hyperplane of type~\ref{hypertype1} intersects the interior of $T_{PQR}$, and
        \item only $O(n)$ event polyhedrons of type~\ref{hypertype2} passes through $T_{PQR}$.
    \end{enumerate}
    The hyperplanes of type~\ref{hypertype2} intersecting the interior of $T_{PQR}$ are obtained in the same time bound. Furthermore, the 3-tuples of edges of $P$, $Q$ and $R$ defining the hyperplanes are also obtained in the same time bound.
\end{proposition}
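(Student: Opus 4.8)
The plan is to reduce the four-dimensional pruning to three separate applications of the two-dimensional prune-and-search of \Cref{thm_prune_and_search}, one for each of the three families into which the type~\ref{hypertype1} hyperplanes naturally fall. First I would classify the type~\ref{hypertype1} events. An incidence between a vertex and an edge of the pair $(P, Q+v_Q)$ is a linear condition on $v_Q$ alone, so its hyperplane is orthogonal to some $(x_1, y_1, 0, 0)$; symmetrically, incidences within $(P, R+v_R)$ give hyperplanes orthogonal to $(0, 0, x_2, y_2)$, and incidences within $(Q+v_Q, R+v_R)$ are conditions on $v_Q - v_R$ and give hyperplanes orthogonal to $(x_1, y_1, -x_1, -y_1)$. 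These are precisely the three normal directions handled by the \textbf{HyperplaneDecision} oracle of \Cref{prop_hyperplane_decision}. Furthermore, in the corresponding two-dimensional coordinate space ($v_Q$, $v_R$, or $v_Q - v_R$), each family, grouped by the edge supplying its normal direction, forms $O(n)$ groups of $O(n)$ parallel lines; after a generic rotation we may index each group from left to right as \Cref{thm_prune_and_search} requires.

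I would then run \Cref{thm_prune_and_search} once per family, using the appropriate \textbf{HyperplaneDecision} as the oracle. Since the oracle runs in time $T = O(n \log^2 n)$ and we have $O(n)$ groups of $O(n)$ lines, each invocation costs $O(n \log^2 n + (T+n)\log n) = O(n \log^3 n)$. Each run reports the position of (the relevant projection of) the goal placement relative to every line of its family and, through the constant-complexity feasible region it maintains, also yields a triangle in the relevant plane that contains the projected goal placement and meets no line of that family. Lifting the three triangles to cylinders in $\mathbb{R}^4$ and intersecting them produces a $4$-polytope $T_{PQR}$ bounded by $O(1)$ hyperplanes of the three allowed directions: it contains the goal placement (condition 1) and its interior avoids every type~\ref{hypertype1} hyperplane (condition 2). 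To guarantee that the three runs refer to a common goal placement when the maximizer of $g$ is not unique, I would fix it once and for all by the symbolic perturbation already used in \Cref{lemma_side_decision}.

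The main obstacle is condition 3: showing that $T_{PQR}$ meets only $O(n)$ type~\ref{hypertype2} hyperplanes. The key point is that, because $T_{PQR}$ avoids all type~\ref{hypertype1} hyperplanes, each of the three pairwise intersections $P \cap (Q+v_Q)$, $P \cap (R+v_R)$ and $(Q+v_Q)\cap(R+v_R)$ is combinatorially constant on $T_{PQR}$; in particular each has a fixed set of $O(n)$ edge-crossing vertices, and each edge of one polygon meets the boundary of another at a fixed pair of at most two edges. I would classify a triple concurrency $(e_P, e_Q, e_R)$ by which pair of edges actually crosses as segments at the meeting point; at a genuine type~\ref{hypertype2} event at least one pair does. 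In the flavor where $e_P$ and $e_Q+v_Q$ cross, that crossing is one of the $O(n)$ fixed $PQ$-crossings, and since it lies on the segment $e_Q+v_Q$, which meets $\partial(R+v_R)$ at only $O(1)$ fixed edges, the third edge $e_R$ has only $O(1)$ choices; the other two flavors are symmetric. Summing over the three flavors gives $O(n)$ type~\ref{hypertype2} hyperplanes, and each contributes boundedly many event polyhedra.

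Finally, to produce these hyperplanes and their defining $3$-tuples explicitly, I would pick a representative point in the interior of $T_{PQR}$, compute the three pairwise intersections there in $O(n \log n)$ time, enumerate their $O(n)$ crossing vertices, and for each read off the $O(1)$ candidate third edges from the other pairwise structure, recording the $3$-tuple $(e_P, e_Q, e_R)$ and the hyperplane it defines. This computation is dominated by the three prune-and-search calls, so the whole construction fits within the claimed $O(n \log^3 n)$ bound.
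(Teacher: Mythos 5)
Your proposal is correct and follows essentially the same route as the paper: three applications of \Cref{thm_prune_and_search} with \textbf{HyperplaneDecision} as the oracle (one per normal direction of the type~\ref{hypertype1} hyperplanes), intersection of the lifted triangles to form $T_{PQR}$, and the observation that fixed pairwise edge incidences inside $T_{PQR}$ limit the type~\ref{hypertype2} hyperplanes to $O(n)$. Your extra details (the symbolic perturbation for a common goal placement, which is actually unnecessary since each oracle call's halfspace contains the entire convex set of goal placements, and the explicit enumeration of the defining $3$-tuples) are harmless refinements of the paper's argument.
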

\begin{proof}
    If a \textbf{HyperplaneDecision} reports a goal placement, we are done. Thus, we assume that \textbf{HyperplaneDecision} always reports a halfspace containing a goal placement. 
    
    Each event hyperplane containing an event polyhedron of a vertex of $P$ on an edge of $Q + v_Q$ or an event polyhedron of a vertex of $Q + v_Q$ on an edge of $P$ is orthogonal to some $(x_1, y_1, 0, 0)$. We project all these event hyperplanes into the $2$-flat $S_{PQ} = \{(x_1, y_1, 0, 0) | x_1, y_1 \in \mathbb{R}\}$. Then, the images are $O(n)$ groups of $O(n)$ parallel lines. We can therefore apply \Cref{thm_prune_and_search} to these lines, where an oracle call on a line $l$ is running \textbf{HyperplaneDecision} on the hyperplane that projects to $l$ on $S_{PQ}$, which is orthogonal to some $(x_1, y_1, 0, 0)$. Thus, by \Cref{prop_hyperplane_decision}, we can find a triangle $T_{PQ} \subset S_{PQ}$ whose interior does not intersect any event hyperplane as described above in $O(n\log^3 n)$ time.
    
    Similarly, we can find the triangles 
    \[
    T_{PR} \subset \{(0, 0, x_2, y_2) | x_2, y_2 \in \mathbb{R}\} \quad \text{and} \quad T_{QR} \subset \{(x_1, y_1, -x_1, -y_1) | x_1, y_1 \in \mathbb{R}\}
    \] 
    corresponding to the other event hyperplanes of type~\ref{hypertype1} in $O(n\log^3 n)$ time. Then, the interior of 
    \begin{align*}
        T_{PQR} = \{(x_1, y_1, x_2, y_2) | & (x_1, y_1, 0, 0) \in T_{PQ}, \, (0, 0, x_2, y_2) \in T_{PR}, \,  \\
        & \left(\frac{x_1 - x_2}{2}, \frac{y_1 - y_2}{2}, \frac{x_2 - x_1}{2}, \frac{y_2 - y_1}{2} \right) \in T_{QR}\}
    \end{align*}
    does not intersect any event hyperplane of type~\ref{hypertype1} and contains a goal placement.
    
    Since the interior of $T_{PQR}$ intersects no event hyperplane of type~\ref{hypertype1}, the pairwise configuration of $P$ and $Q$, $P$ and $R$, $Q$ and $R$ are fixed (the pairwise edge incidences are fixed). Since any edge $e_P$ of $P$ intersects at most two edges of $Q$ and at most two edges of $R$ inside $T_{PQR}$, there are at most four event hyperplanes of type~\ref{hypertype2} where $e_P$ is concurrent with an edge of $Q$ and an edge of $R$. Thus, at most $4n$ event hyperplanes of type~\ref{hypertype2} intersect the interior of $T_{PQR}$. 
\end{proof}

In the rest of the section, we fix $T_{PQR}$ as in \Cref{prop_Tpqr}. Moreover, let 
\[ 
  f(v_P,v_Q) = \begin{cases}
    |P \cap (Q + v_Q) \cap (R + v_R)| & \text{if } (v_Q, v_R)\in T_{PQR} \\
    0 & \text{otherwise.}
  \end{cases}
\]

\begin{proposition}\label{prop_suppf}
  Let $S$ be any $m$-flat in the configuration space. In $O(n)$ time, we can find a point in $S \cap \mathrm{supp\,} f$, or report that $S \cap \mathrm{supp\,} f$ is empty.
\end{proposition}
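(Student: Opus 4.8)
The plan is to reduce the task to a linear program of fixed dimension with $O(n)$ constraints, which can then be solved in $O(n)$ time by Megiddo's linear-time algorithm \cite{megiddo1984}.

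First I would reformulate membership in $\mathrm{supp\,} f$. Since $f$ vanishes outside $T_{PQR}$, we have $\mathrm{supp\,} f \subseteq T_{PQR}$, and for $(v_Q, v_R) \in T_{PQR}$ the value is positive exactly when the triple overlap $P \cap (Q+v_Q) \cap (R+v_R)$ is two-dimensional. By convexity this happens if and only if the three translated polygons share a common interior point, i.e.\ there exists $x \in \mathbb{R}^2$ with $x \in \operatorname{int} P$, $x - v_Q \in \operatorname{int} Q$, and $x - v_R \in \operatorname{int} R$. Indeed, an interior point of a positive-area intersection lies in the interior of each factor, and conversely a common interior point has a neighborhood contained in all three translated polygons, forcing the overlap to have positive area. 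Thus $(v_Q, v_R) \in \mathrm{supp\,} f$ if and only if $(v_Q, v_R) \in T_{PQR}$ and such an $x$ exists.

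Next I would encode the existence of such an $x$, restricted to $S$, as a single linear feasibility problem in the variables $(x, v_Q, v_R) \in \mathbb{R}^6$. Writing each edge of $P$, $Q$, $R$ as a halfplane inequality with unit inward normal, the three interior conditions become $O(n)$ strict linear inequalities; the constraint $(v_Q, v_R) \in S$ becomes $4 - m$ linear equalities, and $(v_Q, v_R) \in T_{PQR}$ becomes $O(1)$ (non-strict) linear inequalities. To locate a strictly feasible point rather than a boundary point, I introduce a slack variable $\varepsilon$ and replace each normalized halfplane constraint $a \cdot y \leq b$ coming from $P$, $Q$, or $R$ by $a \cdot y + \varepsilon \leq b$, leaving the equalities for $S$ and the inequalities for $T_{PQR}$ unchanged. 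Maximizing $\varepsilon$ then gives a linear program in the $7$ variables $(x, v_Q, v_R, \varepsilon)$ with $O(n)$ constraints.

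Finally, since the dimension is fixed, Megiddo's algorithm \cite{megiddo1984} solves this LP in $O(n)$ time. If the optimum $\varepsilon^{*}$ is positive, the optimal solution yields $x \in \operatorname{int} P$, $x - v_Q \in \operatorname{int} Q$, $x - v_R \in \operatorname{int} R$ with $(v_Q, v_R) \in S \cap T_{PQR}$, so $(v_Q, v_R)$ is the desired point of $S \cap \mathrm{supp\,} f$; if $\varepsilon^{*} \leq 0$ (including the case where the equality and $T_{PQR}$ constraints are already infeasible), then no point of $S$ admits a common interior point and $S \cap \mathrm{supp\,} f = \varnothing$. I expect the main subtlety to be the equivalence in the first paragraph---correctly relating positive overlap area to the existence of a common interior point, and choosing the slack encoding so that strict feasibility, hence positive area, corresponds exactly to $\varepsilon^{*} > 0$---rather than the LP solve itself, which is a direct appeal to fixed-dimensional linear programming.
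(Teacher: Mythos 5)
Your proof is correct, but it reaches the linear program by a genuinely different route than the paper. The paper's proof stays entirely in the four-dimensional configuration space: it observes that $\operatorname{supp} f$ is a convex $4$-polytope whose facets lie on event hyperplanes of type I or type II, invokes \Cref{prop_Tpqr} to restrict attention to the $O(n)$ type II hyperplanes crossing the interior of $T_{PQR}$, decides which of these actually carry facets of $\operatorname{supp} f$ by a constant-time tangency test (whether the polygon $P \cap Q$ is tangent to $R$ there), and only then runs Megiddo's fixed-dimension algorithm on the resulting $O(n)$-facet description intersected with $S$. You instead bypass any analysis of the facial structure of $\operatorname{supp} f$: by introducing the witness point $x \in \mathbb{R}^2$ and the slack $\varepsilon$, you lift to an LP in $\mathbb{R}^7$ whose constraints are read off directly from the edges of $P$, $Q$, $R$, the $O(1)$ facets of $T_{PQR}$, and the equations defining $S$. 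This is more elementary and self-contained, resting only on the facts that the interior of a finite intersection is the intersection of the interiors and that a planar convex set has positive area iff it has nonempty interior; it also handles cleanly the strict-versus-nonstrict issue that the paper's appeal to a ``feasibility test'' glosses over ($\operatorname{supp} f = \{f > 0\}$ is not closed, so plain feasibility on the closed polytope could return a point with $f = 0$), since your maximize-$\varepsilon$ formulation makes $\varepsilon^{*} > 0$ exactly equivalent to $S \cap \operatorname{supp} f \neq \varnothing$. What the paper's route buys is an explicit halfspace description of $\operatorname{supp} f$ itself, in dimension $4$ rather than $7$, which is conceptually sharper; but nothing downstream requires that description, and both arguments run in $O(n)$ time.
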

\begin{proof}
  Notice that $\mathrm{supp\,} f$ is a convex 4-polytope whose face are hyperplanes of type I or type II. Let $H$ be a hyperplane of type II intersecting the interior of $T_{PQR}$. Then $H$ contains a face of $\mathrm{supp\,} f$ if and only if a polygon $P \cap Q$ is tangent to $R$ in $H \cap T_{PQR}$. This can be tested in constant time, so we can find all faces of $\mathrm{supp\,} f$ in $O(n)$ time. Our problem become a feasibility test of a linear programming of size $O(n)$, which can be solved in $O(n)$ time by Megiddo's algorithm \cite{megiddo1984}.
\end{proof}
\begin{proof}[Proof of \Cref{thm_three_polygons}]
    Take $T_{PQR}$ as in \Cref{prop_Tpqr}. Let
    \[ 
        f(v_P,v_Q) = \begin{cases}
            |P \cap (Q + v_Q) \cap (R + v_R)| & \text{if } (v_Q, v_R)\in T_{PQR} \\
            0 & \text{otherwise.}
        \end{cases}
    \]
    Then $f$ is unimodal and the maximum of $f$ is the goal placement. Given an $m$-flat $S$, we want to compute the maximum of $f$ on $S$ in $O(n \log^{m-1})$ time by induction on $m \in \{1, 2, 3, 4\}$.

    If $m = 1$, this can be done in $O(n)$ time by \Cref{prop_line_decision}. Assume that $m > 1$. Then $S \cap T_{PQR}$ can be computed in $O(1)$ time. Given an $(m-1)$-flat $l \subset S$, we can use \Cref{prop_suppf} and the perturbation method as in \Cref{lemma_side_decision} to report the relative position of the maximum over $S$. There are $O(n)$ event hyperplane intersecting $S\cap T_{PQR}$. Thus, by \Cref{lemma_cutting}, we can recursively construct $(1/2)$-cuttings to give an $O(n \log^{m-1})$ time algorithm to find the maximum of $f$ on $S$.
\end{proof}



\section{Minimum symmetric difference of two convex polygons under homothety}\label{sec_symmetric_difference}
A homothety $\varphi\colon\mathbb{R}^2\rightarrow\mathbb{R}^2$ is a composition of a scaling and a translation. Let $\lambda>0$ be the scaling factor and $v$ be the translation vector of $\varphi$. Then
\[\varphi(A)=\lambda A + v = \{\lambda p + v \mid p\in A\}.\]
Define the \textit{symmetric difference} of sets $A, B \subset \mathbb{R}^2$ to be 
\begin{align*}
    A \triangle B := & (A \cup B) \setminus (A \cap B) \\
    = & (A \setminus B) \cup (B \setminus A).
\end{align*}

Let $P$ and $Q$ be convex polygons with $n$ vertices in total. We want to find a homothety $\varphi$ of $Q$ that minimizes the area of symmetric difference 
\[
    h(\varphi) = h(x,y, \lambda) = |P \triangle \varphi(Q)|,
\]
where $\varphi(Q) = \lambda Q + (x,y)$. 

Yon et al. \cite{yon2016} consider a slightly more general problem, where they minimize the function 
\[
    h(\varphi) = (2 - 2\kappa)|P \setminus \varphi(Q)| + 2\kappa |\varphi(Q) \setminus P|,
\]
where $\kappa \in (0,1)$ is some constant. When $\kappa = 1/2$, this is the area of symmetric difference function. They give a randomized algorithm that solves this problem in $O(n \log^3 n)$ expected time. We present a faster determinisitc algorithm by relating this problem to the polyhedron-polygon matching problem and then applying a modified version of \Cref{thm_algo}. 

As in \cite{yon2016}, we rewrite the objective function $h(\varphi)$:
\begin{align*}
    h(\varphi) 
    &= 2(1 - \kappa)|P| + 2\kappa|\varphi(Q)| - 2|P \cap \varphi(Q)|\\
    &= 2(1 - \kappa)|P| + 2\kappa |Q| \lambda^2 - 2 |P \cap \varphi(Q)|.
\end{align*}
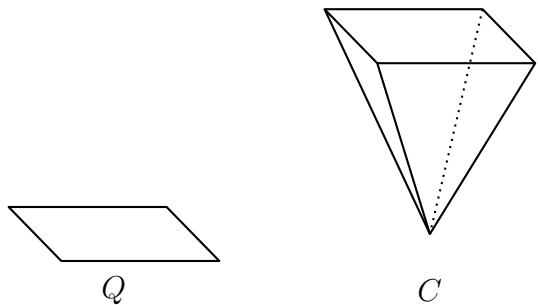
\begin{figure}[ht]
  \centering 
  \tdplotsetmaincoords{110}{0}
  \begin{tikzpicture}[tdplot_main_coords,thick,scale=1.4,line join=round]
    \begin{scope}
      \coordinate (A) at (0  ,0  );
      \coordinate (B) at (1.5,0  );
      \coordinate (C) at (2  ,1.5);
      \coordinate (D) at (0.5,1.5);
      
      \draw (A) -- (B) -- (C) -- (D) -- cycle;

      \node at (1,2.3) {$Q$};
    \end{scope}
    \begin{scope}[shift={(3,0)}]
      \coordinate (O) at (1  ,0.75,0);
      \coordinate (A) at (0  ,0   ,2);
      \coordinate (B) at (1.5,0   ,2);
      \coordinate (C) at (2  ,1.5 ,2);
      \coordinate (D) at (0.5,1.5 ,2);
      
      \draw (A) -- (B) -- (C);
      \draw[dotted] (B) -- (O);
      \draw (C) -- (D) -- (A) -- (O) -- cycle;
      \draw (D) -- (O);

      \node at (1,2.3) {$C$};
    \end{scope}
  \end{tikzpicture}
  \caption{Formation of the cone $C$.}\label{fig_cone}
\end{figure}
Thus, minimizing $h(\varphi)$ is the same as maximizing $f(\varphi) = |P \cap \varphi(Q)| - c \lambda^2$, where $c = \kappa |Q|$. Consider the cone $C = \{(x, y, \lambda)| \lambda \in [0, M], (x, y) \in \lambda Q\}$, where $M = \sqrt{|P|/c}$ (see \Cref{fig_cone}).  Then $f$ is negative for $\lambda > M$ so it is never maximized. We also put $P$ into $\mathbb{R}^3$ by $P = \{(x, y, 0) | (x, y) \in P\}$. Since $f(x, y, \lambda) = |C \cap (P + (-x, -y, \lambda))| - c \lambda^2$, the problem reduces to maximizing the overlap area of the cone $C$ and $P$ under translation subtracted by a quadratic function. To show that we can still use a divide-and-conquer strategy, we identify a region where $f$ is strictly unimodal.
\begin{lemma}[\cite{yon2016}]\label{lemma_region_D}
    The closure $\mathcal{D}$ of the set $\{\varphi \in \mathbb{R}^3| f(\varphi) > 0\}$ is convex. Furthermore, $f(x, y, \lambda)$ is strictly unimodal on $\mathcal{D}$.
\end{lemma}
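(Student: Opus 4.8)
The plan is to exploit the three-dimensional reformulation already set up. Writing $g(x,y,\lambda) = |C \cap (P + (-x,-y,\lambda))| = |P \cap \varphi(Q)|$ for the pure overlap, we have $f = g - c\lambda^2$, and since $C$ is a convex polyhedron and $P$ a convex polygon, \Cref{prop_concavity} (with $d' = 3$ and $d = 2$) tells us that $g^{1/2}$ is concave on its support, which is itself convex. Because the coordinate change $(x,y,\lambda) \mapsto (-x,-y,\lambda)$ is linear, this concavity and the convexity of the support survive the reparameterization into homothety coordinates. This single fact will drive both claims; the whole point of passing to the cone $C$ is that it turns the scaling into a genuine third translation coordinate, so that \Cref{prop_concavity}, which only applies to translations, becomes available.

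For convexity of $\mathcal{D}$, I would first observe that $\{f > 0\} \subseteq \{\lambda > 0\}$: when $\lambda \le 0$ the cone $C$ has empty or degenerate cross-section, so $g = 0$ and $f = -c\lambda^2 \le 0$. On the region $\lambda > 0$ we have the equivalence $g > c\lambda^2 \iff g^{1/2} - \sqrt{c}\,\lambda > 0$, and the function $g^{1/2} - \sqrt{c}\,\lambda$ is concave on the convex support of $g$, being a concave function minus an affine one. Its strict superlevel set is therefore convex, so $\{f > 0\}$ is convex and its closure $\mathcal{D}$ is convex.

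For strict unimodality I would restrict $f$ to an arbitrary line $\ell(t)$ and set $F(t) = f(\ell(t))$, $u(t) = g(\ell(t))^{1/2}$, and $v(t) = \sqrt{c}\,\lambda(t)$, where $u$ is nonnegative and concave and $v$ is affine. On the interval $\mathcal{D} \cap \ell = [t_0,t_1]$ the interior satisfies $F > 0$ (using that $f$ is continuous, so $\{f>0\}$ is open and meets $\ell$ in the open interval), hence $u > |v|$, so $u+v$ and $u-v$ are both positive and concave; writing $F = (u+v)(u-v)$ exhibits $F$ as a product of positive concave functions. Since a nonnegative concave function has concave logarithm, $\log F = \log(u+v) + \log(u-v)$ is concave, so $F$ is log-concave and therefore unimodal.

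The main obstacle is upgrading unimodality to \emph{strict} unimodality, i.e. ruling out a flat subinterval, and I would do this by cases on whether $\lambda$ varies along $\ell$. If $\lambda$ is non-constant and $F \equiv k$ on a subinterval, then $k > 0$ and $\log F$ is constant there; since a sum of concave functions can be affine only if each summand is affine, both $\log(u+v)$ and $\log(u-v)$ are affine, forcing $u+v$ and $u-v$ to be exponentials that are simultaneously concave, hence constant, hence $v$ constant --- contradicting that $\lambda$ varies. If instead $\lambda$ is constant along $\ell$, then $F = g - \mathrm{const}$, and $g$ restricted to $\ell$ is strictly unimodal by \Cref{cor_unimodality} applied to the overlap of $C$ and $P$, so $F$ is strictly unimodal as well. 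Both cases yield strict unimodality; the only delicate points are the boundary behaviour at $t_0$ and $t_1$ and the openness of $\{f>0\}$, both of which follow from the continuity of $f$.
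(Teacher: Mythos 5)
Your proposal is correct, but it takes a genuinely different route from the paper, for the simple reason that the paper does not prove this lemma at all: its ``proof'' is a citation of Lemmas 2.2 and 2.7 of Yon et al.~\cite{yon2016}. You instead derive both claims self-containedly from \Cref{prop_concavity}, applied to the cone $C$ and the embedded polygon $P$ (with $d'=3$, $d=2$), which is exactly what the cone construction is for. Both of your key steps are sound: convexity of $\mathcal{D}$ because $\{f>0\}$ coincides with the strict superlevel set of the concave function $\sqrt{g}-\sqrt{c}\,\lambda$ on the convex set $\operatorname{supp}(g)$ (after noting $\operatorname{supp}(g)\subset\{\lambda>0\}$), and unimodality from the factorization $f=\bigl(\sqrt{g}+\sqrt{c}\,\lambda\bigr)\bigl(\sqrt{g}-\sqrt{c}\,\lambda\bigr)$ into positive concave factors, hence log-concavity, on any open segment interior to $\mathcal{D}$. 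This is essentially the mathematics underlying Yon et al.'s lemmas, and the concave-minus-affine trick is the very one this paper reuses in \Cref{lemma_modified_line_max}; what your version buys is self-containedness (the lemma now follows from tools already stated in the paper) plus a slightly stronger conclusion in your first case (a unique maximizer along any line on which $\lambda$ varies), while the citation buys brevity. Two caveats, neither a real gap. First, a line may meet $\mathcal{D}$ only in its boundary, where $f\equiv 0$; there your claim that the interior of $[t_0,t_1]$ lies in $\{f>0\}$ fails, and that degenerate case should be dispatched separately (it is vacuous under the paper's conventions). Second, ``strictly unimodal'' must here be read as permitting a flat stretch at the maximum: under a unique-maximum reading the lemma itself would be false on horizontal lines, and already \Cref{cor_unimodality} would fail (e.g.\ for two axis-aligned rectangles of different widths). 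Under that reading, log-concavity alone finishes the proof, since a concave function that is constant on a subinterval attains its maximum there; so your case analysis is a strengthening rather than a necessity, and your Case 2 correctly falls back on \Cref{cor_unimodality}, which carries the same flat-maximum convention.
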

\begin{proof}
    This follows from \cite[Lemma 2.2]{yon2016} and \cite[Lemma 2.7]{yon2016}. 
\end{proof}
Although it is difficult to directly compute $\mathcal{D}$, note that $-P \subset \mathcal{D}$. With this observation, we show that we can still find the relative position of the set of goal placements to certain planes $S$ in $O(n \log n)$ time with some modifications to 
\textbf{LineDecision} and \textbf{PlaneDecision}. 
\begin{lemma}\label{lemma_modified_line_max}
    For any $l \subset \mathbb{R}^3$, we can compute $\operatorname{max}_{\varphi \in l} f(\varphi)$ or report it is a negative number in $O(n)$ time.
\end{lemma}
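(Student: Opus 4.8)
The plan is to mimic the argument for \textbf{LineDecision} in \Cref{prop_line_decision}, replacing the cross-section maximization by one that accounts for the extra quadratic term $-c\lambda^2$. Parameterize $l$ by $\varphi = p + vt$ with $p = (p_1,p_2,p_3)$ and $v = (v_1,v_2,v_3)$. As $\varphi$ moves along $l$, the polygon $\varphi(Q) = \lambda Q + (x,y)$ moves, and equivalently $P$ is translated by $w(t) = (-x,-y,\lambda)$, which traces an affine line $l''$ in $\mathbb{R}^3$. I would form the convex polytope
\[
  I = C \cap (P + l''),
\]
where $P$ sits in the plane $z = 0$ and $P + l''$ is the (slanted) prism swept by $P$ along $l''$; both have $O(n)$ facets, so by \cite{chazelle1992} the polytope $I$ is computed in $O(n)$ time. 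The crucial observation is that the $z$-coordinate of $w(t)$ equals $\lambda = p_3 + v_3 t$, so the horizontal cross-section of $I$ at height $z = \lambda$ has area exactly $A(\lambda) := |P \cap \varphi(Q)|$. Writing everything as a function of the height $z$, the restriction of $f$ to $l$ becomes
\[
  F(z) = A(z) - c z^2.
\]
When $v_3 = 0$ the height is constant along $l$, the term $-c\lambda^2$ is a fixed additive constant, and we recover \Cref{prop_line_decision} verbatim, so I would assume $v_3 \neq 0$.

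Next I would exploit the structure of $F$. Its breakpoints are the $z$-coordinates of the $O(n)$ vertices of $I$, and on each slab between consecutive breakpoints $A(z)$, hence $F(z)$, is a quadratic polynomial; the area function $A$ itself is sqrt-concave by \Cref{prop_concavity}. By \Cref{lemma_region_D}, $F$ is strictly unimodal on its support, which is what permits a divide-and-conquer search for its maximum rather than an exhaustive scan over all $O(n)$ quadratic pieces. I would therefore adapt Avis et al.'s linear-time sectional-area maximization \cite{avis1996}: instead of locating the maximum of the sqrt-concave function $A$, we locate the maximum of the unimodal function $A(z) - c z^2$. Concretely, using linear-time selection on the vertex heights to pick a median slab boundary and testing the sign of the one-sided derivatives of $F$ there, we prune the half of the slabs that cannot contain the maximizer, while maintaining incrementally the edges of $I$ active at the current height so that the local quadratic coefficients of $A$ are available in amortized linear total time. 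This yields the maximizer $z^{*}$ and value $F(z^{*})$ in $O(n)$ time, and the corresponding $\varphi \in l$ is recovered from $z^{*} = p_3 + v_3 t$.

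For the reporting step: if the search returns $F(z^{*}) > 0$ we output this maximum; otherwise $\operatorname{max}_{\varphi \in l} f(\varphi) \le 0$ and we report that it is negative. Here the observation $-P \subset \mathcal{D}$ is convenient for the degenerate cases: if $l$ misses the relevant portion of the cone $C$ (so $A \equiv 0$ and $F(z) = -c z^2$), or if the maximum occurs at a boundary scale $\lambda \in \{0, M\}$, then $F \le 0$ and the report is immediate, just as \textbf{LineDecision} reports the side containing $S \cap (P-Q)$ in the zero case.

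The main obstacle I anticipate is the last mile of the linear-time bound. A plain prune-and-search that recomputes the cross-section polygon of $I$ at each queried height costs $\Theta(n)$ per round and hence $O(n \log n)$ overall, because the cross-section can itself have $\Theta(n)$ edges. Achieving $O(n)$ requires, exactly as in \cite{avis1996}, maintaining the active edge set across the pruning steps so that the work per round is proportional to the number of slabs still under consideration. The delicate point is to verify that the extra concave term $-c z^2$ does not disturb this amortization: it only shifts the derivative tests, since the linear contribution $-2cz$ is trivially maintained, so unimodality from \Cref{lemma_region_D} (rather than sqrt-concavity of $A$ alone) is all that the search needs.
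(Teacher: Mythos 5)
Your reduction (building $I$ in $O(n)$ time, writing the restriction of $f$ to $l$ as $F(z)=A(z)-cz^{2}$, and handling the horizontal case via \Cref{prop_line_decision}) is exactly the paper's, but the search step has a genuine gap. Your pruning rule tests the sign of the one-sided derivatives of $F$ at a median breakpoint and is justified by ``unimodality from \Cref{lemma_region_D}.'' But that lemma only gives strict unimodality of $f$ on $\mathcal{D}$, the closure of $\{f>0\}$; it says nothing about points of $l$ where $F\le 0$, and there the derivative test can point the wrong way. Concretely, the facts you invoke (concavity of $\sqrt{A}$ on its support, unimodality on $\mathcal{D}$) allow, say, $c=1$ with $\sqrt{A(z)}=3z-5.5$ on $[5.5/3,\,3]$ and $\sqrt{A(z)}=3.5$ on $[3,\,3.5]$ (this is nonnegative, concave, piecewise linear): then $F(z)=8z^{2}-33z+30.25$ on the first piece, which \emph{decreases} until $z=33/16$ and then increases to $F(3)=3.25>0$, after which $F(z)=12.25-z^{2}$ decreases to $0$. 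So $F$ is not unimodal on $l$, and a query landing in the initial dip (say $z=1.9$) has negative derivative on both sides, so your rule prunes the half containing the unique positive maximum. The same failure can corrupt your final report: the search can converge to a nonpositive local maximum and declare the answer ``negative'' when it is in fact positive. Nothing in your write-up supplies the information needed to choose the correct recursion direction at query points where $F\le 0$.

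What is missing is the paper's two-phase idea. Phase one maximizes the \emph{concave surrogate} $\sqrt{A(t)}-\sqrt{c}\,\lambda(t)$, which is concave on the support of $A$ (concave plus affine) and of the same complexity as $\sqrt{A}$, using \cite[Theorem 3.2]{avis1996} in $O(n)$ time. Because $\lambda\ge 0$ wherever $A>0$ (the cone lives in $0\le\lambda\le M$) and $F\le 0$ off the support of $A$, the sign of this surrogate maximum equals the sign of $\max_{\varphi\in l}f(\varphi)$; if it is nonpositive, you report ``negative'' and stop. Otherwise its maximizer $t'$ is a certified point of $\mathcal{D}\cap l$, and phase two is prune-and-search for $F$ with the rule: at a query point $t$ with $F(t)<0$, recurse toward $t'$ --- valid because $\mathcal{D}\cap l$ is an interval containing $t'$ but not $t$ --- and at a query point inside $\mathcal{D}$, use the derivative test, now genuinely justified by \Cref{lemma_region_D}. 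Your proposal needs this anchor (or an equivalent mechanism for deciding direction at nonpositive query points); by comparison, your closing worry about amortizing the active-edge maintenance is minor and is absorbed by invoking \cite{avis1996} as a black box, as the paper does.
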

\begin{proof}
    If $l$ is horizontal, then we can apply \Cref{prop_line_decision} since $c \lambda$ is constant. Otherwise, we parameterize $l$ by $l = p + v t$ and construct the convex polyhedron $I$ whose cross-section $I(t_0)$ at $t = t_0$ has area $|C \cap (P + (p + vt_0))|$ as in the proof of \Cref{prop_line_decision}. It comes down to maximizing $|I(t)| - c(\lambda(t))^2$, where $\lambda(t)$ is the $\lambda$-coordinate of $p + vt$. Since $\sqrt{|I(t)|}$ is a concave function, $\sqrt{|I(t)|} - \sqrt{c} \lambda(t)$ is also concave, and has the same complexity as $\sqrt{|I(t)|}$. Thus, we can apply \cite[Theorem 3.2]{avis1996} to find the maximum of $\sqrt{|I(t)|} - \sqrt{c} \lambda(t)$. Supposed it is achieved at $t'$. Although $t'$ may not be where the maximum of $|I(t)| - c(\lambda(t))^2$ is, it tells us whether the maximum is positive. If not, we can simply terminate the process. If it is, we know that $l$ intersects $\mathcal{D}$, and $p + v t' \in \mathcal{D}$. This allows us to use divide-and-conquer as in \cite{avis1996}, since we can recurse in the direction of $t'$ whenever we query a point $t$ and find $f(t) < 0$.
\end{proof}

\begin{proposition}\label{prop_modified_plane}
    Let $S \subset \mathbb{R}^3$ be a plane. If $S$ is horizontal or if $S$ intersects the polygon $- P \subset \mathcal{D}$, then we can perform \textbf{PlaneDecision} on $S$ in $O(n \log n)$ time.
\end{proposition}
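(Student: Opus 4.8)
The plan is to reduce \textbf{PlaneDecision} on $S$ to computing $\operatorname{max}_{\varphi \in S} f(\varphi)$, exactly as \Cref{lemma_side_decision} reduces the original \textbf{PlaneDecision} to maximizing over a plane via the formal perturbation argument over $\mathbb{R}[\omega]/(\omega^3)$. The perturbation step is insensitive to the change from $f$ being an overlap area to $f$ being an overlap area minus a quadratic term, because $f$ is still piecewise quadratic and, by \Cref{lemma_region_D}, still strictly unimodal on $\mathcal{D}$; so once we can compute the maximum over $S$ (and over the two nearby parallel planes $S_{+}, S_{-}$) in $O(n \log n)$ time, the three-case comparison determining which halfspace contains the goal placement goes through verbatim. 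Hence the whole task collapses to: compute $\operatorname{max}_{\varphi \in S} f(\varphi)$ in $O(n \log n)$ time under the stated hypotheses on $S$.

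I would handle the two hypotheses separately. First, if $S$ is horizontal, then $\lambda$ is constant on $S$, so the penalty term $c\lambda^2$ is a fixed constant and maximizing $f$ over $S$ is the same as maximizing $|C \cap (P + v)|$ over $v \in S$ up to an additive constant. The horizontal cross-section of the cone $C$ at the fixed height is a scaled copy of $Q$, so this is just the planar maximum-overlap problem for two convex polygons, solvable in $O(n \log n)$ time by De Berg et al.~\cite{deberg1996}, precisely as in the horizontal case of the original \textbf{PlaneDecision}. Second, if $S$ meets the polygon $-P \subset \mathcal{D}$, I would run the same three-step scheme as \textbf{PlaneDecision} (Step 1: intersect $S$ with the support region; Step 2: locate a strip meeting $O(n)$ event polygons; Step 3: construct a $(1/2)$-cutting and recurse), but with the modified \textbf{LineDecision} coming from \Cref{lemma_modified_line_max} in place of \Cref{prop_line_decision}. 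The point of the hypothesis $S \cap (-P) \neq \varnothing$ is that it hands us a concrete point of $\mathcal{D}$ lying on $S$: this is exactly the anchor that \Cref{lemma_modified_line_max} needs, since its divide-and-conquer recursion requires being able to step toward a known point of $\mathcal{D}$ whenever a queried point yields $f < 0$.

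The main obstacle is that, unlike in the pure overlap setting, the maximum of $f$ over a line can be negative, so the function being maximized is no longer guaranteed positive on the region where the combinatorial structure is being searched; this is what forces the reliance on \Cref{lemma_modified_line_max} rather than a direct appeal to \cite{avis1996}. Concretely, I must argue that the event-polygon structure used in \Cref{lemma_plane_step_2.1,lemma_plane_step_2.2,lemma_plane_step_3} is unaffected by subtracting $c\lambda^2$: the events still come from face-edge incidences between $C$ and $P + v$, so the event polygons of types \ref{type1}--\ref{type3} and the sorted-matrix monotonicity arguments carry over unchanged, since the added penalty $c\lambda^2$ is a globally smooth function and contributes no new breakpoints to $f$. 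Once the strip is found and $f$ is expressed as a single quadratic on the final cutting cell, the maximization is again elementary calculus. Thus every step matches the original \textbf{PlaneDecision} except for swapping in the modified line routine, and the only genuinely new ingredient is verifying that the hypotheses on $S$ supply the starting point in $\mathcal{D}$ that \Cref{lemma_modified_line_max} consumes.
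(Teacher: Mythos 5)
Your proposal follows the paper's route almost exactly: the same case split (horizontal $S$ reduced to De Berg et al.; non-horizontal $S$ anchored at $s = S \cap (-P)$), the same substitution of \Cref{lemma_modified_line_max} for \Cref{prop_line_decision} inside the three-step \textbf{PlaneDecision} scheme, and the same final appeal to the perturbation of \Cref{lemma_side_decision}. The gap is in your claim that, once $\max_{\varphi \in S} f(\varphi)$ is computable, the three-case comparison of \Cref{lemma_side_decision} ``goes through verbatim'' because $f$ is strictly unimodal on $\mathcal{D}$. That justification only covers planes with $\max_{\varphi \in S} f(\varphi) \geq 0$: the proof of \Cref{lemma_side_decision} takes a maximizer on $S$ with $A > 0$ and applies unimodality along segments joining it (or maximizers on $S_{\pm}$) to a goal placement, which is legitimate only when those segments lie in $\mathcal{D}$. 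If $\max_S f < 0$, then $S$ is disjoint from $\mathcal{D}$, the maximizer on $S$ lies outside $\mathcal{D}$, and \Cref{lemma_region_D} says nothing about $f$ along such segments. This case is not vacuous: it occurs precisely for horizontal planes, which are queried by the outer algorithm --- for instance, Stage 1's binary search reaches the vertex level $\lambda = M = \sqrt{|P|/c}$ of the cone $C$, where the overlap is at most $|P| = cM^2$ and hence $f \leq 0$, generically $f < 0$. The paper closes this case with one extra branch that your proof is missing: if the computed maximum over a horizontal $S$ is negative, report the side of $S$ containing $-P$. This is correct because $\mathcal{D}$ is convex, contains $-P$ and every goal placement, and is disjoint from $S$. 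Without this branch (or a separate monotonicity argument for $f$ outside $\mathcal{D}$, which you do not give), the oracle's answer is unjustified on exactly the planes it will be asked about.

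A second, more minor, inaccuracy: you attribute the need for the anchor to \Cref{lemma_modified_line_max}, but that lemma is stated for an arbitrary line and needs no external point of $\mathcal{D}$ --- when its maximum is nonnegative it manufactures its own interior point $p + vt'$ of $\mathcal{D}$. The anchor $s = S \cap (-P)$ is needed one level up, inside \textbf{LineDecision} on $S$: when \Cref{lemma_modified_line_max} reports that $\max_{\varphi \in l} f(\varphi)$ is negative for a queried line $l \subset S$, one must report the side of $l$ within $S$ containing $s$, which is valid because $\mathcal{D} \cap S$ is convex, contains $s$ and all good placements on $S$, and is disjoint from $l$. With that relocation, and with the negative-maximum branch added for horizontal planes, your argument coincides with the paper's proof.
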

\begin{proof}
If $S$ is horizontal, then we can apply \Cref{pseudo_plane}. If the maximum is negative, then we simply report the side of $S$ containing $-P$, otherwise we proceed as in \Cref{lemma_side_decision}. 

Now assume $S$ is non-horizontal and intersects $-P$. Let $s = S \cap (-P)$. Then we know that $s \subset \mathcal{D}$. Let $l \subset S$ be a line we want to run the subroutine \textbf{LineDecision} on. By \Cref{lemma_modified_line_max}, we can find $\operatorname{max}_{\varphi \in l} f(\varphi)$ or report it is negative in $O(n)$ time. If it is the latter case, we report the side of $l$ containing $s$. Otherwise, $l$ intersects $\mathcal{D}$, and we can proceed as in \Cref{lemma_side_decision}. Thus, we can still find $\operatorname{max}_{\varphi \in S} f(\varphi)$ in $O(n \log n)$ time. Since $S$ intersects $\mathcal{D}$, we can use \Cref{lemma_side_decision} to complete \textbf{PlaneDecision} on $S$.
\end{proof}

\begin{theorem}\label{thm_general_symm_diff}
    Let $P$ and $Q$ be convex polygons with $n$ vertices in total. Suppose $\kappa \in (0, 1)$ is a constant. We can find a homothety $\varphi$ that minimizes 
    \[
        h(\varphi) = 2(1 - \kappa)|P \setminus \varphi(Q)| + 2\kappa |\varphi(Q) \setminus P|
    \]
    in $O(n \log^2 n)$ time. 
\end{theorem}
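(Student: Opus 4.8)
The plan is to recast the problem as a single (modified) instance of the polyhedron--polygon matching problem of \Cref{thm_algo} and then to rerun that algorithm, substituting the modified primitives. By the reduction already established, minimizing $h$ is equivalent to maximizing $f(x,y,\lambda)=|C\cap(P+(-x,-y,\lambda))|-c\lambda^2$: the overlap of the cone $C$ with a translated copy of the embedded polygon $P$, corrected by a quadratic in the scaling coordinate $\lambda$. Here $C$ plays the role of the polyhedron and $P$ the role of the polygon. By \Cref{lemma_region_D}, $f$ is strictly unimodal on the convex region $\mathcal D=\overline{\{f>0\}}$, so the divide-and-conquer of \Cref{thm_algo} still applies once $\mathcal D$ takes the place of the pure-overlap support; moreover $-P\subset\mathcal D$ supplies a cheap anchor inside $\mathcal D$. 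The only changes to the pipeline are that \textbf{LineDecision} is answered by \Cref{lemma_modified_line_max} and \textbf{PlaneDecision} by \Cref{prop_modified_plane}.

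First I would note that Stage~1 is essentially free. The cone $C$ already has all of its vertices on the two horizontal planes $\lambda=0$ (its apex at the origin, which we may take to lie in $\operatorname{int}(-P)$ after translating $P$ so that $0\in\operatorname{int}P$) and $\lambda=M$ (its base $MQ$); this is exactly the two-level structure that Stage~1 of \Cref{thm_algo} is designed to manufacture. Hence the support of $f$ has $O(n)$ complexity, there are no type~\ref{type1} events in the open slab $0<\lambda<M$, and I would keep the entire slab so that the anchor $-P$ on $\lambda=0$ is never discarded. I would then run Stages~2 and~3 as in \Cref{thm_algo}: the prune-and-search of \Cref{thm_prune_and_search} with $m=O(n)$ to trim the configuration space to a tube meeting only $O(n)$ event polygons, followed by recursive $(1/2)$-cuttings (\Cref{lemma_cutting}) to isolate a simplex on which $f$ is a single quadratic, maximized by calculus. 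Each \textbf{LineDecision} costs $O(n)$ and each \textbf{PlaneDecision} costs $O(n\log n)$, so the recursion tree matches that of \Cref{thm_algo} and the total running time is $O(n\log^2 n)$; taking $\kappa=1/2$ specializes to \Cref{thm_symmetric_difference}.

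The hard part will be honoring the hypothesis of \Cref{prop_modified_plane}, which certifies \textbf{PlaneDecision} only on planes that are horizontal or that meet the anchor $-P$. The horizontal queries ($\lambda=\text{const}$) that control the scaling direction are automatically admissible. For the remaining queries I would maintain the invariant that the current target region always retains a point of $-P$ (for instance the apex/origin, which lies in $\operatorname{int}(-P)$), and I would perform every split or triangulation using planes through $-P$, which is possible exactly because the region still meets the anchor; such planes satisfy the hypothesis of \Cref{prop_modified_plane}. A plane $S$ that misses $\mathcal D$ altogether is handled separately: \Cref{lemma_modified_line_max} then returns a negative value on every probing line of $S$, certifying $\max_{v\in S}f<0$, and I would report the side of $S$ containing $-P$. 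Checking that each of these pruning decisions retains the goal placement is where \Cref{lemma_region_D} does the real work, since it is the unimodality of $f$ on $\mathcal D$---rather than the clean concavity available for a pure overlap---that justifies discarding the complementary piece.
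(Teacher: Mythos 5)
Your reduction, anchor $-P$, and modified primitives (\Cref{lemma_modified_line_max}, \Cref{prop_modified_plane}) coincide with the paper's, and you correctly isolate the crux: every non-horizontal \textbf{PlaneDecision} query must be on a plane meeting $-P$. But the mechanism you give for this has a genuine gap. The invariant you propose --- that the current target region always retains a point of $-P$ --- cannot be maintained: on the level $\lambda=0$ one has $f\equiv 0$, while the maximum of $f$ is strictly positive (place a tiny copy $\lambda Q$ inside $P$), so the goal placement lies at some height $\lambda^{*}>0$; as soon as a horizontal query reports that the goal lies above a level $\lambda_q>0$, the pruning must discard the slab containing $-P$ (refusing to discard it would keep alive every event polygon crossing that slab and wreck the complexity analysis). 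The invariant is also not what admissibility needs: a plane through $-P$ can perfectly well cut a target region lying far above level $0$, so ``the region meets the anchor'' is neither necessary nor sufficient. What is actually required --- and what you never establish --- is that the \emph{specific} planes demanded by the pruning machinery can be chosen to pass through $-P$.

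Concretely, the prune-and-search of \Cref{thm_prune_and_search}/\Cref{lemma_balanced_quadrants} queries vertical lines $x=x_q$ in the projection plane; these correspond to non-horizontal planes in the configuration space that need not meet $-P$, so running Stage 2 ``as in \Cref{thm_algo}'' issues inadmissible queries. The paper's fix hinges on a geometric fact absent from your proposal: every lateral edge of the cone $C$ passes through the apex, so every non-horizontal edge of an event polygon (of type~\ref{type2} or~\ref{type3}) contains a point of $-P$, and hence all $O(n^2)$ projected lines cross the projection of $-P$. This is what lets one replace the vertical queries by lines going \emph{between} two of these lines (still crossing the projection of $-P$, hence admissible) while keeping the constant-fraction pruning guarantee, with $O(1)$ horizontal cuts maintaining a trapezoidal rather than triangular region; likewise in Stage 3 the $(1/2)$-cutting must be built only from event-polygon planes (which meet $-P$) plus horizontal planes and planes through a non-horizontal edge of the target region (such a plane lies between the two event planes through that edge, hence again meets $-P$), yielding prisms and pyramids instead of an arbitrary triangulation. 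Finally, your fallback for planes missing $\mathcal{D}$ misses the real problem case: a non-horizontal plane that meets $\mathcal{D}$ but not $-P$ cannot be decided by the available primitives at all, since when a probing line reports a negative maximum there is no anchor point in that plane to say which side to recurse on; such planes must simply never be queried, which is exactly what the missing construction is for.
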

\begin{proof}
    We want to maximize $f(x, y, \lambda) = |C \cap (P + (-x, -y, \lambda))| - c \lambda^2$ over $\mathbb{R}^3$, where $c = \kappa |Q|$. 
    In order to apply our algorithm for \Cref{thm_algo}, we need to show that we only run \textbf{PlaneDecision} on horizontal planes and planes that intersect $- P$. 
    
    In the first stage (as outlined in \Cref{pseudo_algo}), we only run \textbf{PlaneDecision} on horizontal planes. 
    
    In the second stage, we apply \Cref{thm_prune_and_search} to the $O(n)$ groups of $O(n)$ lines that are the projections of the lines containing edges of event polygons on the $xz$-plane. Observe that these lines all intersect the projection of $- P$ on the $xz$-plane. In each recursive step of our algorithm, we query a horizontal (parallel to the $x$-axis) line and a line that goes ``between'' two lines in the $O(n^2)$ lines. The planes they represent both satisfy the condition for \Cref{prop_modified_plane}. Then we run \textbf{PlaneDecision} $O(1)$ more times to triangulate our feasible region. Here, we make a small modification: instead of maintaining a triangular feasible region, we maintain a trapezoidal one by making $O(1)$ horizontal cuts to make the region a trapezoid. 
    
    In the third stage, we have a ``tube'' and $O(n)$ event polygons that intersect it. As usual, we recursively construct a $(1/2)$-cutting by \Cref{lemma_cutting}. Chazelle's algorithm \cite{chazelle1993} picks $O(1)$ planes intersecting the target region as the cutting, along with $O(1)$ extra planes to triangulate each piece. All the planes containing the event polygons intersect $- P$, so we can run \textbf{PlaneDecision} on them. Instead of triangulating our target region, it suffices to reduce it to constant complexity. We do this by cutting it with $O(1)$ horizontal planes such that the remaining region only has vertices on two levels. Then, let $e$ be any non-horizontal edge. With $O(1)$ planes through $e$, we can cut the target region into prisms and pyramids with triangular bases. These planes all intersect $- P$ since they are between the two faces of the target region containing $e$, and the planes containing them intersect $- P$. 

    Therefore, with slight modifications to \Cref{thm_algo}, we obtain a deterministic $O(n \log^2 n)$ algorithm for minimizing $h(\varphi)$.
\end{proof}

\Cref{thm_symmetric_difference} follows as a direct corollary of \Cref{thm_general_symm_diff}.

\printbibliography
\end{document}